\newcommand{\ignore}[1]{}
\newenvironment{proof}{\noindent {\bf Proof:}}{\hfill$\Box$}
\newtheorem{theorem}{Theorem}[section]
\newtheorem{lemma}[theorem]{Lemma}
\newtheorem{corollary}[theorem]{Corollary}
\newtheorem{definition}[theorem]{Definition}
\newtheorem{remark}[theorem]{Remark}
\newtheorem{invariant}[theorem]{Invariant}
\newcommand{\hcm}[1][1]{\hspace*{#1 cm}}
\newcommand{\istrut}[2][0]{\rule[- #1 mm]{0mm}{#1 mm}\rule{0mm}{#2 mm}}
\newcommand{\rb}[2]{\raisebox{#1 mm}[0mm][0mm]{#2}}
\newcommand{\zero}[1]{\makebox[0mm][l]{$#1$}}
\newcommand{\E}{\operatorname{E}}
\newcommand{\bydef}{\stackrel{\rm def}{=}}
\newcommand{\paren}[1]{\mathopen{}\left( #1 \right)\mathclose{}}
\newcommand{\ceil}[1]{\left\lceil #1 \right\rceil}
\newcommand{\floor}[1]{\lfloor #1 \rfloor}
\newcommand{\f}[2]{\frac{#1}{#2}}
\newcommand{\fr}[2]{\mbox{$\frac{#1}{#2}$}}
\newcommand{\poly}{{\mathrm{poly}}}
\newcommand{\indeg}{\operatorname{indeg}}
\newcommand{\outdeg}{\operatorname{outdeg}}
\newcommand{\dist}{\operatorname{dist}}
\DeclareMathOperator*{\argmax}{arg\,max}
\newcommand{\Lovasz}{Lov\'{a}sz}
\newcommand{\Hanckowiak}{Ha\'{n}\'{c}kowiak}
\newcommand{\Karonski}{Karo\'{n}ski}
\newcommand{\High}{\mathcal{H}}
\newcommand{\I}{\mathcal{J}}
\newcommand{\Shell}{\mathcal{S}}
\newcommand{\hi}{\operatorname{hi}}
\newcommand{\lo}{\operatorname{lo}}
\newcommand{\degtwo}{\deg^{(2)}}
\newcommand{\bottom}{\perp}
\newcommand{\Prop}{\operatorname{prop}}
\newcommand{\LOCAL}{\mathsf{LOCAL}}
\newcommand{\CONGEST}{\mathsf{CONGEST}}
\newcommand{\Match}{\mathsf{Match}}
\newcommand{\MaximalMatch}{\mathsf{MaximalMatching}}
\newcommand{\IndependentSet}{\mathsf{IndependentSet}}
\newcommand{\MIS}{\mathsf{MIS}}
\newcommand{\TreeIndependentSet}{\mathsf{TreeIndependentSet}}
\newcommand{\TreeMIS}{\mathsf{TreeMIS}}
\newcommand{\Cluster}{\operatorname{Cluster}}
\newcommand{\ID}{\operatorname{ID}}
\newcommand{\Color}{\operatorname{Color}}
\newcommand{\OneShotColoring}{\mathsf{OneShotColoring}}
\newcommand{\degColoring}{\mbox{$(\deg+1)$-$\mathsf{Coloring}$}}
\newcommand{\Sparsify}{\mathsf{Sparsify}}
\newcommand{\RulingSet}{\mathsf{RulingSet}}
\newcommand{\Happy}{\mathscr{H}}
\newcommand{\VIB}{V_{IB}}
\newcommand{\GIB}{G_{IB}}
\newcommand{\hGammaIB}{\hat{\Gamma}_{IB}}
\newcommand{\GammaIB}{\Gamma_{IB}}
\newcommand{\degIB}{\deg_{IB}}
\newcommand{\DET}{{\sc Det.}}
\newcommand{\MaximalIndependentSet}{{\sc Maximal Independent Set}}
\newcommand{\MaximalMatching}{{\sc Maximal Matching}}
\newcommand{\RulingSetProb}{{\sc Ruling Set}}
\newcommand{\Coloring}{{\sc Coloring}}
\begin{document}


\title{The Locality of Distributed Symmetry Breaking\thanks{
A preliminary version of this paper appeared in the 
Proceedings of the 53rd IEEE Symposium on Foundations of Computer Science (FOCS), 2012.
This work is supported by NSF grants CCF-0746673, CCF-1217338, and CNS-1318294,
US-Israel Binational Science Foundation grant 2008390, and Israeli Academy of Science grant 593/11.
This research was partly performed while S. Pettie was on sabbatical at the 
Center for Massive Data Algorithmics (MADALGO), Aarhus University,
which is supported by Danish National Research Foundation grant DNRF84.
Author's addresses: Leonid Barenboim, 
Department of Mathematics and Computer Science, The Open University of Israel;
Michael Elkin, Department of Computer Science, Ben-Gurion University; 
Seth Pettie (corresponding author), Department of Electrical Engineering and Computer Science, 
University of Michigan;
Johannes Schneider, ABB Research, Z\"{u}rich.
}}

\author{Leonid Barenboim \and Michael Elkin \and Seth Pettie \and Johannes Schneider}

\maketitle

\begin{abstract}
Symmetry breaking problems are among the most well studied in the field of distributed computing
and yet the most fundamental questions about their complexity remain open.  In this paper we work in the 
$\LOCAL$ model (where the input graph and underlying distributed network are identical) and study
the {\em randomized} complexity of four fundamental symmetry breaking problems on graphs: computing MISs (maximal independent sets), 
maximal matchings,
vertex colorings, and ruling sets.  A small sample of our results includes
\begin{itemize}
\item An MIS algorithm running in $O(\log^2\Delta + 2^{O(\sqrt{\log\log n})})$ time, where $\Delta$ is the maximum degree.
This is the first MIS algorithm to improve on the 1986 algorithms of Luby and Alon, Babai, and Itai, when $\log n \ll \Delta \ll 2^{\sqrt{\log n}}$,
and comes close to the $\Omega(\log \Delta)$ lower bound of Kuhn, Moscibroda, and Wattenhofer.

\item A maximal matching algorithm running in $O(\log\Delta + \log^4\log n)$ time.  This is the first significant 
improvement to the 1986 algorithm of Israeli and Itai.
Moreover, its dependence on $\Delta$ is {\em provably optimal}.

\item A $(\Delta+1)$-coloring algorithm requiring $O(\log\Delta + 2^{O(\sqrt{\log\log n})})$ time, improving on an
$O(\log\Delta + \sqrt{\log n})$-time algorithm of Schneider and Wattenhofer.

\item A method for reducing symmetry breaking problems in low arboricity/degeneracy graphs to low degree graphs.
(Roughly speaking, the arboricity or degeneracy of a graph bounds the density of any subgraph.)
Corollaries of this reduction include an $O(\sqrt{\log n})$-time maximal matching algorithm for graphs with arboricity up to $2^{\sqrt{\log n}}$
and an $O(\log^{2/3} n)$-time MIS algorithm for graphs with arboricity up to $2^{(\log n)^{1/3}}$.
\end{itemize}

Each of our algorithms is based on a simple, but powerful technique for reducing 
a {\em randomized} symmetry breaking task to a corresponding {\em deterministic} one on a $\poly(\log n)$-size graph.
\end{abstract}








\section{Introduction}

Breaking symmetry is one of the central themes in the theory of distributed computing.  
At initialization the nodes of a distributed system
are assumed to be in the same state, possibly with distinct node IDs, yet to perform any computation the nodes 
frequently must take different roles.  That is, they must somehow break their initial symmetry.
In this paper we study several of the most fundamental symmetry breaking tasks in the 
$\LOCAL$ model~\cite{Linial92}: 
computing {\em maximal independent sets} (MIS), {\em maximal matchings}, {\em ruling sets},
and {\em vertex colorings}.  
These problems are defined below.
In the $\LOCAL$ model each node of the input graph $G$ hosts a processor, 
which is only aware of its neighbors and upper bounds on various graph parameters such as $n$ and $\Delta$, 
which are
the number of nodes and maximum degree, respectively.\footnote{This assumption can sometimes be removed. Korman, Sereni, and Viennot~\cite{KormanSV13} presented a method to convert non-uniform distributed algorithms (which know $n,\Delta,$ and possibly other parameters) into uniform distributed algorithms.}
The computation proceeds in synchronized rounds in which each processor sends one unbounded message along each edge.
{\em Time} is measured by the number of rounds; local computation is free.  
At the end of the computation each node must report its portion of the output, that is,
whether it is in the MIS or ruling set, which incident edge is part of the matching, or its assigned color.
This model should be contrasted with $\CONGEST$, which is identical to $\LOCAL$ 
except messages consist of $O(1)$ words, that is, $O(\log n)$ bits.  
Refer to Peleg~\cite[Ch.~1-2]{Peleg00} for a discussion of distributed models.
None of our algorithms seriously abuse the power of the $\LOCAL$ model.
Our message size and local computation are always 
$O(\poly(\Delta)\log n)$, usually $O(\poly(\log n))$, and in several cases $O(1)$.

Let us define the four problems formally.

\begin{description}
\item[\MaximalIndependentSet] Given $G=(V,E)$, find any set $I\subseteq V$ such that no two nodes in $I$ are adjacent and $I$ is maximal with respect to inclusion.  (That is, every $v\not\in I$ is adjacent to some member of $I$.)

\item[$(\alpha,\beta)$-\RulingSetProb] Given $G(V,E)$, find any $R\subset V$ such that for 
every $u\in V$, $\dist(u,R) \le \beta$ and for every $u\in R$, $\dist(u,R\backslash \{u\}) \ge \alpha$.
Note that $(2,1)$-ruling sets are maximal independent sets.  (Here $\dist(u,X)$ is the length of a shortest path from $u$ to any member of $X$.)

\item[\MaximalMatching] Given $G=(V,E)$, find any matching $M \subseteq E$ (consisting of node-disjoint edges) that is maximal with respect to inclusion.

\item[$K$-\Coloring] Given $G=(V,E)$, find a proper coloring $\Color \;:\; V\rightarrow \{1,\ldots,K\}$, that is, one for which $(u,v)\in E$ implies $\Color(u) \neq \Color(v)$.  We are mainly interested in $(\Delta+1)$-colorings, whose existence is trivially guaranteed.
\end{description}

We study the complexities of these problems on general graphs, as well as graphs with a specified {\em arboricity} $\lambda$.
By definition $\lambda(G)$ is the minimum number of edge-disjoint forests that cover $E$, 
which is roughly the maximum density of any subgraph.
We believe arboricity is an important graph parameter as it robustly captures the notion of 
{\em sparsity} without imposing any strict structural constraints, such as planarity or the like.
We always have $\lambda\le \Delta$, but in general $\lambda$ could be significantly smaller than $\Delta$.
Most sparse graph classes, for example, have $\lambda=O(1)$ though their maximum degree is unbounded.
These include planar graphs $(\lambda=3)$, graphs avoiding a fixed minor, bounded genus graphs, 
and graphs of bounded treewidth or pathwidth.  However, none of our algorithms actually 
depend on having $\lambda=O(1)$.

\subsection{The State of the Art in Distributed Symmetry Breaking}\label{sect:state-of-the-art}

The reader will soon notice two striking features of prior research on distributed symmetry breaking:
the wide gulf between the efficiency of deterministic and randomized algorithms
and the paltry number of algorithms that are {\em provably} optimal.
It is typical to see randomized algorithms that are {\em exponentially} faster (in terms of $n$ or $\Delta$) 
than their deterministic counterparts, and they are usually simpler to analyze and simpler to implement.
Very few problems can be solved in $O(1)$ time, independent of $\Delta$ and $n$.
The $\omega(1)$ lower bounds of Linial~\cite{Linial92}
and Kuhn, Moscibroda, and Wattenhofer~\cite{KuhnMW10}
are known to be tight in only a few cases, typically on very special classes of graphs.

We survey lower bounds and algorithms for each of the symmetry breaking problems below.
Tables~\ref{table:MIS}--\ref{table:rulingset} provide an at-a-glance history of the problems.
In the tables, deterministic algorithms are indicated by \DET{}  
All other algorithms are randomized, 
which return a correct answer with high probability.\footnote{An event occurs {\em with high probability} 
if its probability is at least $1 - n^{-c}$ for an arbitrarily large $c$,
where $c$ may influence other constants, for example, those hidden in asymptotic running times.}

\paragraph{Lower Bounds}
Linial~\cite{Linial92} proved that $\log^{(k)} n$-coloring the $n$-cycle takes $\Omega(k)$ time,
and therefore that $O(1)$-coloring the $n$-cycle takes $\Omega(\log^* n)$ time.
On the $n$-cycle, MIS, maximal matching, and ruling sets are equivalent to $O(1)$-coloring, 
so Linial's lower bound applies to these problems as well.
Kuhn, Moscibroda, and Wattenhofer~\cite{KuhnMW10} (henceforth, {\em KMW}) proved that $O(1)$-approximate
minimum vertex cover (MVC) takes $\Omega(\min\{\sqrt{\log n},\log\Delta\})$ time.
Since $2$-approximate MVC is reducible to maximal matching and maximal matching is reducible
to MIS (on the line graph of the original graph), the KMW lower bound implies $\Omega(\min\{\sqrt{\log n},\log\Delta\})$ lower bounds on these problems as well.  It does {\em not} apply to coloring problems, nor the $(\alpha,\beta)$-ruling set problem except when $(\alpha,\beta)=(2,1)$.

\paragraph{Deterministic MIS}

The fastest deterministic MIS algorithms for general graphs run in $2^{O(\sqrt{\log n})}$ time~\cite{PanconesiS96} 
and $O(\Delta + \log^* n)$ time~\cite{BarenboimEK14}.
The Panconesi-Srinivasan~\cite{PanconesiS96} result is actually a {\em network decomposition} algorithm, which can be used to solve many symmetry breaking problems in $2^{O(\sqrt{\log n})}$ time.  It improved on an earlier algorithm of Awerbuch et al.~\cite{AwerbuchGLP89} running in $2^{O(\sqrt{\log n\log\log n})}$ time.  
Recent work on deterministic MIS algorithms has focussed on restricted graph classes.  
Schneider and Wattenhofer~\cite{SchneiderW10-J} gave an optimal 
$O(\log^* n)$-time MIS algorithm for growth-bounded graphs.\footnote{A graph class has bounded growth if for each $v\in V$
and radius $r$, the maximum size of an independent set in $v$'s $r$-neighborhood is a constant depending on $r$.  
For example, unit-disc graphs have bounded growth.}
Barenboim and Elkin~\cite{BarenboimE10,BarenboimE13} gave an $O(\lambda\sqrt{\log n}+\log n)$-time MIS algorithm, 
and another that runs in $O(\frac{\log n}{\delta\log\log n})$ when the arboricity is $\lambda = (\log n)^{1/2 - \delta}$.
The subsequent vertex coloring algorithms of Barenboim and Elkin~\cite{BarenboimE11} give, as corollaries, MIS algorithms running
in $O(\lambda + \min\{\lambda^\epsilon\log n, \log^{1+\epsilon} n\})$ time and $O(\lambda^{1+\epsilon} + \log\lambda\log n)$ time, where $\epsilon>0$ influences the leading constants.  

\paragraph{Randomized MIS}
Nearly 30 years ago Luby~\cite{Luby86} and Alon, Babai, and Itai~\cite{ABI86} presented very simple randomized MIS algorithms running in $O(\log n)$ time.  These algorithms are faster than the best deterministic algorithms when $\Delta=\omega(\log n)$ and remain the fastest MIS algorithms for general graphs when running time is expressed solely as a function of $n$.  Lenzen and Wattenhofer~\cite{LenzenW11} showed that in the special case of trees ($\lambda=1$), an MIS can be computed in $O(\sqrt{\log n}\log\log n)$ time with high probability.\footnote{See footnote~\ref{fn:LW}.}

\paragraph{Deterministic Maximal Matching}
Panconesi and Srinivasan's~\cite{PanconesiS96} network decomposition algorithm implies a deterministic 
$2^{O(\sqrt{\log n})}$-time maximal matching algorithm.  This bound was dramatically improved by 
\Hanckowiak, \Karonski, and Panconesi~\cite{HanckowiakKP01} to $O(\log^4 n)$.  When $\Delta = o(\log^4 n)$, maximal matchings can be computed faster, in $O(\Delta + \log^* n)$ time, using the algorithm of Panconesi and Rizzi~\cite{PanconesiR01}.  Barenboim and Elkin~\cite{BarenboimE10,BarenboimE13} gave improved algorithms for low arboricity graphs.  Their algorithms run in $O(\lambda + \log n)$ time, for any $\lambda$, and 
in $O(\frac{\log n}{\delta\log\log n})$ time when $\lambda = \log^{1-\delta} n$.

\paragraph{Randomized Maximal Matching}
Since a maximal matching in $G$ is simply an MIS in the line graph of $G$, the randomized 
MIS algorithms of \cite{Luby86,ABI86} can be used to solve maximal matching in $O(\log n)$ time as well.\footnote{These simulations increase the local computation at each node.}
Israeli and Itai~\cite{II86} presented a direct randomized algorithm for computing maximal matchings in $O(\log n)$ time.
This algorithm is faster than the deterministic algorithms when $\Delta=\omega(\log n)$,
and remains the fastest maximal matching algorithm whose running time is expressed solely as a function of $n$.

\paragraph{Deterministic Vertex Coloring}
The vertex coloring problem allows for a tradeoff between the palette size (number of colors) and running time.
Linial~\cite{Linial92} proved that $O(\Delta^2)$-coloring can be computed deterministically 
in $O(\log^* n)$ time, independent of $\Delta$.
Szegedy and Vishwanathan~\cite{SzegedyV93} later improved the running time of this algorithm to $\frac{1}{2}\log^* n + O(1)$.
The best deterministic $(\Delta+1)$-coloring algorithms run in $2^{O(\sqrt{\log n})}$ time~\cite{PanconesiS96}
or $O(\Delta + \log^* n)$ time~\cite{BarenboimEK14}.  Even if the palette size is enlarged to $O(\Delta)$, the Panconesi-Srinivasan~\cite{PanconesiS96} algorithm remains the fastest, when time is expressed as a function of $n$.
However, Barenboim and Elkin~\cite{BarenboimE11} gave an $O(\min\{\lambda^\epsilon\log n, \lambda^\epsilon + \log^{1+\epsilon} n\})$-time algorithm for $O(\lambda)$-coloring, and an $O(\log\lambda\log n)$-time algorithm for $\lambda^{1+\epsilon}$-coloring.  (The hidden constants are exponential in $1/\epsilon$.)  
Since the arboricity $\lambda$ is at most $\Delta$, one can substitute $\Delta$ for $\lambda$ in the bounds cited above.

\paragraph{Randomized Vertex Coloring}
As usual, significantly faster coloring algorithms can be obtained using randomization.  Luby~\cite{Luby86} gave a reduction from $(\Delta+1)$-coloring to MIS, which implies an $O(\log n)$ time randomized algorithm.  A direct $O(\log n)$-time $(\Delta+1)$-coloring algorithm was analyzed by Johansson~\cite{Johansson99}.  By enlarging the palette, vertex coloring can be solved dramatically faster.  
Kothapalli et al.~\cite{KothapalliSOS06} showed that $O(\sqrt{\log n})$ time suffices for computing an $O(\Delta)$-coloring, 
for any $\Delta$.  Schneider and Wattenhofer~\cite{SchneiderW10} gave an $O(\log\Delta + \sqrt{\log n})$-time $(\Delta+1)$-coloring algorithm, for any $\Delta$, and several faster $O(\Delta)$-coloring algorithms when $\Delta$ is sufficiently large.  
For example, when $\Delta=\Omega(\log n)$, $O(\Delta)$-coloring can be computed in $O(\log\log n)$ time and
when $\Delta=\Omega(\log^{1+1/\log^* n} n)$, $O(\Delta)$-coloring can be computed in $O(\log^* n)$ time.
Kuhn and Wattenhofer~\cite{KuhnW06} showed that $O(\Delta\log n\log^{(k)} n)$-coloring is computable in $O(k)$ time and in particular, an $O(\Delta\log^2 n)$-coloring could be computed in a single round.

\paragraph{Ruling Sets}
As noted earlier, an MIS is a $(2,1)$-ruling set.  More generally, an $(\alpha,(\alpha-1)\beta)$-ruling set can be found 
by computing a $(2,\beta)$-ruling set in the graph $G^{[1,\alpha-1]}$,  whose edge set consists of pairs $(u,v)$ 
for which $\dist_G(u,v) \in [1,\alpha-1]$.  (See Section~\ref{sec:prel} for details of graph notation.)   A distributed algorithm in $G^{[1,\alpha-1]}$ can be simulated in $G$ with an $(\alpha-1)$-factor slowdown.
This reduction changes various graph parameters so it is not always applicable.
For example, $\Delta(G^{[1,\alpha-1]})$ is roughly $(\Delta(G))^{\alpha-1}$  and $\lambda(G^{[1,\alpha-1]})$ cannot be bounded
as a function of $\lambda(G)$.

Awerbuch, Goldberg, Luby, and Plotkin~\cite{AwerbuchGLP89} gave a deterministic $(2,\log n)$-ruling set algorithm running in $O(\log n)$ time.  Schneider, Elkin, and Wattenhofer~\cite{SchneiderEW13} recently discovered a $(2,\beta)$-ruling set algorithm running in $O(\beta\Delta^{2/\beta} + \log^* n)$ time, for any integer parameter $\beta$, 
and another $(2,\beta\Delta^{1/\beta})$ ruling set algorithm running in $O(\beta + \log^* n)$ time.

These are the only deterministic ruling set algorithms.  Using randomization, Gfeller and Vicari~\cite{GfellerV07} showed that a $(1,O(\log\log\Delta))$-ruling set could be computed such that the maximum degree in the graph induced by the ruling set is $O(\log^5 n)$.
Schneider and Wattenhofer~\cite{SchneiderW10} gave a randomized algorithm for computing a $(2,\beta)$-ruling set in 
$O(2^{\beta/2}\log^{2/(\beta-1)} n)$ time.  This bound was improved by Bisht, Kothapalli, and Pemmaraju~\cite{BishtKP14}
to $O(\beta\log^{1/(\beta-1)} \Delta + 2^{O(\sqrt{\log\log n})})$ time.  In earlier work, Kothapalli and Pemmaraju~\cite{KothapalliP12} gave a randomized $(2,2)$-ruling set algorithm running in $O(\log^{1/2}\Delta\cdot \log^{1/4} n)$ time and a randomized $(2,3)$-ruling set algorithm running in $\poly(\log\log n)$ time for graphs with arboricity $\lambda=O(1)$.
\begin{table*}
\centering
\begin{tabular}{|l|l|l@{\istrut[2.5]{4.5}}|}
\multicolumn{3}{c}{\sc Maximal Independent Set}\\ 
\multicolumn{3}{c}{\ }\\ 
\multicolumn{1}{l}{\sc Citation} & \multicolumn{1}{l}{\sc Running Time} & \multicolumn{1}{l}{\sc Graphs}\\\cline{1-3}
\small Linial \cite{Linial92}  & $\Omega(\log^* n)$  &  $n$-cycle  \\\cline{1-3}
\small Kuhn, Moscibroda   & \rb{-3}{$\Omega\left(\min\left\{\sqrt {\log n},\; \log\Delta\right\}\right)$} & \rb{-3}{general} \\
\small \ \& Wattenhofer \cite{KuhnMW10}  &  & \\\cline{1-3}
\small Luby~\cite{Luby86}  & \rb{-3}{$\log n$} & \rb{-3}{general} \\
\small  Alon, Babai \& Itai \cite{ABI86} & &\\\cline{1-3}
\small Panconesi \& Srinivasan \cite{PanconesiS96}  & $2^{O(\sqrt{\log n})}$  \hfill \DET & general \\\cline{1-3}
\small Barenboim, Elkin  &  & \\
\small \ \& Kuhn \cite{BarenboimEK14} & \rb{3}{$\Delta + \log^* n$}  \hfill \rb{3}{\DET} & \rb{3}{general} \\\cline{1-3}
											& $\frac{\log n}{\delta\log\log n}$ \hfill \DET & $\lambda = \log^{1/2-\delta} n$   \\\cline{2-3}
\small Barenboim \& Elkin 		 & $\lambda\sqrt{\log n}+\log n$ \hfill \DET &    \\\cline{2-2}
\ \cite{BarenboimE10,BarenboimE11} & $\lambda + \min\{\lambda^\epsilon\log n, \log^{1+\epsilon} n\}$ \hfill\DET & \rb{3}{all $\lambda$,} \\\cline{2-2}
								& $\lambda^{1+\epsilon} + \log\lambda\log n$ \hfill\DET & \rb{3}{fixed $\epsilon>0$}\\\cline{1-3}
\small Schneider  & &   \\
\small \ \& Wattenhofer \cite{SchneiderW10-J}  & \rb{3}{$\log^* n$} \hfill \rb{3}{\DET} &  \rb{3}{bounded growth} \ \\\cline{1-3}
\small Lenzen \& Wattenhofer \cite{LenzenW11}  & $\sqrt{\log n} \log \log n$ & trees ($\lambda=1$)\\\cline{1-3}
									& $\log^2\Delta + 2^{O(\sqrt{\log\log n})}$ & general \\\cline{2-3}
									& $\log^2\Delta + \f{\log\log n}{\delta\log\log\log n}$	& $\lambda = \log^{1/2-\delta} \log n\;$\\\cline{2-3}
									&  $\log^2 \lambda + \log^{2/3} n$ & all $\lambda$\\\cline{2-3}
									& $\log^2 \Delta + \lambda^{1+\epsilon} + \log\lambda\log\log n$	& \rb{-3}{all $\lambda$,}\\\cline{2-2}
\rb{0}{\small \bf{This paper}} 				& $\log^2 \Delta + \lambda + \lambda^\epsilon\log\log n$ & \rb{-3}{fixed $\epsilon>0$}\\\cline{2-2}
									& $\log^2 \Delta + \lambda + (\log\log n)^{1+\epsilon}$ 	& \\\cline{2-3}
									& $\sqrt{\log n\log\log n}$	&\\\cline{2-2}
									& $\log\Delta\log\log\Delta + \frac{\log\log n}{\log\log\log n}$   & \rb{3}{trees ($\lambda=1$)}\\\cline{2-3}
									& $\log\Delta\log\log n + 2^{O(\sqrt{\log\log n})}$ & girth $> 6$\\\cline{1-3}
\end{tabular}
\caption{\label{table:MIS}}
\end{table*}

\begin{table*}
\centering
\begin{tabular}{|l|l|l@{\istrut[2.5]{4.5}}|}
\multicolumn{3}{c}{\sc Maximal Matching}\\ 
\multicolumn{3}{c}{\ }\\ 
\multicolumn{1}{l}{\sc Citation\istrut[2]{0}} & \multicolumn{1}{l}{\sc Running Time} & \multicolumn{1}{l}{\sc Graphs}  \\     
\cline{1-3}
\small Linial \cite{Linial92}  & $\Omega(\log^* n)$ & $n$-cycle \\
\cline{1-3}
\small Kuhn, Moscibroda  & \rb{-3}{$\Omega\left(\min\left\{\sqrt {\log n},\; \log\Delta\right\}\right)$}  & \rb{-3}{general}\\
\small  \ \& Wattenhofer \cite{KuhnMW10}  &  &\\\cline{1-3}
\small Israeli \& Itai \cite{II86}  & $\log n$ & general \\
\cline{1-3}
\small \Hanckowiak, \Karonski   & $\log^4 n$ \hfill \DET & general\\\cline{2-3}
\small \& Panconesi \cite{HanckowiakKP01} & $\log^3 n$ \hfill \DET & bipartite \\\cline{1-3}
\small Panconesi \& Rizzi \cite{PanconesiR01} & $\Delta + \log^* n$ \hfill \DET & general\\
\cline{1-3}
 & $\frac{\log n}{\delta\log\log n}$ \hfill \DET & $\lambda = \log^{1-\delta} n$\\\cline{2-3}
\rb{3}{\small Barenboim \& Elkin \cite{BarenboimE10}}  & $\lambda+\log n$ \hfill \DET & all $\lambda$\\
\cline{1-3}
  & \rb{0}{$\log \Delta+\log^4\log n$}  & \rb{0}{general} \\\cline{2-3}
  & $\log\Delta + \log^3\log n$		& bipartite\\\cline{2-3}
\rb{0}{\small \bf{This paper}}  & $\log \Delta+\f{\log\log n}{\delta \log\log\log n}$ & $\lambda = \log^{1-\delta} \log n\; $ \\\cline{2-3}
 & $\log \lambda + \sqrt{\log n}$  &   \\\cline{2-2}
						& $\log\Delta + \lambda + \log\log n$		& \rb{3}{all $\lambda$}\\\cline{1-3}
\end{tabular}
\caption{\label{table:MM}}
\end{table*}

\begin{table*}
\centering
\begin{tabular}{|l|c|l@{\istrut[2]{4}\hcm[.1]}|}
\multicolumn{3}{c}{\sc Vertex Coloring}\\ 
\multicolumn{3}{c}{\ }\\ 
\multicolumn{1}{l}{\sc Citation}
& \multicolumn{1}{c}{\sc Colors}
& \multicolumn{1}{l}{\sc Running Time}\\\cline{1-3}
\small Linial~\cite{Linial92}							& 	3				& $\Omega(\log^*n)$                       \\\cline{1-1}\cline{3-3}
\small Cole \& Vishkin~\cite{ColeV86}					&	{\small (on the $n$-cycle)}			& $\log^* n$ {\small $+ O(1)$}\hfill \DET	\\\cline{1-3}
\small Luby~\cite{Luby86}								&						& \rb{-3}{$\log n$}	 \\\cline{1-1}
\small Johansson~\cite{Johansson99}					&						& \\\cline{1-1}\cline{3-3}
\small Panconesi \& Srinivasan~\cite{PanconesiS96}			& 						& $2^{O(\sqrt{\log n})}$   \hfill \DET                                       \\\cline{1-1}\cline{3-3}
\small Barenboim, Elkin \& Kuhn~\cite{BarenboimEK14}		&						& $\Delta + \log^* n$ \hfill \DET    \\\cline{1-1}\cline{3-3}
\small Schneider \& Wattenhofer~\cite{SchneiderW10}	&	$\Delta+1$		& $\log\Delta + \sqrt{\log n}$                                 \\\cline{1-1}\cline{3-3}
				& 						& $\log\Delta + 2^{O(\sqrt{\log\log n})}$                  \\\cline{3-3}
				& 						& $\log\Delta + \lambda^{1+\epsilon}+ \log\lambda\log\log n$   \\\cline{3-3}
				& 						& $\log\Delta + \lambda+ \lambda^\epsilon\log\log n$    \\\cline{3-3}
								&						& $\log\Delta + \lambda+ (\log\log n)^{1+\epsilon}$\\\cline{2-3}
\rb{3}{{\bf\small This paper}}				& \rb{-3}{$\Delta + O(\lambda)$} & $\log\Delta + \lambda^\epsilon \log\log n$             \\\cline{3-3}
				&					& $\log\Delta + \lambda^\epsilon + (\log\log n)^{1+\epsilon}$\\\cline{2-3}
				& $\Delta + \lambda^{1+\epsilon}$ & $\log\Delta + \log\lambda\log\log n$             \\\cline{2-2}\cline{3-3}
				& 						& $2^{O(\sqrt{\log\log n})}$					\\\cline{1-1}\cline{3-3}
\small Kothapalli, Scheideler, Onus 	& 	\rb{-3}{$O(\Delta)$}			& \rb{-3}{$\sqrt{\log n}$}                                     \\
\small \ \& Schindelhauer~\cite{KothapalliSOS06}						&						&				 \\\cline{1-1}\cline{3-3}
			&						& $\min\{\Delta^{\epsilon}\log n, \Delta^\epsilon + \log^{1+\epsilon} n\}$   \hfill \DET \\\cline{2-3}
\rb{-3}{\small Barenboim \& Elkin~\cite{BarenboimE11}}		& $O(\lambda)$	& $\min\{\lambda^\epsilon\log n,\lambda^\epsilon + \log^{1+\epsilon} n\}$	\hfill\DET	\\\cline{2-3}
											& $\Delta^{1+\epsilon}$		& $\log\Delta\log n$  \hfill \DET      \\\cline{2-3}
											& $\lambda^{1+\epsilon}$		& $\log\lambda\log n$ \hfill \DET						\\\cline{1-3}
\cline{1-2}\cline{3-3}
\rb{-3}{\small Schneider}							& $O(\Delta + \log n)$					& $\log\log n$	\\\cline{2-3}
\rb{-3}{\small \ \& Wattenhofer~\cite{SchneiderW10}}	& \ \ $\Delta\log^{(k)} n$\hfill { }		& \rb{-3}{$k$}\hfill\rb{-3}{(for $k<\log^* n$)}                  \\
										& \hfill $+ \log^{1+1/k} n$ {\ \ }					&\\\cline{1-3}
\small Kuhn \& Wattenhofer~\cite{KuhnW06}	& $\Delta\log n\log^{(k)} n$\hfill { } 	& $k$\hfill (for $k<\log^* n$) \\\cline{1-3}
\small Linial~\cite{Linial92}					& \rb{-3}{$O(\Delta^2)$}			& $\log^* n$ {\small $+ O(1)$}   \hfill \DET\\\cline{1-1}\cline{3-3}
\small Szegedy \& Vishwanathan \cite{SzegedyV93} &					& $\frac{1}{2}\log^* n$ {\small $+ O(1)$} \hfill \DET\\\cline{1-3}	
\small Barenboim \& Elkin~\cite{BarenboimE10}			& \rb{-3}{$\lambda\cdot n^{1/k}$}	& $\Omega(k)$\\\cline{1-1}\cline{3-3}
\small Kothapalli \& Pemmaraju~\cite{KothapalliP11}		& 							& $k$	\hfill (for $\log\log n < k < \sqrt{\log n}$)\\\cline{1-3}
\end{tabular}
\caption{
\label{table:coloring}
}
\end{table*}

\begin{table*}
\centering
\begin{tabular}{|l|c|l@{\istrut[2]{4}\hcm[.1]}|}
\multicolumn{3}{c}{\sc Ruling Sets}\\ 
\multicolumn{3}{c}{\ }\\ 
\multicolumn{1}{l}{\sc Citation}
& \multicolumn{1}{c}{$(\alpha,\beta)$}
& \multicolumn{1}{l}{\sc Running Time}\\\cline{1-3}
 									& $(2,1)$					&	\mbox{\sc mis} time\\\cline{2-3}
\small trivial						& \rb{-3}{$(\alpha,(\alpha-1)\beta)$}	&	$\alpha\cdot(2,\beta)$-{\sc ruling set} time\\
									&						&	\hfill\small (see text, \S~\ref{sect:state-of-the-art})\\\cline{1-3}
\small Awerbuch, Goldberg,						&								&					\\
\small Luby \& Plotkin~\cite{AwerbuchGLP89}		& \rb{3}{$(2,\log n)$}			&	\rb{3}{$\log n$}\hfill \rb{3}{\DET}		\\\cline{1-3}
\small Gfeller \& Vicari~\cite{GfellerV07}				& $(1,O(\log\log \Delta))$			&	$\log\log \Delta$ \hfill {\small (see text, \S~\ref{sect:state-of-the-art})}\\\cline{1-3}
\small Schneider \& Wattenhofer~\cite{SchneiderW10}	& $(2,\beta)$						&	$2^{\beta/2} \log^{\frac{2}{\beta-1}} n$\\\cline{1-3}
											& $(2,2)$								&	$(\log^{1/2}\Delta)(\log^{1/4} n)$\\\cline{2-3}
\small Kothapalli \& Pemmaraju~\cite{KothapalliP12}	& $(2,3)$ \ \ {\small ($\lambda=1$)} 					&	$(\log\log n)^2\log\log\log n$\\\cline{2-3}
										& $(2,3)$ \ \ {\small ($\lambda=O(1)$)}			&	$(\log\log n)^3$\\\cline{1-3}
\rb{-3}{\small Schneider, Elkin \& Wattenhofer~\cite{SchneiderEW13}}		& $(2,\beta\Delta^{1/\beta})$		& $\beta+\log^* n$\hfill\DET\\\cline{2-3}
									& $(2,\beta)$				& $\beta\Delta^{2/\beta} + \log^*n$\hfill\DET\\\cline{1-3}
\small Schneider, Elkin \& Wattenhofer~\cite{SchneiderEW13}	&						&\\
\small \ \ + Gfeller \& Vicari~\cite{GfellerV07}						& \rb{3}{$(2,O(\log\log n))$}	&	\rb{3}{$\log\log n$}\\\cline{1-3}							
\small Barenboim \& Elkin~\cite{BarenboimE10}  & \rb{-3}{$(2,\log\lambda+\sqrt{\log n})$}	& \rb{-3}{$\log\lambda + \sqrt{\log n}$}\hfill \rb{-3}{\DET}\\
\small \ \ + Awerbuch et al.~\cite{AwerbuchGLP89} &&\\\cline{1-3}
\small Bisht, Kothap. \& Pemmaraju~\cite{BishtKP14}		& $(2,\beta)$				&	$\beta \log^{\frac{1}{\beta-1}} \Delta + 2^{O(\sqrt{\log\log n})}$\\\cline{1-3}
{\bf\small This paper}						& $(2,\beta)$						&	$\beta \log^{\frac{1}{\beta-1/2}} \Delta + 2^{O(\sqrt{\log\log n})}$\\\cline{1-3}
\end{tabular}
\caption{\label{table:rulingset}}
\end{table*}

\subsection{The Union Bound Barrier}\label{sect:union-bound}

Our algorithms confront a fundamental barrier in randomized distributed algorithms 
we call the {\em union bound barrier}, which, 
to our knowledge, has never been explicitly discussed. 

Consider a generic symmetry breaking algorithm that works as follows.  The nodes execute
some number of iterations of an 
$O(1)$-time randomized experiment, 
the purpose of which is to commit to some fragment of the output.
That is, some nodes are committed to the MIS or ruling set, 
some edges are committed to the matching, some nodes commit to a color, etc. 

The experiment {\em fails} at each node $v$ with probability $1-\Omega(1)$.
For example, failure may be defined as the event that no edge incident to $v$ joins the matching.
The failure events are not independent in general,
but are independent for sufficiently distant nodes.  If the random experiment takes $t$ time steps, nodes at distance at least $2t+1$ are influenced by disjoint sets of nodes.
Although each node succeeds after $\Theta(1)$ time in expectation,
the union bound only lets us claim that a {\em global} solution is reached with probability $1-n^{-\Omega(1)}$ after $\Theta(\log n)$ time.  Symmetry breaking algorithms based on a random experiment with failure probability $p$ seem intrinsically incapable of running in $o(\log_{1/p} n)$ time.\footnote{Moreover, existing randomized algorithms~\cite{Luby86,ABI86,II86} do not even fit in this framework.  They do {\em not} guarantee each node succeeds with probability $\Omega(1)$, only that an $\Omega(1)$-fraction of the edges are incident to nodes that
succeed with probability $\Omega(1)$.}
However, there are several conceivable strategies one could use to escape this conclusion.  Among them,

\begin{description}
\item[Use no randomness]  Deterministic algorithms have no probability of failure.  
\item[Redefine failure]  If the experiment is kept the same but the notion of failure is relaxed 
such that it only occurs with probability $n^{-\Omega(1)}$, the union bound {\em can} be applied.
\end{description}

We borrow an idea used in early constructive algorithms for the \Lovasz{} Local Lemma~\cite{Beck91,Alon91} and more recently by~\cite{RubinfeldTVX11}, which combines elements from both of the strategies above.

All of our algorithms consists of two discrete phases.
In Phase I we execute $O(\log\Delta)$ or $\poly(\log\Delta)$ iterations (rather than $\Theta(\log n)$) of an experiment whose {\em local} probability of failure is $1-\Omega(1)$.  Using the fact that failure events are independent for sufficiently distant nodes, we show that {\em every} connected component in the remaining graph\footnote{That is, the portion not dominated by the independent set (in the case of MIS), or not adjacent to a matched edge (in the case of maximal matching), etc.} has size $s = \poly(\log n)$, or in one case $s = \poly(\Delta)\log n$, 
with probability $1 - n^{-\Omega(1)}$.  

In Phase II we revert to the best available {\em deterministic} algorithm and apply it to each connected component, letting it run for time sufficient to solve any instance on $s$ nodes.  (If there is a component with more than $s$ nodes, this is a {\em global} failure, which occurs with probability $n^{-\Omega(1)}$.)

This two-phase structure explains some conspicuous features of our results listed in Tables~\ref{table:MIS}-\ref{table:rulingset}.
The runnings times are always expressed as two (or more) terms, one that usually depends on 
$\log\Delta$ and another that exactly matches the time bound of one of the deterministic algorithms, 
except that it is {\em scaled down exponentially}.  In other words, $2^{\sqrt{\log n}}$ becomes $2^{\sqrt{\log\log n}}$, 
$\frac{\log n}{\log\log n}$ becomes $\frac{\log\log n}{\log\log\log n}$, and so on.  

The two-phase strategy is {\em one} way around the union bound barrier, but is it the only one?  More to the point, is it true that the randomized complexities of certain problems (MIS, maximal matching, etc.) are at least their deterministic complexities on $\poly(\log n)$-size instances?  We have no theorem to this effect, but it is easy to see that it is true for algorithms using a limited number of random bits, as we show below.  
We are not aware of any randomized symmetry breaking algorithms that do {\em not} use limited random bits.

Consider a happy situation where Phase I is completely free, that is, the input graph happens to be the union of $n/\log^\epsilon n$ identical subgraphs of size $\log^\epsilon n$.
These subgraphs are worst-case instances for whatever algorithm is used.
We can assume the algorithm runs in at most $O(\log^\epsilon n)$ time since the diameter of each component 
is at most $\log^\epsilon n$.
If, in each time step, node $v$ generates at most $(\deg(v))^\delta$ random bits, for some $\delta=O(1)$,
a component will generate at most $\log^{\epsilon(\delta+2)} n$ random bits in total.  For $\epsilon < (\delta +2)^{-1}$, {\em every} string of random bits will be generated with high probability, so if 
the algorithm errs on a component with {\em any} non-zero probability it must err on {\em some} component with probability close to 1.  On the other hand, if the algorithm errs with zero probability, 
we might as well commit to the all-zero string of `random' bits and make it deterministic.

\subsection{New Results}

We introduce numerous symmetry breaking algorithms using the two-phase strategy outlined in Section~\ref{sect:union-bound}.  For Phase I we design new iterated randomized experiments and analyze their 
local probability of failure.  After Phase I the connected components in the surviving subgraph have size 
$\poly(\log n)$ or $\poly(\Delta)\log n$ with high probability.
For Phase II we invoke the best available deterministic algorithm, usually applied in a black-box fashion.
For general graphs there always happens to be one best deterministic algorithm.  
However, for low arboricity graphs we have access to several algorithms, 
each of which is asymptotically superior for different values of $\lambda,\Delta,$ and $n$.

For graphs with a large disparity between $\lambda$ and $\Delta$ the method 
described above does not get optimal results.  We give a general randomized reduction showing that MIS and maximal matching are reducible in $O(\log^{1-\gamma} n)$ time to instances with maximum degree $\lambda\cdot 2^{\log^\gamma n}$, for any $\gamma\in (0,1)$.  
This reduction allows us to obtain algorithms whose running time is {\em sub}logarithmic in $n$, 
given algorithms that run in time polylogarithmic in $\Delta$.

We shall now discuss the results claimed in Tables~\ref{table:MIS}-\ref{table:rulingset}.

\paragraph{MIS and Ruling Sets}
Our primary result is a new MIS algorithm running in $O(\log^2 \Delta + 2^{O(\sqrt{\log\log n})})$ time,
which is within a $\log\Delta$ factor of the KMW lower bound.  Moreover, this is the {\em first} improvement to the 1986 algorithms of Luby~\cite{Luby86} and Alon, Babai, and Itai~\cite{ABI86} for such a broad range of degrees: from 
$\Delta = \Omega(\log n)$ to $2^{O(\sqrt{\log n})}$.
The Phase II portion of this algorithm is rather complicated since we cannot afford to apply an existing MIS algorithm in a black box fashion.  After Phase I the surviving components are shown to have size $\poly(\Delta)\log n$.  By invoking the Panconesi-Srinivasan~\cite{PanconesiS96} algorithm on each component, Phase II would run in $2^{O(\sqrt{\log(\poly(\Delta)\log n)})}$ time, which is fine if $\Delta = \poly(\log n)$ but not if $\Delta$ is just slightly super-logarithmic.  We prove that by a certain deterministic clustering procedure, each component can be decomposed into $\log n$ clusters with diameter $O(\log \Delta)$.  A version of the Panconesi-Srinivasan~\cite{PanconesiS96} algorithm can then be simulated on the cluster graph formed by virtually contracting each cluster to a single node.

Using our degree-reduction routine, we can solve MIS on graphs with arboricity $\lambda$ in \\
$O\paren{\log^{1-\gamma} n + \log^2(\lambda\cdot 2^{\log^\gamma n}) + 2^{O(\sqrt{\log\log n})}}$ time, 
which simplifies to $O(\log^2 \lambda + \log^{2/3} n)$ when $\gamma = 1/3$.  
Other MIS algorithms that depend at least linearly on $\lambda$ can be generated 
by invoking one of the MIS algorithms of Barenboim and Elkin~\cite{BarenboimE11}.

Finally, we give an $O\paren{\log\Delta\log\log\Delta + \frac{\log\log n}{\log\log\log n}}$-time algorithm for 
MIS on trees ($\lambda=1$), which, using the degree-reduction routine with $\gamma = 1/2-o(1)$, implies a time bound of 
$O\paren{\sqrt{\log n\log\log n}}$, independent of $\Delta$.\footnote{\label{fn:LW}Lenzen and Wattenhofer~\cite{LenzenW11} claimed 
an MIS algorithm running in 
$O(\sqrt{\log n\log\log n})$ time on trees, but there is a flaw in their analysis. 
We repair this flaw in Section~\ref{sect:TreeMIS}.  
By incorporating Lemma~\ref{lem:TreeMIS-badnode} into the proof of 
\cite[Lemma 4.8]{LenzenW11}, the resulting algorithm would only run in $O(\sqrt{\log n}\log\log n)$ time.}
With minor modifications, this algorithm can be made to work on general 
graphs with girth greater than 6, not just trees.  The {\em girth} of a graph is the length of its shortest cycle.

Bisht, Kothapalli, and Pemmaraju~\cite{BishtKP14} showed how to reduce the 
$(2,\beta)$-ruling set problem on degree-$\Delta$ graphs to an MIS problem on 
graphs with degree much smaller than $\Delta$.  
Using their reduction and our new MIS algorithm, 
we get a  
$(2,\beta)$-ruling set algorithm running in 
$O\paren{\beta \log^{\frac{1}{\beta-1/2}} \Delta + 2^{O(\sqrt{\log\log n})}}$ time.
This result is notable because it establishes a provable gap between the complexity of computing an 
MIS (a $(2,1)$-ruling set) and a $(2,2)$-ruling set.  
By the KMW bound, an MIS cannot be computed in $o(\log\Delta)$ time
whereas $(2,2)$-ruling sets can be computed in $O(\log^{2/3} \Delta + 2^{O(\sqrt{\log\log n})})$ time.\footnote{When time bounds are expressed in terms of $n$ (rather than $\Delta$), our result only 
demonstrates that $(2,3)$-ruling sets are easier to compute than MISs.  They can be computed in
$O(\log^{2/5} \Delta + 2^{O(\sqrt{\log\log n})}) = O(\log^{2/5} n)$ time whereas MISs need $\Omega(\sqrt{\log n})$ time~\cite{KuhnMW10}.}

\paragraph{Maximal Matching}
We give a new maximal matching algorithm running in 
$O(\log\Delta + \log^4\log n)$ time using $O(1)$-size messages, that is, it works in the $\CONGEST$ model.  
In some ways this is our strongest result.  By the KMW bound
its dependence on $\Delta$ is optimal and for 
$\log\Delta \in [\log^4\log n,\, \sqrt{\log n}]$ it cannot be improved asymptotically.
The result is one of only a handful of provably optimal symmetry breaking algorithms for general graphs.\footnote{Other sharp bounds include 
(i) $\Theta(\log^* n)$ time for MIS/maximal matching/$O(\Delta)$-coloring, but only when $\Delta = O(\log^* n)$, 
(ii) computing an MIS in growth-bounded graphs,
in $\Theta(\log^* n)$ time~\cite{SchneiderW10}, 
(iii) $O(\lambda\cdot n^{1/k})$-coloring graphs in $\Theta(k)$ time~\cite{BarenboimE10,KothapalliP12}, for a certain range of $k$, and
(iv) $O(1)$-approximate minimum vertex cover in $\Theta(\log\Delta)$ time~\cite{KuhnMW10}.
With the exception of (iv), these algorithms only apply to narrow classes of graphs.}
Using the degree-reduction routine with $\gamma=1/2$, 
we obtain a maximal matching algorithm running in $O(\log\lambda + \sqrt{\log n})$ time.  
Since the KMW graphs have arboricity $\lambda = 2^{\Theta(\sqrt{\log n})}$, this algorithm is provably optimal for that particular 
arboricity.  Generalizing the KMW lower bound, we prove that {\em even on trees}, maximal matching
requires $\Omega(\sqrt{\log n})$ time.  Thus, our algorithm is optimal for all $\lambda$ from 1 to $2^{O(\sqrt{\log n})}$.
Using the Barenboim-Elkin~\cite{BarenboimE10,BarenboimE13} maximal matching algorithm we obtain more results 
that are superior when $\lambda$ is small and $\log\Delta = o(\sqrt{\log n})$.
For example, when $\lambda=O(1)$, a maximal matching can also be computed
in $O(\log\Delta + \frac{\log\log n}{\log\log\log n})$ time.

\paragraph{Vertex Coloring}
The vertex coloring problem is, in one respect, qualitatively different than maximal matching and MIS.
In Phase II of the MIS and matching algorithms, each connected component forms a (small)
instance of MIS or maximal matching.  However, in our vertex coloring algorithms, at the beginning
of Phase II some nodes have been permanently colored, which affects the palettes of their as-yet
uncolored neighbors. Thus, the connected components of uncolored nodes form instances of
the {\em list-coloring} problem---each vertex may hold a palette of an arbitrary set of allowable colors.  
This distinction sometimes makes no difference.

Our main coloring result is a $(\Delta+1)$-coloring algorithm running in $O(\log\Delta + 2^{O(\sqrt{\log\log n})})$ time,\footnote{The algorithm actually solves the list-coloring problem, 
where a vertex $v$'s  palette contains $\deg(v)+1$ colors.} 
which improves the $O(\log\Delta + \sqrt{\log n})$ bound of 
Schneider and Wattenhofer~\cite{SchneiderW10} 
and implies that $O(\Delta)$-coloring can be computed in $2^{O(\sqrt{\log\log n})}$ time, independent of 
$\Delta$.
The KMW lower bound does not apply to vertex coloring, so we do not know if the dependence on $\Delta$ is optimal.
So long as the Panconesi-Srinivasan algorithm goes unimproved, it will be difficult or impossible
to improve the dependence on $n$.

By invoking the Barenboim-Elkin~\cite{BarenboimE10,BarenboimE11,BarenboimE13} coloring algorithms we obtain numerous results for graphs with small arboricity. 
Since the Barenboim-Elkin algorithms do {\em not} solve the general list-coloring problem, we have to start 
Phase II with a ``fresh'' palette of unused colors.  
This fact leads to $(\Delta + \Omega(\lambda))$-coloring algorithms whose running time is sublinear in $\lambda$,
and $(\Delta+1)$-coloring algorithms whose running time is at least linear in $\lambda$.
Elkin, Pettie, and Su~\cite{ElkinPS15} recently considered randomized distributed algorithms for coloring locally sparse graphs.
One consequence of their results is that $(\Delta+1)$-coloring can be computed in 
$O(\log\lambda) + 2^{O(\sqrt{\log\log n})}$ time for all $\lambda,\Delta,n$, and in $O(\log^* n)$ time for certain ranges of the parameters.

\subsection{Organization}

In Section~\ref{sec:prel} we review some notation for graphs and their parameters,
as well as some useful symmetry breaking primitives due to 
Awerbuch et al.~\cite{AwerbuchGLP89} and Panconesi and Srinivasan~\cite{PanconesiS96}.
Sections~\ref{sect:MIS}--\ref{sect:ruling-sets} are devoted to algorithms for the four symmetry breaking problems on {\em general} graphs.
In Section~\ref{sect:arb} we present a new degree-reduction method (parameterized by the arboricity) 
and derive numerous results for small arboricity graphs.
Section~\ref{sect:TreeMIS} presents a faster algorithm for MIS on trees and graphs of girth greater than 6.
We conclude and discuss some open problems in Section~\ref{sect:conclusion}.

In our analyses we use several standard concentration inequalities due to 
Chernoff, Janson, and Azuma-Hoeffding.  The statements of these theorems can be found in 
Appendix~\ref{sec:concentration-bounds}.  
Refer to Dubhashi and Panconesi~\cite{DubhashiPanconesi09} for derivations of 
these and other concentration bounds.

\section{Preliminaries}\label{sec:prel}

\subsection{Graph Notation}

Let $G=(V,E)$ be the undirected input graph and underlying distributed network.  
Define $\Gamma_H(v), \hat\Gamma_H(v)$, and $\deg_H(v)$ to be the neighborhood, inclusive neighborhood, and degree of $v$ with respect to a graph $H$.  Typically $H$ is an induced subgraph of $G$.  Formally,
\begin{align*}
\Gamma_H(v) &\bydef \{u \:|\: (v,u)\in E(H)\},\\
\hat\Gamma_H(v) &\bydef \{v\}\cup \Gamma_H(v),\\
\mbox{and } \; \deg_H(v) &\bydef |\Gamma_H(v)|.
\end{align*}
For succinctness we sometimes put $U\subseteq V(G)$ or $U\subseteq E(G)$ in the subscript
to refer to the subgraph of $G$ induced by $U$.
The subscript may be omitted altogether if $H=G$.

We assume the nodes know global graph parameters\footnote{This assumption can sometimes be removed.  See~\cite{KormanSV13}.} such as $n \bydef |V(G)|$, 
$\Delta \bydef \max_{v\in V}\deg_G(v)$, and, if applicable, the arboricity $\lambda(G)$.
To simplify calculations we often assume $n,\Delta,$ and $\lambda$ are at least some sufficiently large constant.
The arboricity of a graph $H$ is the minimum number of forests that cover $E(H)$.
By the Nash-Williams~\cite{NW64} theorem, $\lambda(H)$ can also be defined as
\begin{align*}
\lambda(H) &\bydef \max \left\{\left.\ceil{\frac{\left|E(H) \cap {U \choose 2}\right|}{|U| - 1} }  \;\; \right| \;\; U\subseteq V(H) \mbox{ and } |U|\ge 2 \right \},
\intertext{that is, roughly the edge-density of any subgraph of $H$ with at least 2 nodes. 
Other measures of graph sparsity are, for our purposes, 
equivalent to $\lambda$.  For example, the {\em degeneracy} of a graph $H$ is defined to be}
d(H) &\bydef \max_{U\subseteq V(H)} \min_{v\in U} \: \deg_{U}(v).
\end{align*}
It is known that $\lambda(H) \le d(H) \le 2\lambda(H)-1$.

Our matching algorithms internally generate {\em directed} graphs.  In a directed graph $H$, 
the {\em indegree} and {\em outdegree} of $v$ (written $\indeg_H(v)$ and $\outdeg_H(v)$) are the number
of edges oriented towards $v$ and away from $v$, respectively, and $\deg_H(v) \bydef \indeg_H(v) + \outdeg_H(v)$.  
A {\em pseudoforest} is a directed graph in which all nodes have outdegree at most 1.

Let $\dist_H(u,v)$ be the distance (length of the shortest path) between $u$ and $v$ in $H$. 
For any integers $1\le a \le b$, define
\begin{align*}
H^{[a,b]} &\bydef \paren{V(H), \{(u,v) \;|\; \dist_H(u,v) \in [a,b]\}}\\
\mbox{and } \; H^{a} &\bydef H^{[a,a]}.
\end{align*}

In other words, we put edges between pairs whose distance is in the interval $[a,b]$.

\subsection{Decompositions and Ruling Sets}\label{sect:decomp-rulingset}

A network decomposition is a powerful tool
used in symmetry breaking algorithms.  The fastest known deterministic decomposition algorithm
is due to Panconesi and Srinivasan~\cite{PanconesiS96}.  See~\cite{AwerbuchGLP89,LinialS93}
for earlier decomposition algorithms.

\begin{definition} (Network Decompositions)
Let $H$ be an $n$-vertex graph.  A $(d(n),c(n))$-network decomposition is a 
pair $(\mathscr{D},\mathscr{C})$ such that $\mathscr{D}$ is a partition of $V(H)$ into clusters, each with diameter at most $d(n)$,
and $\mathscr{C} \,:\, \mathscr{D} \rightarrow \{1,\ldots,c(n)\}$ is a proper $c(n)$-coloring of the graph derived by contracting the clusters.
More formally, we have
$\mathscr{D} = \{D_i\}$, where $\bigcup_i D_i = V(H)$,  $D_i \cap D_{i'} = \emptyset$ for $i\neq i'$, 
and if $v,v'\in D_i$ then $\dist_{D_i}(v,v')\le d(n)$.  
If there exists $(v,v')\in E(H)$ with $v\in D_i$ and $v'\in D_{i'}$ then $\mathscr{C}(D_i)\neq \mathscr{C}(D_{i'})$.
\end{definition}

\begin{theorem}\label{thm:PS96} (Panconesi and Srinivasan~\cite{PanconesiS96})
A $\paren{2^{O(\sqrt{\log n})}, 2^{O(\sqrt{\log n})}}$-network decomposition can be 
computed deterministically in 
$2^{O(\sqrt{\log n})}$ time.
\end{theorem}

Definition~\ref{def:rulingset} and Theorem~\ref{thm:AGLP89} generalize, 
slightly, Awerbuch et al.'s~\cite{AwerbuchGLP89} original 
definition of a ruling set.

\begin{definition}\label{def:rulingset} (Ruling Sets)
Let $H$ be a graph and $U\subseteq V(H)$.
An $(\alpha,\beta)$-ruling set for $U$ (w.r.t.~$H$) is a node set $R\subseteq U$ such that for each $v\in U$,
$\dist_H(v,R) \le \beta$ and, if $v\in R$, $\dist_H(v,R\backslash\{v\}) \ge \alpha$.  
For example, maximal independent sets are $(2,1)$-ruling sets for $V(H)$ with respect to $H$.
\end{definition}

\begin{theorem}\label{thm:AGLP89} (Awerbuch, Goldberg, Luby, and Plotkin~\cite{AwerbuchGLP89})
Let $H$ be a graph and $U\subseteq V(H)$.
Given a proper $K$-coloring of $H^{[1,\alpha-1]}$, 
an $(\alpha,(\alpha-1)\ceil{\log K})$-ruling set for $U$
can be computed in $(\alpha-1)\ceil{\log K}$ time.
\end{theorem}

\begin{proof}
Let $\chi : V\rightarrow \{1,\ldots,K\}$ be the coloring.  
Recursively, and in parallel, compute two $(\alpha,\alpha(\ceil{\log K} - 1))$-ruling sets $R_0$ and $R_1$
for, respectively,
\begin{align*}
U_0 &= \{v\in U \:|\: \chi(v) \in \{1,\ldots,\floor{K/2}\}\}\\
\mbox{and } \; U_1 &= \{v\in U \:|\: \chi(v) \in \{\floor{K/2}+1,\ldots,K\}\}.
\intertext{After $R_0$ and $R_1$ are computed, return the $(\alpha,(\alpha-1)(\ceil{\log K}))$-ruling set $R$, where}
R &= R_0 \cup \{v \in R_1 \:|\: \dist_H(v,R_0) \ge \alpha\}.
\end{align*}
That is, each $R_0$ node ``knocks out'' all $R_1$ nodes within distance $\alpha-1$.
Once $R_0$ and $R_1$ are computed, in $(\alpha-1)(\ceil{\log K}-1)$ time,
$R$ can be computed in $\alpha-1$ additional time.
\end{proof}

If the nodes of $H$ are endowed with distinct $\beta$-bit IDs, we can use them as a proper $2^{\beta}$-coloring
and compute an $(\alpha,(\alpha-1)\beta)$-ruling set in $O((\alpha-1)\beta)$ time.
(This was Awerbuch et al.'s~\cite{AwerbuchGLP89} original algorithm.)
However, a better bound can be obtained by first computing a good coloring.

\begin{corollary}\label{cor:AGLP}
Let $H$ be a graph with maximum degree $\Delta$ whose nodes are assigned distinct $\beta$-bit IDs.
For any $\alpha\ge 2$ and $U\subseteq V(H)$, an $(\alpha,2(\alpha-1)^2(\log\Delta+O(1)))$-ruling set for $U$ 
with respect to $H$ can be computed in $O(\alpha\log^*\beta + \alpha^2\log\Delta)$ time.
\end{corollary}

\begin{proof}
The graph $H^{[1,\alpha-1]}$ has maximum degree less than $\hat\Delta \bydef \Delta^{\alpha-1}$. 
The first step is to $O(\hat\Delta^2)$-color $H^{[1,\alpha-1]}$ in $O(\alpha\log^*\beta)$ time.
The coloring algorithms of \cite{Linial92,SzegedyV93} take $O(\log^* \beta)$ time steps in $H^{[1,\alpha-1]}$,
each of which can be simulated with $\alpha-1$ time steps in $H$.
By Theorem~\ref{thm:AGLP89}, an $(\alpha,(\alpha-1)\log(O(\hat\Delta^2)))$-ruling set can 
be computed for $U$ in 
$O(\alpha\log(\hat\Delta^2))$ time.  
Note that $(\alpha-1)\log(O(\hat{\Delta^2})) = 2(\alpha-1)^2(\log\Delta+O(1))$.
\end{proof}

\subsection{Miscellany}

In each of our algorithms there is some arbitrary (constant) parameter $c$ that controls the failure probability, 
which is always of the form $n^{-\Omega(c)}$.
All logarithms are base 2 unless specified otherwise.
We make repeated use of the inequality $(1+x) \le e^x$, which holds for all $x$.

\section{A Maximal Independent Set Algorithm}\label{sect:MIS}

In Section~\ref{sect:almost-MIS} we give an $O(\log^2 \Delta)$-time randomized 
algorithm called $\IndependentSet$ that computes a large, but not necessary maximal, independent set.
A new two-phase MIS algorithm is presented in Section~\ref{sect:MIS-algorithm}.
In Phase I it invokes $\IndependentSet$ to find a set $I$ with two properties,
(i) all surviving vertices in $V(G)\backslash \hat\Gamma(I)$ form components with size $\poly(\Delta)\log n$,\footnote{Recall that $\hat\Gamma(I) \bydef I \cup \Gamma(I)$ contains all vertices in or adjacent to $I$.}
and (ii) all $(5,O(\log\Delta))$-ruling sets in each component have size less than $\log n$.
As a consequence of property (i) we can bound the message size by $\poly(\Delta)\log n$.  (In the worst case a message encodes the topology of the entire component.)
Using property (ii) we can extend $I$ to an MIS in $O(\log\Delta\cdot \exp(O(\sqrt{\log\log n})))$ time, 
deterministically.  Phase I succeeds with probability $1-n^{-\Omega(1)}$
and if it does succeed, Phase II succeeds with probability 1.

Refer to Figures~\ref{alg:IndependentSet} and \ref{alg:MIS} for the pseudocode of 
$\IndependentSet$ and $\MIS$.

\subsection{Computing an Almost Maximal Independent Set}\label{sect:almost-MIS}

The $\IndependentSet$ algorithm uses a generalization of Luby's~\cite{Luby86} randomized experiment.
It consists of $\log\Delta$ {\em scales}, each composed of $O(\log\Delta)$ Luby steps.  The purpose
of the $k$th scale is to reduce the maximum degree in the surviving graph to $\Delta/2^k$.
At some nodes this invariant will fail to hold with some non-negligible probability.  We call such nodes
{\em bad} and remove them from consideration.  The components induced by bad nodes are reconsidered
in Phase II of the $\MIS$ algorithm.

\begin{figure}
\centering
\framebox{\hcm[.1]
\begin{minipage}{5.5in}
$\IndependentSet(\mbox{Graph } G)$
\begin{enumerate}
	\item Initialize sets $I,B\subset V(G)$:\\
	$
	\begin{array}{rlll}
		I &\leftarrow \emptyset	&\hcm[.4]& \mbox{\{an independent set\}}\\	
		B &\leftarrow \emptyset	&& \mbox{\{a set of `bad' nodes\}}
	\end{array}
	$
	
	Throughout, let $\VIB \bydef V(G) \backslash (\hat\Gamma(I) \cup B)$ be the nodes still under consideration: those not marked bad and not in or adjacent to the independent set.  
	Let $\GIB$ be the graph induced by $\VIB$ and let $\GammaIB$ and $\degIB$ be the neighborhood and degree functions w.r.t.~$\GIB$.
	\item For each {\em scale} $k$ from 1 to $\log \Delta+1$,
	
	\begin{enumerate}
		\item Execute $c \log\Delta$ iterations of steps i and ii.
		\begin{enumerate}
			\item Each node $v\in \VIB$ chooses a random bit $b(v)$:\\
			\item[]
				$
				b(v) \leftarrow \left\{
				\begin{array}{lll}
					1 & \hcm[.2] & \mbox{with probability $1/(\degIB(v) + 1)$}\istrut[4]{0}\\
					0 && \mbox{with probability $1 - 1/(\degIB(v) + 1)$}
				\end{array}
				\right.
				$
			\item \istrut{5}$I \leftarrow I \cup \{v\in \VIB \;|\; b(v)=1 \,\mbox{ and }\, b(u) = 0 \,\mbox{ for all $u\in \GammaIB(v)$}\}$.\\
				{\em (Add nodes to the independent set.)}
		\end{enumerate}
		\item $B\leftarrow B \cup \{v\in \VIB \;|\; \degIB(v) > \Delta/2^k\}$.\\
			{\em (Mark high-degree nodes as bad.)}
	\end{enumerate}
	\item $B\leftarrow B\backslash \hat\Gamma(I)$.\\
	{\em (Bad nodes adjacent to $I$ no longer need to be considered bad.)}
	
	\item Return $(I, B)$.
\end{enumerate}
\end{minipage}
}
\caption{\label{alg:IndependentSet}}
\end{figure}

\begin{lemma}\label{lem:IndependentSet-prob}
Consider a single iteration of Step 2a (a `Luby step') in $\IndependentSet$.
If $v\in \VIB$ and $\degIB(v) > \Delta/2^k$ before the iteration, 
the probability that $v\in\hat\Gamma(I)$ after the iteration is at least $(1-e^{-1/2})e^{-1}$.
\end{lemma} 

\begin{figure}
\centering
\scalebox{.37}{
\includegraphics{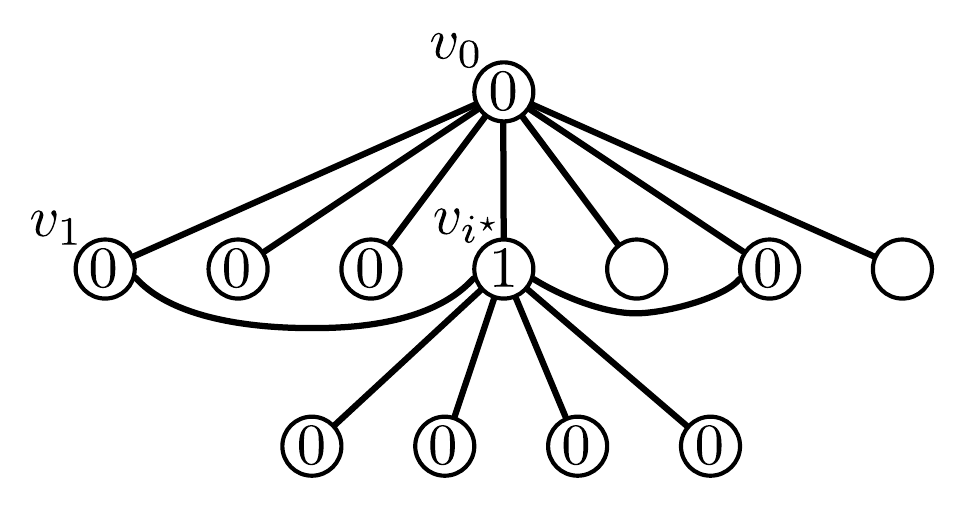}}
\caption{\label{fig:MIS-one-iteration}The node $v_0$ is eliminated if some node in its inclusive neighborhood joins the independent set.  This occurs if some $v_{i^\star}$ chooses $b(v_{i^\star})=1$ and $1 \not \in b(\Gamma(v_{i^\star}))$.}
\end{figure}

\begin{proof}
Let $\hGammaIB(v) = \{ v=v_0,v_1,v_2,\ldots,v_{\degIB(v)}\}$ be the inclusive
neighborhood of $v$.  By assumption $\degIB(v) > \Delta/2^k$ and since $v_1,\ldots,v_{\degIB(v)}$
were not marked bad (placed in $B$) in the last execution of Step 2b, 
$\degIB(v_i) \le \Delta/2^{k-1}$ for each $i\le \degIB(v)$.
Let $i^\star \in \{0,\ldots,\degIB(v)\}$ be the first index for which $b(v_{i^\star}) = 1$.
The probability that $i^\star$ exists is
\[
1 - \prod_{i=0}^{\degIB(v)}  \paren{1-\frac{1}{\degIB(v_i)+1}}
\ge 1 - \paren{1 - \frac{1}{\Delta/2^{k-1}+1}}^{\Delta/2^k+1} > 1-e^{-1/2}.
\]
If $i^\star$ does exist, $v_{i^\star}$ is included in the independent set $I$ if all its neighbors set their $b$-values to zero.
This occurs with probability
\[
\prod_{u \in \GammaIB(v_{i^\star})\backslash\{v_0,\ldots,v_{i^\star-1}\}} \paren{1 - \frac{1}{\deg(u)+1}}
\ge \paren{1-\frac{1}{\Delta/2^{k-1}+1}}^{\Delta/2^{k-1}} > e^{-1}.
\]
Nodes $v_0,\ldots,v_{i^\star-1}$ are excluded from consideration since, by definition of $i^\star$, they have already set their $b$-values to zero.
Thus, after one iteration of Step 2a, $v$ is in $\hat{\Gamma}(I)$ with probability $(1-e^{-1/2})e^{-1} \approx 0.145$.\
See Figure~\ref{fig:MIS-one-iteration} for an illustration.
\end{proof}

\begin{lemma}\label{lem:IndependentSet-t-prob}
Let $U\subset V(G)$ be a node set such that $\dist_G(u,U\backslash\{u\})\ge 5$ for each $u\in U$.
The probability that $U\subseteq B$ after a call to $\IndependentSet(G)$
is less than $\Delta^{-c|U|/5}$.
\end{lemma}

\begin{proof}
The event that a node $v\in \VIB$ 
appears in $\hat\Gamma(I)$ after one iteration of Step 2a
depends only on the random bits chosen by $v$'s neighbors and neighbors' neighbors.
Since all nodes in $U$ are mutually at distance at least five,
in each iteration the events that they appear in $\hat\Gamma(I)$ are independent.
Call a node $v\in \VIB$ {\em vulnerable} in a particular iteration of Step 2a
if $\degIB(v) > \Delta/2^k$.  We cannot say for certain when a node will be vulnerable,
but eventually each must, for some $k$, be vulnerable throughout scale $k$, 
until it appears in $\hat\Gamma(I)$ or is placed in $B$ at the end of the scale.
By Lemma~\ref{lem:IndependentSet-prob} the probability that an individual node
ends up in $B$ is at most $p^{c\log\Delta}$, where 
$p=1 - (1-e^{-1/2})e^{-1} \approx 0.855$.
Since $\log p < -0.22$, $p^{c\log\Delta} = \Delta^{c\log p} < \Delta^{-c/5}$.
Since outcomes for $U$-nodes are independent in any iteration of Step 2a, 
the probability that every node in $U$ ends up in $B$ is at most $\Delta^{-c|U|/5}$.
\end{proof}

\begin{lemma}\label{lem:IndependentSet-properties}
Let $(I,B)$ be the pair returned by $\IndependentSet(G)$. For $t= \log_\Delta n$,
$(I,B)$ satisfies the following properties with probability $1-n^{-c/5 + 11}$.
\begin{enumerate}[leftmargin=1cm]
\item There does not exist any $U\subset \VIB$ with $|U| = t$ such that for any $U' \subset U$, $\dist_G(U',U\backslash U') \in [5,9]$.\label{part:IndependentSet-props:one}
\item All components in the graph induced by $\VIB$ have fewer than $t\Delta^4$ nodes.\label{part:IndependentSet-props:two}
\end{enumerate}
\end{lemma}

\begin{proof}
A set $U\subset V$ satisfying the criteria of Part (1) forms a $t$-node tree in the graph $G^{[5,9]}$.  
(This tree is not necessarily unique.)
The number of rooted unlabeled $t$-node trees is less than $4^t$ since the Euler tour of such a tree can be encoded as a bit-vector with length $2t$.
The number of ways to embed such a tree in $G^{[5,9]}$ is less than $n\cdot \Delta^{9(t-1)}$ : there are $n$ choices for the root
and less than $\Delta^9$ choices for each subsequent node.  By Lemma~\ref{lem:IndependentSet-t-prob} the probability that
$U\subseteq B$ is less than $\Delta^{-ct/5}$.  
By a union bound, the probability that any such $U$ is contained in $B$ is less than
\begin{align*}
4^t \cdot n \cdot \Delta^{9(t-1)} \cdot \Delta^{-ct/5} 
\,<\, n^{\log_\Delta 4 + 10 - c/5} \,<\, n^{-c/5 + 11}.
\end{align*}

Turning to Part (2), suppose there is such a connected component $C$ with $t\Delta^4$ nodes.
We can find a subset $U$ of the nodes satisfying the criteria of Part (1) by the following greedy 
procedure.
Choose an arbitrary initial node $v_1\in C$ and set $U\leftarrow\{v_1\}$.  
Iteratively select a $v_i \in C\backslash U$ 
for which $\dist_G(v_i, U) = 5$, set $U\leftarrow U\cup\{v_i\}$, and 
then remove from consideration all nodes within distance 4 of $v_i$.
The number removed is less than $\Delta^4$, 
hence $U$ has size at least $(t\Delta^4)/\Delta^4 = t$.  
\end{proof}

\subsection{The MIS Algorithm}\label{sect:MIS-algorithm}

The pseudocode for $\MIS$ appears in Figure~\ref{alg:MIS}.  We walk through each step of the algorithm below.
Recall that $\IndependentSet(G)$ returns an independent set $I$ and set of `bad' nodes $B$.

\begin{figure}
\centering
\framebox{\hcm[.1]
\begin{minipage}{5.2in}
$\MIS(\mbox{Graph } G)$
\begin{enumerate}
	\item[] {\bf Phase I:}
	\item $(I,B) \leftarrow \IndependentSet(G)$.\label{Step-one}%
	
	\item[] The following steps focus on a single connected component $C$ in $\GIB$.  They are executed in parallel for each such $C$.

	\item $(I_C,B_C) \leftarrow \IndependentSet(C)$.\label{Step-two}
	\item[] {\bf Phase II:}
	\item $R_C \leftarrow $ a $(5,32\log\Delta+O(1))$-ruling set for $B_C = V(C)\backslash \hat\Gamma(I_C)$ w.r.t.~$C$.\label{Step-three}
	\item Form a cluster around each node $x\in R_C$ and form the cluster graph $C^\star$.\label{Step-four}
	
	\begin{align*}
	\Cluster(x) &\leftarrow \left\{v \in B_C \;\left|\; 
		\begin{array}{ll}
		\mbox{for any other $x'\in R_C$, $\dist_C(v,x) < \dist_C(v,x')$\istrut[3]{5}}\\
		\mbox{or $\dist_C(v,x) = \dist_C(v,x')$ and $\ID(x) < \ID(x')$\istrut[2]{0}}
		\end{array}
		\right.
		\right\}
	\\&\\
	C^\star &\leftarrow \paren{R_C, \left\{(x,x') \;\left|\;  
		\begin{array}{l}
		\mbox{there exists $(v,v')\in E(C)$ such that}\\
		\mbox{$v\in\Cluster(x)$ and $v'\in\Cluster(x')$}
		\end{array}
		\right.
		\right\}
		}
	\end{align*}

	\item $(\mathscr{D},\mathscr{C}) \leftarrow $ a $ \paren{2^{O(\sqrt{\log \log n})},2^{O(\sqrt{\log \log n})}}$-network\ decomposition of $C^\star$.\label{Step-five}
	\item Compute the clustering of $V(C)$ defined by $\mathscr{D}$. For each $D\in \mathscr{D}$,\label{Step-six}
	\begin{align*}
	\Cluster^\star(D) \leftarrow \bigcup_{x\in D} \Cluster(x).
	\end{align*}
	\item For each color $k \in \left\{1,\ldots,2^{O(\sqrt{\log\log n})}\right\}$,\label{Step-seven}
		\begin{enumerate}
		\item For each cluster $D\in \mathscr{D}$ with $\mathscr{C}(D) = k$, in parallel,
		\item[] \hcm[.5]$J_D \leftarrow $ an MIS of the graph induced by $\Cluster^\star(D) \backslash \hat\Gamma(I_C)$.
		\item $\displaystyle I_C \leftarrow I_C \cup \bigcup_{\substack{D\in\mathscr{D} : \\\mathscr{C}(D) = k}} J_D$.
		\end{enumerate}
	\item Return $\displaystyle I \cup \bigcup_{C \;\mathrm{ in }\; \GIB} I_C$.\label{Step-eight}
\end{enumerate}
\end{minipage}
}
\caption{\label{alg:MIS}}
\end{figure}

\paragraph{Step~\ref{Step-one}} After Step~\ref{Step-one} we have an independent set $I$ and a set of bad nodes
$B = \VIB = V(G) \backslash \hat\Gamma(I)$.  By Lemma~\ref{lem:IndependentSet-properties}(\ref{part:IndependentSet-props:two}), with high probability
each connected 
component in $\GIB$ has at most $t \cdot \Delta^4$ nodes and therefore at most $t\cdot \Delta^5/2$ edges, 
where $t = \log_\Delta n$.
Step~\ref{Step-one} (and Step~\ref{Step-two}) require only 1-bit messages since each node only has to notify its neighbors about its status (whether in $I$ or not, whether in $\VIB$ or not) and the $b$-values it selects in each round.
Since the remaining steps operate on each component in $\GIB$ independently, messages of size $O(\Delta^5\log_\Delta n)$ suffice.

\paragraph{Step~\ref{Step-two}} At this point we could simply run Panconesi and Srinivasan's~\cite{PanconesiS96} deterministic MIS algorithm on each component. 
This would take time $2^{O\paren{\sqrt{\log(t \Delta^4)}}}$, which is not the desired bound, unless $\Delta$ happens to be polylogarithmic in $n$.
In order to make this approach work for all $\Delta$ we need to reduce the ``effective'' size of each component $C$ to at most $\log n$, independent of $\Delta$.  After Step~\ref{Step-two} we have partitioned $V(C)\subseteq \VIB$ 
into $\hat\Gamma(I_C)$ and $B_C$.  As we argue in the next paragraph, 
Lemma~\ref{part:IndependentSet-props:one} implies
that $B_C$ (the bad nodes in $C$) can be efficiently partitioned into $\log n$ low-radius clusters.
This is the only property of $(I_C,R_C)$ that we use in subsequent steps.

\paragraph{Steps~\ref{Step-three} and \ref{Step-four}} 
Recall that nodes are assigned distinct $O(\log n)$-bit IDs.
Using Corollary~\ref{cor:AGLP} with $\alpha=5$, we can compute a $(5,32\log\Delta+O(1))$-ruling set $R_C$ for $B_C$ in 
$O(\log\Delta + \log^* n)$ time.  We form a cluster around each ruling set node in the obvious way: each $v\in B_C$ joins the cluster of the nearest $x\in R_C$, called $\Cluster(x)$, breaking ties by node ID.  Note that {\em nearest} is with respect to $\dist_C$, so the shortest path from $v$ to $x$ does not leave $C$ 
but may go through nodes in $\hat\Gamma(I_C)$.  The cluster graph $C^\star$ 
is obtained by contracting each cluster $\Cluster(x)$ to a single node, also called $x$.  

We cannot use Lemma~\ref{lem:IndependentSet-properties}(\ref{part:IndependentSet-props:one}) directly to bound the size of $R_C$
since $\dist_C(v,R_C\backslash\{v\})$ is only guaranteed to be at least $5$, not in the interval $[5,9]$.   
Consider a greedy procedure for obtaining a $(5,4)$-ruling set $R_C' \supseteq R_C$.
Initialize $R_C' \leftarrow R_C$, then evaluate each $u\in B_C$, setting $R_C' \leftarrow R_C'\cup\{u\}$ 
if $u$ is at distance at least 5 from all vertices $R_C'$.   After this process completes, any $u\not\in R_C'$
has $\dist_C(u,R_C') \le 4$ and for any $U' \subset R_C'$, $\dist_C(U',R_C' \backslash U') \in [5,9]$.
Thus, with probability $1-n^{-c/5+11}$, $|R_C| \le |R_C'| \le t$.
Note that the algorithm does not actually compute $R_C'$. It was just 
introduced to obtain an upper bound on $|R_C|$.

\paragraph{Steps~\ref{Step-five} and \ref{Step-six}} We run Panconesi and 
Srinivasan's~\cite{PanconesiS96} decomposition algorithm on $C^\star$.  
(See Remark~\ref{rem:black-box-phase-II}, below, for a discussion of the subtle difficulties in implementing this algorithm.)
Since $|R_C| \le t = \log_\Delta n < \log n$ we can compute a $\paren{2^{O(\sqrt{\log\log n})},2^{O(\sqrt{\log\log n})}}$-network decomposition $(\mathscr{D},\mathscr{C})$ in $2^{O(\sqrt{\log\log n})}$ time.  Since the underlying network is $C$, not $C^\star$, each step of this algorithm requires $64\log\Delta+O(1)$ steps to simulate in $C$.  The total time is therefore 
$\log\Delta \cdot 2^{O(\sqrt{\log\log n})}$.  
Since $\Cluster^\star(D)$ is the union of disjoint clusters in $\{\Cluster(x) \;|\; x\in D\}$,
the diameter of $\Cluster^\star(D)$ with respect to $\dist_C$ is at most 
$(64\log\Delta+O(1))\cdot 2^{O(\sqrt{\log\log n})}$.

\paragraph{Step~\ref{Step-seven}}
We extend $I_C$ to an MIS on $C$ using the network decomposition.  For each color class, for each cluster $D$, supplement $I_C$ with an MIS $J_D$ on $\Cluster^\star(D)/\hat{\Gamma}(I_C)$.  These MISs are computed by the trivial algorithm and in parallel: 
a representative node in $D$ retrieves the status of all nodes in $\Cluster^\star(D)$, in $O(\log\Delta\cdot 2^{O(\sqrt{\log\log n})})$ time, then computes an MIS $J_D$ and announces it to all nodes in 
$\Cluster^\star(D)$.
At the end of this process $I_C$ is a maximal independent set on $C$.

\paragraph{Step \ref{Step-eight} and Correctness}
The set returned in Step~\ref{Step-eight}, $I\cup\bigcup_C I_C$, is usually an MIS of $G$.  However, 
poor random choices in Steps~\ref{Step-one} and \ref{Step-two} can cause the algorithm to fail during Step~\ref{Step-five}.
The ruling set $R_C$ has size at most $t$ with high probability.  If it is larger than $t$ then Steps~\ref{Step-three} and \ref{Step-four} will be executed without error, but Step~\ref{Step-five} may fail to produce a 
$\paren{2^{O(\sqrt{\log\log n})},2^{O(\sqrt{\log\log n})}}$-network decomposition in the time allotted. 
If this occurs, Steps~\ref{Step-six} and \ref{Step-seven} cannot be executed.

\paragraph{Running Time} The time for Steps~\ref{Step-one} and \ref{Step-two} is $O(\log^2 \Delta)$ and the time for Steps~\ref{Step-three} and \ref{Step-four} is $O(\log\Delta + \log^* n)$.  Steps~\ref{Step-five}--\ref{Step-seven} take 
$O(\log\Delta) \cdot \exp(O(\sqrt{\log\log n}))$ time.  In total the time is $O(\log^2\Delta + \log\Delta\cdot \exp(O(\sqrt{\log\log n})))$, which is $O(\log^2\Delta + \exp(O(\sqrt{\log\log n})))$.

\begin{theorem}\label{thm:MIS}
In a graph with maximum degree $\Delta$, 
an MIS can be computed in $O(\log^2\Delta + \exp(O(\sqrt{\log\log n})))$ time, 
with high probability, using messages with size $O(\Delta^5\log_\Delta n)$.
\end{theorem}

\begin{remark}\label{rem:black-box-phase-II}
One must be careful in applying deterministic algorithms in Phase II in a black box fashion.
In the proof of Theorem~\ref{thm:MIS} we reduced the number of clusters per component to $t$ and deduced that
the Panconesi-Srinivasan~\cite{PanconesiS96} algorithm runs in 
$\log\Delta \cdot 2^{O(\sqrt{\log t})}$ time on each component.  This is not a correct inference.
The stated running time of the Panconesi-Srinivasan algorithm depends on nodes being endowed with
$O(\log t)$-bit IDs (if the number of nodes is $t$), whereas in Step~\ref{Step-five} nodes still have their
original $O(\log n)$-bit IDs.
There is a simple generic fix for this problem.
Suppose a deterministic Phase II algorithm $\mathcal{A}$ runs in time $T=T(t)$ on any instance $C$ with size $t$
whose nodes are assigned distinct $O(\log t)$-bit labels.
Let $k$ be minimal such that $t \geq \log^{(k)} n$.
Just before executing $\mathcal{A}$, 
first compute an $O(t^2\log^{(k)} n) = O(t^3)$-coloring in the graph $C^{[1,2T+1]}$ with Linial's~\cite{Linial92} 
algorithm and use these colors as $(3\log t + O(1))$-bit node IDs.  
This takes $O(Tk)$ time, that is, $O(T)$ time whenever $t = \log^{(O(1))} n$.
As far as $\mathcal{A}$ can tell, all nodes have distinct IDs since no node can ``see'' two nodes with the same ID.
\end{remark}

\section{An Algorithm for Maximal Matching}\label{sec:MMalg}

The $\Match$ procedure given in Figure~\ref{fig:Match} is a generalized version of one iteration of the 
Israeli-Itai~\cite{II86} matching algorithm.
It is given not-necessarily-disjoint node sets $U_1,U_2$
and a matching $M$, and returns a matching on $U_1\times U_2$ that is node-disjoint from $M$.
It works as follows.  
Each unmatched node in $U_1$ {\em proposes} to an unmatched neighbor in $U_2$, selected uniformly at random.
Each node in $U_2$ receiving a proposal {\em accepts} one, breaking ties by node ID.
The accepted proposals form a set of {\em directed} paths and cycles.
At this point each node $v$ generates a bit $b(v)$: 0 if $v$ is at the beginning of a path, 1 if at the end of a path,
and uniformly at random otherwise.  A directed edge $(u,v)$ enters the matching if and only if $b(u) = 0$ and $b(v)=1$.
Refer to Figure~\ref{fig:MM-example} for an execution of $\Match$ on a small graph.

\begin{figure}
\centering\framebox{\hcm[.1]
\begin{minipage}{4.8in}
$\Match(U_1,U_2,M)$
\begin{enumerate}
\setlength{\itemsep}{2pt}%
\item Each $u\in U_1\backslash V(M)$ {\em proposes} to $\Prop(u)$:\label{step:MM1}
\item[] $\Prop(u) \leftarrow $ a random neighbor of $u$ in $U_2\backslash V(M)$.
\item Each $v\in U_2\backslash V(M)$ with a proposal {\em accepts} the best one:\label{step:MM2}
\item[] $\displaystyle\Prop^\star(v) \leftarrow \argmax_{u \::\: \Prop(u) = v} \{\ID(u)\}$.
\item $F \leftarrow \{(\Prop^\star(v),v) \;|\; v\in U_2 \mbox{ for which $\Prop^\star(v)$ exists}\}$\\
{\em ($F$ is a set of \underline{directed} edges.  It consists of directed paths and cycles.)}\label{step:MM3}

\item Each $v \in U_1 \cup U_2$ with $\deg_F(v)>0$ chooses a $b(v) \in \{0,1\}$:\istrut[2]{0}\\
$b(v) \leftarrow \left\{\begin{array}{l@{\hcm[.4]}l}
0 			&	\mbox{if $\indeg_F(v)=0$,}\\
1			&	\mbox{if $\outdeg_F(v)=0$,}\\
\mbox{a random value in $\{0,1\}$}	& \mbox{otherwise.}
\end{array}\right.$
\label{step:MM4}

\item Return the matching $\{(u,v)\in F \;|\; b(u) = 0 \mbox{ and } b(v) = 1\}$.\label{step:MM5}
\end{enumerate}
\end{minipage}
}
\caption{\label{fig:Match}}
\end{figure}

\begin{figure}
\centering
\begin{tabular}{ccc}
\scalebox{.4}{\includegraphics{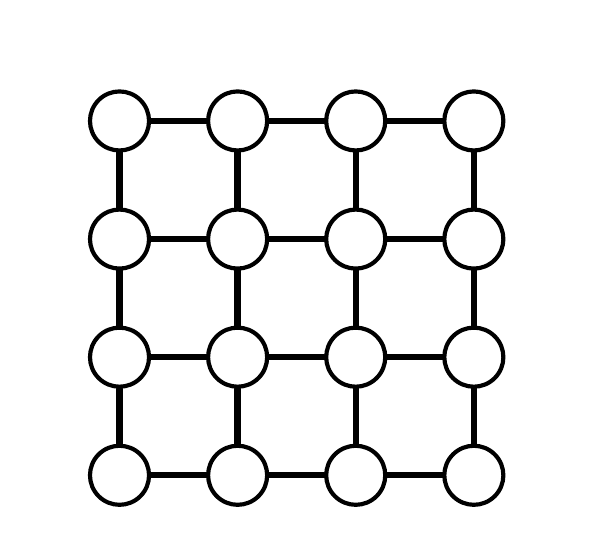}}	& \scalebox{.4}{\includegraphics{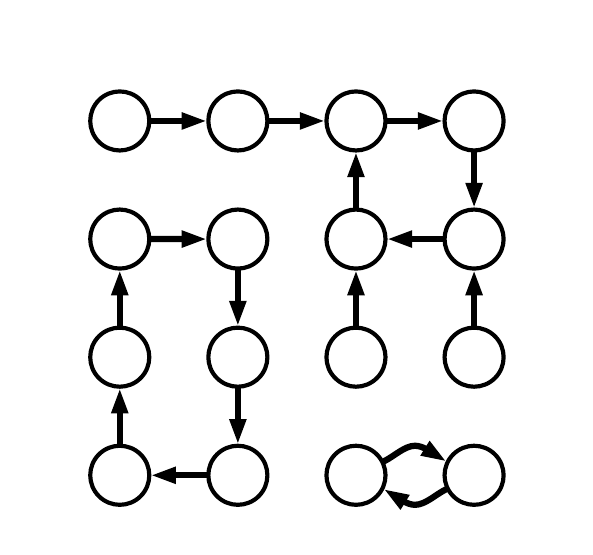}} & \scalebox{.4}{\includegraphics{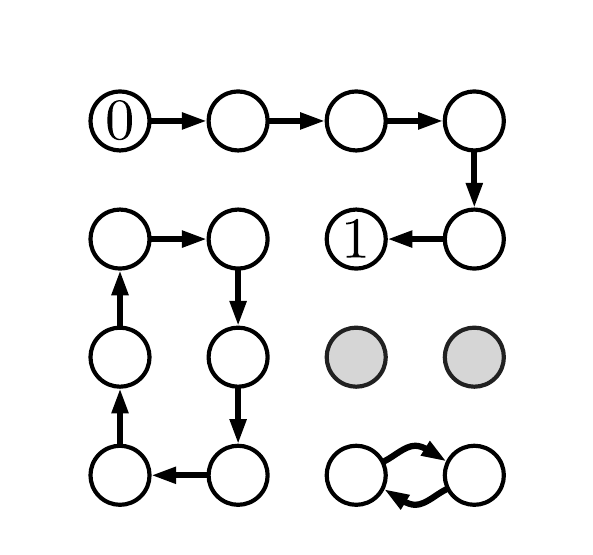}}\\
(a)	& (b)		& (c)
\end{tabular}
\caption{\label{fig:MM-example} 
One possible execution of $\Match(V,V,\emptyset)$.
Left: the undirected input graph $G=(V,E)$.  
Middle: the directed pseudoforest $(V,\, \{(u,\Prop(u))\})$ induced by the proposals.
Right: $F$ consists of directed paths and cycles.  
The beginning and end of each path are labeled 0 and 1, respectively.  Grayed, 
isolated nodes receive no label.  All other nodes are assigned random labels in $\{0,1\}$.}
\end{figure}

The procedure $\MaximalMatch$ has a two-phase structure.  Phase I consists of $O(\log \Delta)$ {\em stages}
in which the matching, $M$, is supplemented using two calls to $\Match$.  After Phase I all components
of unmatched vertices have fewer than $s = (c\ln n)^9$ nodes, with probability $1-n^{-\Omega(c)}$.  
We apply the deterministic
$O(\log^4 s) = O(\log^4 \log n)$ time maximal matching algorithm of \cite{HanckowiakKP01} on each component, in parallel.
In total the running time is  $O(\log\Delta + \log^4\log n)$.

Let $V_i \bydef V(G) \backslash V(M)$ be the set of unmatched nodes just {\em before} stage $i$.  
For brevity we let $\deg_i$ and $\Gamma_i$
be the degree and neighborhood functions for the graph induced by $V_i$.
The parameters for stage $i$ are given below.  Roughly speaking, $\delta_i$ is the maximum degree
at stage $i$, $\tau_i = 2\delta_i/(c\ln n)$ is a certain `low-degree' threshold, 
and $\nu_i  = \delta_i\tau_i/2$ is a bound on the sum of degrees 
of nodes in $\Gamma_i(v)$, for any $v$.  
Define
\begin{align*}
\delta_i &\bydef \f{\Delta\sqrt{c\ln n}}{\rho^i},\istrut[5]{0}\\
\tau_i &\bydef \f{2\Delta}{\rho^i \sqrt{c\ln n}},\istrut[5]{0}\\
\mbox{and } \; \nu_i &\bydef \f{\Delta^2}{\rho^{2i}} \,=\, \f{\delta_i\tau_i}{2}, \; \; \mbox{ where } \rho \bydef \sqrt{16/15} < 1.033.
\intertext{Define the low degree and high degree nodes before stage $i$ to be}
V_{i}^{\lo} &\bydef \{v\in V_{i} \;|\; \deg_{i}(v) \le \tau_{i+1}\}\\
\mbox{and } \; V_{i}^{\hi} &\bydef \{v\in V_{i} \;|\; \deg_{i}(v) > \delta_{i+1}\}.
\end{align*}
Note that nodes with degree between $\tau_{i+1}$ and $\delta_{i+1}$ are in neither set.
In stage $i$ we supplement the current matching, 
first with a matching on 
$V_i^{\lo} \times V_i^{\hi}$, then with a matching on $V_i$.
As we soon show, certain invariants will hold after stage $i$ with probability $1-\exp(-\Omega(\tau_i))$.
Thus, in order to obtain high probability bounds we must switch to a different analysis
when $\tau_i = \Theta(\log n)$, that is, when the maximum degree is $\delta_i = \Theta(\log^2 n)$.

\begin{figure}
\centering
\framebox{\hcm[.1]
\begin{minipage}{4in}
$\MaximalMatch(\mbox{Graph } G)$
\begin{enumerate}
\setlength{\itemsep}{0pt}%
\item[] {\bf Phase I:}
\item Initialize $M_{0} \leftarrow \emptyset$
\item For each {\em stage} $i$ from $0$ to $z \bydef \log_\rho \Delta + \log_{4/3}(c\ln n) - 1$.
\item[] \hcm[.5] $M \leftarrow M \, \cup \, \Match(V_{i}^{\lo},V_{i}^{\hi},M)$
\item[] \hcm[.5] $M \leftarrow M \, \cup\,  \Match(V_{i},V_{i},M)$\\
\item[] {\bf Phase II:}
\item[] Let $\mathscr{C}$ be the connected components in the graph induced by $V_{z}$ 
containing less than $(c\ln n)^{9}$ nodes.
\item For each $C\in \mathscr{C}$,
\item[] \hcm[.5] $M_C \leftarrow $ a maximal matching on $C$
\item Return $\displaystyle M\cup \bigcup_{C\in\mathscr{C}} M_C$.
\end{enumerate}
\end{minipage}
}
\caption{\label{alg:MM}}
\end{figure}

The algorithm always returns a matching.  If, at the beginning of Phase II, $\mathscr{C}$
contains {\em all} connected components on $V_{z}$ then the returned matching is clearly 
maximal.  Thus, our goal is to show that with high probability, after Phase I there is 
no connected component of unmatched nodes with size greater than $(c\ln n)^{9}$.
In the lemma below $\deg(S)$ is short for $\sum_{u\in S}\deg(u)$, where $S\subset V$.

\begin{lemma}\label{lem:degree-bound}
Define $i^\star$ to be the last stage for which 
$\tau_{i^\star} \ge 2c\ln n$.
With probability $1 - 2n^{-c/660 + 1}$,
the following bounds hold for all $v\in V(G)$ 
after each stage $i < i^\star$.
\begin{align*}
\deg_{i+1}(v) &\le \delta_{i+1}\\
\mbox{ and } \; \degtwo_{i+1}(v) &\le \nu_{i+1},
\end{align*}
where $\degtwo_{i+1}(v) \bydef \deg_{i+1}(\Gamma_{i+1}(v))$.
\end{lemma}

\begin{proof}
The inequalities hold trivially when $i=0$.  We analyze the probability that they hold after stage $i$, 
assuming they hold just before stage $i$.
For the sake of minimizing notation we use $\deg_{i}, \Gamma_{i}$, etc.
to refer to the degree and neighborhood functions just before {\em each} call to $\Match$ in stage $i$.  
This should not cause confusion.

Consider a node $v\in V_{i}$ at the beginning of stage $i$.  
By assumption $\deg_{i}(v) \le \delta_{i}$ and $\degtwo_{i}(v) \le \nu_{i}$.
Since, by definition, nodes in $V_i^{\lo}$ have degree at most $\tau_{i+1}$,
$v$ has less than $\nu_{i}/\tau_{i+1} = \delta_{i+1}\cdot (\rho^2/2)$ neighbors that are {\em not} in $V_{i}^{\lo}$.  
We argue that if $v\in V_i^{\hi}$ (that is, $\deg_{i}(v) > \delta_{i+1}$) then $v$ will be matched in the {\em first} call to $\Match$ in stage $i$
with probability $1-\exp((1-\rho^2/2)c\ln n/2)$.
Note that the forest induced by the proposals consists solely of stars 
(all edges being directed from $V_i^{\lo}$ to $V_i^{\hi}$)
which implies that $F$, the graph consisting of accepted proposals, consists solely of single-edge paths.  
Single-edge paths in $F$ are always committed to the matching since
their endpoints' $b$-values are chosen deterministically in 
Step~\ref{step:MM4} of $\Match$ to satisfy the criterion of Step~\ref{step:MM5}.
Thus, $v\in V_i^{\hi}$ will be matched if any neighbor $u \in V_i^{\lo}$ chooses $(u,v)$ in Step~\ref{step:MM2}.  The probability 
that this does {\em not} occur is at most
\begin{align*}
\paren{1-\frac{1}{\tau_i}}^{|\Gamma_{i}(v) \cap V_i^{\lo}|}
	&\le \paren{1-\frac{1}{\tau_i}}^{\paren{1-\frac{\rho^2}{2}}\delta_{i+1}}\\
	&\le \exp\paren{-\paren{1-\frac{\rho^2}{2}}\frac{\delta_{i+1}}{\tau_i}}\\
	&=\, \exp\paren{-\paren{1-\frac{\rho^2}{2}}\frac{c\ln n}{2\rho}} \,<\, n^{-0.22c}			& \{\rho < 1.033\}
\end{align*}
By a union bound, every $v\in V_i^{\hi}$ will be matched with probability more than $1-n^{-c/5+1}$.
Therefore, we proceed under the assumption that after the first call to $\Match$ in stage $i$, 
all unmatched nodes have degree less than $\delta_{i+1}$.  It remains to show that after 
the second call to $\Match$, $\degtwo_{i+1}(v) \le \nu_{i+1}$, for all $v\in V(G)$.

A node $v$ will be guaranteed to have positive degree in $F$ under two circumstances:
(i) some node offers $v$ a proposal, or (ii) among those nodes proposing to $\Prop(v)$, 
$v$ has the highest ID.  Once $v$ is in a path or cycle in $F$ 
it becomes matched with probability at least 1/2.
(It is actually exactly $1/2$, except if $v$ is in a single-edge path, in which case it is 1.)

In the following analysis we first expose the proposals made by 
all nodes in $V_{i} \backslash \hat\Gamma_i(v)$ 
then expose the proposals of $\hat\Gamma_i(v)$ in {\em descending} order of node ID.
Consider the moment just before a neighbor $u\in\Gamma_{i}(v)$ makes a proposal.
If at least $\deg_{i}(u)/2$ neighbors of $u$ have yet to receive a proposal
(by nodes already evaluated) then place $u$ in set 
$A$, otherwise place $u$ in set $B$.
If $u$ is put in set $A$ and $u$ {\em does}
offer $\Prop(u)$ its first proposal thus far---implying that $u$ will have positive degree in $F$---then also place $u$ in set $A'$.
See Figure~\ref{fig:ABC} for an illustration.

\begin{figure}
\centering
\scalebox{.4}{\includegraphics{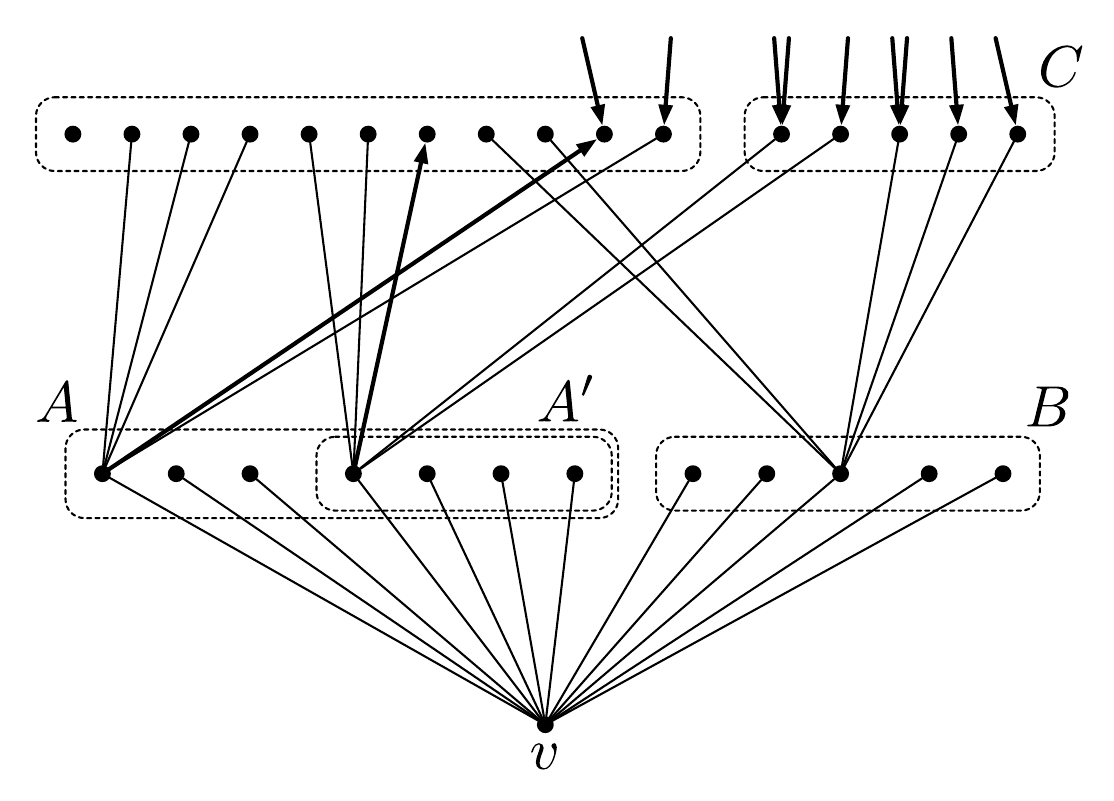}}
\caption{\label{fig:ABC}The neighborhood of $v$ is partitioned into $A$ and $B$, and $A$ is partitioned
into $A'$ and $A\backslash A'$.  Proposals are indicated by directed edges.  
A node is in $A$ if a majority of its neighbors do not already have a proposal
and in $B$ otherwise.  An $A$-node is in $A'$ if it makes the first proposal to a node.  
A node is in $C$ if it is adjacent to $B$ and has a proposal.  Note: nodes with a proposal that are adjacent to $A$
but not $B$ are not in $C$.  Contrary to the depiction, $A$-nodes and $B$-nodes may be adjacent
and $C$ may intersect both $A$ and $B$. }
\end{figure}

We split the rest of the analysis into 
two cases depending on whether $A$-nodes or 
$B$-nodes account for the larger share of edges in $v$'s 2-neighborhood.
In both cases we show that $\degtwo_{i+1}(v) \leq \nu_{i+1}$ with high probability.

\subsection{Case I: The $A$-nodes}

We first analyze the case that $\deg_{i}(A)\ge \degtwo_{i}(v)/2 \ge \nu_{i+1}/2$.  
(If $\degtwo_{i}(v)$ is already less than $\nu_{i+1}$ there is nothing to prove.)
Observe that each node $u$, once in $A$, is moved to $A'$ with probability at least 1/2, and if so,
contributes $\deg_{i}(u) \le \delta_{i+1}$ to $\deg_{i}(A')$.\footnote{Note that this process fits in the martingale framework of Corollary~\ref{cor:Azuma-Hoeffding}.
Here $X_j$ is the state of the system after evaluating the $j$th neighbor $u$ of $v$ and $Z_j$ is $\deg_i(u)$ if $u$ joins $A'$ and $0$ otherwise, which is a function of $X_j$.
Thus, each $Z_j$ has a range of at most $\delta_{i+1}$.}
The probability that after evaluating each 
$u\in\Gamma_{i}(v)$, $\deg_{i}(A')$ is less than a $\frac{1}{\sqrt{2}}$-fraction of its expectation is
\begin{align*}
\lefteqn{\Pr(\deg_{i}(A') < \fr{1}{\sqrt{2}} \cdot \E[\deg_{i}(A')])}\\
										&\le \exp\paren{-\f{\paren{(1 - \fr{1}{\sqrt{2}}) \E[\deg_i(A')]}^2}{2\sum_{u\in A}(\deg_{i}(u))^2}}               & \{\mbox{Corollary~\ref{cor:Azuma-Hoeffding}}\}\\
										&\le \exp\paren{-\f{\paren{(1 - \fr{1}{\sqrt{2}}) \frac{1}{2} \deg_{i}(A)}^2}{2(\deg_{i}(A)/\delta_{i+1}) \delta_{i+1}^2}}              & \{\mbox{linearity of expectation}\} \\
										&\le \exp\paren{-\paren{\f{(1-\fr{1}{\sqrt{2}})^2}{8}}\paren{\f{\deg_{i}(A)}{\delta_{i+1}}}}\\
										&\le \exp\paren{-\paren{\f{(1-\fr{1}{\sqrt{2}})^2}{32}}\tau_{i+1}}                     & \left\{\deg_{i}(A)\ge \f{\nu_{i+1}}{2} = \f{\delta_{i+1}\tau_{i+1}}{4}\right\}\\
										&< n^{-c/187}								& \{\tau_{i+1} \ge \tau_{i^\star} \ge 2c\ln n\}
\intertext{%
We proceed under the assumption that this unlikely event does not hold, so
$
\deg_{i}(A') \ge \f{1}{\sqrt{2}}\cdot\E[\deg_i(A')] \ge \f{1}{2\sqrt{2}}\cdot \deg_i(A) \ge \f{1}{4\sqrt{2}}\cdot \nu_{i+1}.
$
Since each node with positive degree in $F$ is matched with probability at least 1/2, by linearity of expectation
$\E[\deg_i(A') - \deg_{i+1}(A')] \ge \fr{1}{2}\deg_{i}(A')$.  Moreover, whether $v\in A'$ is matched depends only on the $b$-values of neighboring nodes
in $F$.  The dependency graph of these events has chromatic number $\chi = 5$ since the nodes of a cycle can be 5-colored
such that any two nodes within distance 2 receive different colors.  
The probability that $\deg_i(A') - \deg_{i+1}(A')$ is less than a $\f{1}{\sqrt{2}}$-fraction of its expectation is therefore}
\lefteqn{\zero{\Pr\paren{\deg_i(A') - \deg_{i+1}(A') < \fr{1}{\sqrt{2}}\cdot \E[\deg_{i}(A') - \deg_{i+1}(A')]}}}\\
		&\le \zero{\displaystyle\exp\paren{-\frac{2\paren{\paren{1-\f{1}{\sqrt{2}}}\E[\deg_{i}(A') - \deg_{i+1}(A')]}^2}{\chi\cdot \sum_{u\in A'} (\deg_i(u))^2}}}			& \{\mbox{Theorem~\ref{thm:Janson}, $\chi=5$}\}\\
		&\le \exp\paren{-\frac{2\paren{\paren{1-\fr{1}{\sqrt{2}}}\fr{1}{2}\deg_i(A')}^2}{\chi\cdot (\deg_i(A')/\delta_{i+1})\delta_{i+1}^2}}\\
		&\le \exp\paren{-\paren{\frac{(1-\f{1}{\sqrt{2}})^2}{10}}\paren{\frac{\deg_i(A')}{\delta_{i+1}}}}\\
		&\le \exp\paren{-\paren{\frac{(1-\f{1}{\sqrt{2}})^2}{80\sqrt{2}}}\tau_{i+1}}	& \left\{\deg_i(A') \ge \f{\nu_{i+1}}{4\sqrt{2}} = \f{\delta_{i+1}\tau_{i+1}}{8\sqrt{2}}\right\}\\
		&< n^{-c/660}				& \{\tau_{i+1} \ge \tau_{i^\star} \ge 2c \ln n\}
\end{align*}

To sum up, if this unlikely event does not occur,
\begin{align*}
\degtwo_i(v) - \degtwo_{i+1}(v) &\ge \deg_{i}(A')-\deg_{i+1}(A') 		& \mbox{\{because $A' \subseteq \Gamma_i(v)$\}}\\
					&\ge {1 \over {\sqrt{2}}} \cdot \E[\deg_i(A') - \deg_{i+1}(A')]\\
					&\ge \frac{1}{2\sqrt{2}}\cdot \deg_i(A') 	
					\zero{\displaystyle\;\ge\; \paren{\frac{1}{2\sqrt{2}}}^2\cdot \deg_i(A) 
					\;\ge\; \frac{1}{16}\degtwo_i(v).}
\end{align*}

Thus, with high probability, $\degtwo_{i+1}(v) \le \frac{15}{16}\cdot \degtwo_i(v)$.

\subsection{Case II: The $B$-nodes}

We now turn to the case when $\deg_i(B) \ge \f{1}{2}\cdot \degtwo_i(v) \ge \f{1}{2}\cdot \nu_{i+1}$.
By definition, just before any $u\in B$ makes its proposal, at least $\f{1}{2}\cdot \deg_i(u)$ of its neighbors have already received a proposal.
We do not care who $u$ proposes to.
Let $C\subseteq \Gamma_i(B)$ be the set of nodes in $B$'s neighborhood that receive at least one proposal.
For $x\in C$, let $\deg_B(x) \le \delta_{i+1}$ be the number of its neighbors in $B$.
Thus, if $x$ is matched then $\degtwo(v)$ is reduced by at least $\deg_B(x)$.  
It follows that 
\begin{align*}
\deg_B(C) &= \sum_{x\in C} \deg_B(x) = \sum_{u\in B} \deg_C(u) \ge \sum_{u\in B} \f{1}{2}\cdot \deg_i(u) & \{\mbox{by defn. of $u\in B$}\}\\
&= \f{1}{2}\cdot \deg_i(B) \ge \f{1}{4}\cdot \degtwo_i(v) > \f{1}{4}\cdot \nu_{i+1}.
\end{align*}
Since $C$-nodes are matched with probability 1/2, by linearity of expectation,
$\E[\deg_{i+1}(B)] \le \deg_i(B) - \f{1}{2}\cdot \deg_B(C) \le \f{3}{4}\deg_i(B)$.  
We bound the probability that $\deg_{i+1}(B)$ deviates from its expectation 
using Janson's inequality, in exactly the same way as we bounded $\deg_{i+1}(A')$.
It follows that
\begin{align*}
\lefteqn{\Pr\paren{\deg_{i+1}(B) \ge \deg_i(B) - \f{1}{4}\cdot \deg_B(C)}}\\
&\le \exp\paren{-\f{2(\fr{1}{4}\deg_B(C))^2}{\chi\cdot \sum_{x\in C}(\deg_B(x))^2}}       & \{\mbox{Theorem~\ref{thm:Janson}}\}\\
										&\le \exp\paren{-\f{1}{40}\cdot \f{(\deg_B(C))^2}{(\deg_B(C)/\delta_{i+1}) \delta_{i+1}^2}}   &  \{\chi=5, \deg_B(x) \le \delta_{i+1}\}\\
										&\le \exp\paren{-\f{1}{320}\tau_{i+1}}      & \{\deg_B(C)\ge\nu_{i+1}/4 = \delta_{i+1}\tau_{i+1}/8\}\\
										&\le n^{-c/160}			& \{\mbox{$\tau_{i+1} \ge \tau_{i^\star} \ge 2c \ln n$}\}
\end{align*}

Thus, with high probability 
\[
\degtwo_{i+1}(v) \le \degtwo_{i}(v) - \f{1}{4}\cdot \deg_B(C) \le \f{15}{16}\cdot \degtwo_i(v),
\]
since $\deg_B(C) \ge \f{1}{4}\cdot \degtwo_i(v)$.
Whether we are in Case I or Case II, 
$\degtwo_{i+1}(v) \le \f{15}{16}\cdot \degtwo_i(v) \le \f{15}{16}\cdot \nu_i$ with high probability.
Since $\nu_{i+1} = \nu_i/\rho^2$, we set $\rho = \sqrt{16/15}$.

By a union bound, the probability of error at {\em any} node is at most
$2n^{-c/660+1}$.  This covers the probability that the first call to $\Match$ fails to match all $V_i^{\hi}$-nodes
or the second call fails to make $\degtwo_{i+1}(v) \le \nu_{i+1}$, for all $v\in V_i$.
\end{proof}

\subsection{The Emergence of Small Components}

Lemma~\ref{lem:degree-bound} implies that before stage
$i^\star < \log_\rho\Delta$,
the maximum degree is at most 
$\delta_{i^{\star}} = \tau_{i^{\star}} (c/2)\ln n \le (c\ln n)^2$.
In Lemmas~\ref{lem:degree-decay} and \ref{lem:size-bound} we prove that after another 
$O(\log\log n)$ iterations of the $\Match$ procedure, all components
of unmatched vertices have size at most $(c\ln n)^9$, with high probability.
Thus, Phase II of $\MaximalMatch$ correctly extends the matching after 
Phase I to a maximal matching.

\begin{lemma}\label{lem:degree-decay}
For any node $v$ and any stage $i$, 
$\Pr\left(\deg_{i+1}(v) \le \frac{3}{4}\cdot \deg_i(v)\right) \ge \frac{1}{4}$.
\end{lemma}

\begin{proof}
We analyze the expected drop in $v$'s degree during the {\em second} call to $\Match$ (the one in which all nodes participate), 
then apply Markov's inequality.
Expose the proposals in descending order of node ID, and consider the moment just before $v$ makes its proposal.
Let $P\subseteq \Gamma_i(v)$ be those neighbors already holding a proposal
and $Q\subseteq \Gamma_i(v)$ be the neighbors with no proposal.
All nodes in $P$ will be matched with 1/2 probability and $v$ will be matched with 
1/2 probability {\em if} it proposes to a member of $Q$.  The probability $v$ is matched is at least $\frac{\epsilon}{2}$, where $\epsilon = |Q|/\deg_i(v)$.
The probability that $u\in P$ is still a neighbor of $v$ after this call to $\Match$ is therefore
at most $\f{1}{2}(1-\frac{\epsilon}{2})$.  The probability that $u\in Q$ is still a neighbor is at most $1-\frac{\epsilon}{2}$.  
By linearity of expectation, 
\begin{align*}
\E[\deg_{i+1}(v)] &\le \paren{\epsilon\paren{1-\fr{\epsilon}{2}} + \fr{1}{2}(1-\epsilon)\paren{1-\fr{\epsilon}{2}}} \cdot  \deg_i(v) \\
			&= (1-\fr{\epsilon}{2})(\fr{1}{2}+\fr{\epsilon}{2})\cdot \deg_i(v)\\
			&\le (\fr{3}{4})^2\cdot  \deg_i(v)			& \mbox{\{maximized at $\epsilon=1/2$\}}
\end{align*}
That is, we lose at least a $\frac{7}{16}$-fraction of $v$'s neighbors in expectation.
By Markov's inequality, $\Pr\left(\deg_{i+1}(v) \le \frac{3}{4}\cdot \deg_i(v)\right) \ge \frac{1}{4}$.
\end{proof}

\begin{lemma}\label{lem:size-bound}
Let $\hat{G}$ be the subgraph induced by unmatched nodes at some point in Phase I,
whose maximum degree is at most $\hat{\Delta}$.
After $12\log_{4/3} \hat{\Delta}$ more stages in Phase I,
all components of unmatched nodes have size at most 
$t\hat{\Delta}^4$
with probability $1-n^{-c}$, where $t \bydef c\ln n$.
\end{lemma}

\begin{proof}
The proof follows the same lines at that of Lemma~\ref{lem:IndependentSet-t-prob} and \ref{lem:IndependentSet-properties}, 
but has some added complications.
We say $v$ is {\em successful} in stage $i$ if $\deg_{i+1}(v) \le \frac{3}{4}\cdot \deg_i(v)$.
If $v$ experiences $\log_{4/3} \hat{\Delta}$ successes then either $v$ has been matched or all neighbors
of $v$ are matched.

The events that $u$ and $v$ are successful in a particular stage $i$ are independent if 
$\dist_{\hat{G}}(u,v) \ge 5$ since the success of $u$ and $v$ only depend on the random choices
of nodes within distance 2.  Any subgraph of size $t\hat{\Delta}^4$ must contain a subset $T$ of 
$t$ nodes such that (i) each pair of nodes in $T$ is at distance at least 5 and (ii) $T$ forms a $t$-node tree
in $\hat{G}^5$.  Call $T$ a {\em distance-5 set} if $|T|=t$ and it satisfies (i) and (ii).  There are less than $4^t\cdot n\cdot \hat{\Delta}^{5(t-1)}$
distance-5 sets in $\hat{G}$.  (There are less than $4^t$ topologically distinct trees with $t$ nodes
and less than $n \hat{\Delta}^{5(t-1)}$ ways to embed one such tree in $\hat{G}^5$.)

Consider any distance-5 set $T$.
Over $12\log_{4/3} \hat{\Delta}$ consecutive stages, $v\in T$ experiences
some number of successful stages.  Call this random variable $X_v$ and define 
$X \bydef \sum_{v\in T} X_v$.  By Lemma~\ref{lem:degree-decay} and 
linearity of expectation,
\[
\E[X] = \sum_{v\in T} E[X_v] \ge t\cdot\fr{1}{4}(12\log_{4/3} \hat{\Delta}) = 3t\log_{4/3} \hat{\Delta}.
\]
If $X \ge t\log_{4/3}\hat{\Delta}$ then some $X_v \ge \log_{4/3}\hat{\Delta}$, 
implying that $v$ becomes isolated and therefore that no component contains all $T$-nodes.
We will call $T$ {\em successful} if any member of $T$ becomes isolated.
By a Chernoff bound (Theorem~\ref{thm:Chernoff}), the probability that $T$ is unsuccessful is at most
\begin{align*}
\Pr\paren{X < t\log_{4/3}\hat{\Delta}} &\le \Pr\paren{X < \f{1}{3}\cdot \E[X]}\\
						&\le \exp\paren{-\frac{2\paren{\frac{2}{3}\E[X]}^2}{4t\log_{4/3}\hat{\Delta}}}\\
						&\le \exp\paren{-2 t\log_{4/3}\hat{\Delta}}			& \{\E[X] \ge 3t\log_{4/3}\hat{\Delta}\}\\
						&= \hat{\Delta}^{-(2\log_{4/3} e)t}
\end{align*}
After $12\log_{4/3}\hat{\Delta}$ stages, if there exists a component with size 
$t\hat{\Delta}^4$ then it must contain an unsuccessful subset $T$.  By the union bound, this occurs with probability less than
\begin{align*}
\lefteqn{4^t \cdot n \cdot \hat{\Delta}^{5(t-1)} \cdot \hat{\Delta}^{-(2\log_{4/3} e)t}}\\
&< 4^{c\ln n} \cdot n \cdot \hat{\Delta}^{(5 - 2\log_{4/3} e) \cdot c\ln n}	&\\
&< n^{-c}											& \mbox{\{for $\hat{\Delta}$ sufficiently large.  Note: $5-2\log_{4/3} e < 0$.\}}
\end{align*}
\end{proof}

\begin{theorem}\label{thm:MM}
In a graph with maximum degree $\Delta$, a maximal matching can be computed in
$O(\log\Delta + \log^4\log n)$ time with high probability using $O(1)$-size messages.
When the graph is bipartite and 2-colored, the time bound becomes $O(\log \Delta + \log^3\log n)$.
\end{theorem}

\begin{proof}
After $i^\star = \log_\rho (\Delta/(c\ln n)^{3/2})$ stages in Phase I 
the maximum degree is $\hat\Delta = (c\ln n)^2$, with high probability.
After another $4\log_{4/3} \hat\Delta$ stages in Phase I all connected components have 
at most $s \bydef \hat{\Delta}^4\cdot c\ln n = (c\ln n)^9$ nodes, with high probability.
We execute the deterministic maximal
matching algorithm of~\cite{HanckowiakKP01} for time sufficient to solve any instance
on $s$ nodes: $O(\log^4 s)$ time for general graphs and $O(\log^3 s)$ time for bipartite, 2-colored graphs.
Both Phase I and Phase II can be implemented with $O(1)$-size messages, that is,
this algorithm works in the $\CONGEST$ model.
\end{proof}

\section{Vertex Coloring}\label{sect:Deltaplusone}

We consider a slightly more stringent version of $(\Delta+1)$-coloring called $(\deg+1)$-coloring, 
where each node $v$ must adopt a color from the palette $\{1,\ldots,\deg(v)+1\}$, or more generally, an arbitrary 
set with size $\deg(v)+1$.\footnote{Some applications~\cite{AmirKKNP14} demand $(\deg+1)$-colorings, not $(\Delta+1)$-colorings.}
Although the palette of a node does not depend on $\Delta$, our algorithm still requires that nodes
know $\Delta$ and $n$.  

In Section~\ref{sect:OneShotColoring} we define and analyze a natural $O(1)$-time algorithm called $\OneShotColoring$
that colors a subset of the nodes.  Johannson~\cite{Johansson99} showed
that $O(\log n)$ applications of a variant of $\OneShotColoring$ suffice to $(\Delta+1)$-color a graph, with high probability.
Our goal is to show something stronger.  We show that 
after $O(\log \Delta)$ applications of $\OneShotColoring$, 
all nodes have at most $O(\log n)$ uncolored neighbors that each have $\Omega(\log n)$ uncolored neighbors.
This property allows us to reduce the resulting $(\deg+1)$-coloring problem to {\em two} $(\deg+1)$-coloring problems on subgraphs with maximum degree $O(\log n)$.  It is shown that on these instances, $O(\log\log n)$ further applications
of $\OneShotColoring$ suffice to reduce the size of all uncolored components to $\poly(\log n)$.
In Phase II we apply the deterministic 
$(\deg+1)$-coloring algorithm of Panconesi and Srinivasan~\cite{PanconesiS96} to 
the $\poly(\log n)$-size uncolored components.
The remainder of this section constitutes a proof of Theorem~\ref{thm:degplusone-coloring}.

\begin{theorem}\label{thm:degplusone-coloring}
In a graph with maximum degree $\Delta$, 
a $(\deg+1)$-coloring can be computed in $O(\log\Delta + \exp(O(\sqrt{\log\log n})))$ time
using $\poly(\log n)$-length messages.
\end{theorem}

\subsection{Analysis of $\OneShotColoring$}\label{sect:OneShotColoring}

The algorithm maintains a proper partial coloring $\Color \::\: V(G) \rightarrow \{1,\ldots,\Delta+1,\bottom\}$, 
where $\bottom$ denotes no color and $\Color(v) \in \{1,\ldots,\deg(v)+1\} \cup \{\bottom\}$.  
Initially $\Color(v)\leftarrow\: \bottom$ for all $v\in V(G)$.
Before a call to $\OneShotColoring$ some nodes have already committed to their final colors.
Each remaining uncolored node $v$ chooses $\Color^\star(v)$, a color selected uniformly at random from its remaining palette.
It may be that neighbors of $v$ also choose $\Color^\star(v)$.  If $v$ holds the highest ID among all such nodes contending 
for $\Color^\star(v)$, it permanently commits to that color.
The pseudocode for $\OneShotColoring$ appears in Figure~\ref{fig:OneShotColoring}.

\begin{figure}
\centering
\framebox{\hcm[.1]
\begin{minipage}{4.7in}
$\OneShotColoring(G,\Color)$
\begin{enumerate}
\item[] Define $U\subseteq V(G)$ and $\Psi \;:\; V(G) \rightarrow 2^{\{1,\ldots,\Delta+1\}}$ as follows.
\begin{align*}
	U &\bydef \{u\in V(G) \;|\; \Color(u) =\: \bottom\}	,							& \mbox{the uncolored vertices,}\\
\mbox{ and } \; \;\Psi(v) &\bydef \{1,\ldots,\deg(v)+1\} \backslash \Color(\Gamma(v)),		& \mbox{$v$'s available palette.}
\end{align*}
\item[] The following steps are executed for all $v\in U$, in parallel.
\item \istrut[3]{0}Select a $\Color^\star(v) \in \Psi(v)$ uniformly at random.

\item If $\ID(v) > \max\left\{ \ID(u) \;|\; u\in \Gamma_U(v) \mbox{ and } \Color^\star(u) = \Color^\star(v)\right\}$,\\
	Permanently assign $\Color(v) \leftarrow \Color^\star(v)$.
\end{enumerate}
\end{minipage}
}
\caption{\label{fig:OneShotColoring}}
\end{figure}

We analyze the properties of $\OneShotColoring$ from the point of view of some arbitrary uncolored node $v\in U$.  
Note that whether $v$ is colored depends only on its behavior and the behavior of neighbors with larger IDs,
denoted $\Gamma_U^>(v) \bydef \{u\in \Gamma_U(v) \;|\; \ID(u) > \ID(v)\}$.
Define $\Psi^{-1}(q) \bydef \{u \in \Gamma_U^>(v) \;|\; q\in \Psi(u)\}$ 
to be the set of $v$'s uncolored neighbors that are contending for color $q$ and have higher IDs.
Define $w(q) = \sum_{u\in\Psi^{-1}(q)} 1/|\Psi(u)|$ to be the {\em weight} of color $q$.  In other words, each neighbor $u$ distributes $1/|\Psi(u)|$ units of weight to each color in its palette.  Note that $1/|\Psi(u)| \le 1/(\deg_U(u)+1) \le 1/2$.
The probability that $q\in \Psi(v)$ is {\em available} to $v$ 
after exposing $\Color^\star(\Gamma_U^>(v))$ is 
\begin{align}
\Pr(q \not\in \Color^\star(\Gamma_U^>(v)))
	&= \prod_{u\in \Psi^{-1}(q)} \paren{1-\fr{1}{|\Psi(u)|}} \nonumber\\
	&\ge \prod_{u\in \Psi^{-1}(q)} \paren{\fr{1}{4}}^{1/|\Psi(u)|} 	\label{eqn:coloring1}\\
	&= \paren{\fr{1}{4}}^{w(q)}.	\nonumber
\intertext{%
Inequality (\ref{eqn:coloring1}) follows from the fact that $(1-x) \ge (1/4)^{x}$ when $x\in [0,1/2]$.
Let $X_q\in\{0,1\}$ be the indicator variable for the event that $q$ is available and $X = \sum_q X_q$.
By linearity of expectation, $\E[X] \ge \sum_{q\in \Psi(v)} \paren{\fr{1}{4}}^{w(q)}$.
By the convexity of the exponential function, this quantity is minimized when all color weights are equal.  Hence,}
\E[X] \ge \sum_{q\in \Psi(v)} \paren{\fr{1}{4}}^{w(q)} &\ge |\Psi(v)|\cdot \paren{\fr{1}{4}}^{\sum_q w(q)/|\Psi(v)|}	\nonumber\\
								&\ge |\Psi(v)|\cdot \paren{\fr{1}{4}}^{\deg_U(v)/|\Psi(v)|}	\label{eqn:coloring2}\\
								&> |\Psi(v)|/4.									\label{eqn:coloring3}
\end{align}
Inequalities (\ref{eqn:coloring2},\ref{eqn:coloring3}) follow 
from the fact that each neighbor in $\Gamma_U^>(v)$ can contribute at most one unit of weight
and that $|\Psi(v)| \ge \deg_U(v)+1 \ge \deg_U^>(v)+1$.
We will call $v$ {\em happy} if $X \ge |\Psi(v)|/8$, that is, if the number of available colors is at least half its expectation.  Let $\Happy_v$ be the event that $v$ is happy.
The variables $\{X_q\}$ are not independent.  However, Dubhashi and Ranjan~\cite{DubhashiR98} showed that 
$\{X_q\}$ are negatively correlated, and more generally that all balls and bins experiments of this form give rise to negatively correlated variables.\footnote{In this situation the colors are bins and the neighbors' choices are balls.}
By Theorem~\ref{thm:Chernoff},
\[
\Pr(\overline{\Happy_v}) \bydef \Pr\paren{X < \f{|\Psi(v)|}{8}} < \exp\paren{-\f{2\cdot (|\Psi(v)|/8)^2}{|\Psi(v)|}} = \exp\paren{-\f{|\Psi(v)|}{32}}.
\]

Lemma~\ref{lem:OneShotColoring} summarizes the relevant properties of $\OneShotColoring$ used in the next section.

\begin{lemma}\label{lem:OneShotColoring}
Let $U$ be the uncolored nodes before a call to $\OneShotColoring$ and $v\in U$ be arbitrary.
\begin{enumerate}[leftmargin=1cm]
\item $\Pr(\mbox{$v$ is colored}) > 1/4$.\label{part:OneShotColoring:a}
\item $\displaystyle\Pr(\Happy_v) > 1 - \exp\paren{-\fr{\deg_U(v)+1}{32}}$.\label{part:OneShotColoring:b}
\end{enumerate}
\end{lemma}

\subsection{A $(\deg+1)$-Coloring Algorithm}\label{sect:VertexColoring}

It goes without saying that our $\degColoring$ algorithm (Figure~\ref{alg:degColoring}) has a two-phase structure.
The ultimate goal of Phase I is to reduce the global problem to some number
of independent $(\deg+1)$-coloring subproblems, each on $\poly(\log n)$-size components,
which can be colored deterministically in Phase II.
We first prove that this is possible with $O(\log\log n)$ applications of $\OneShotColoring$,
{\em if} the uncolored subgraph already has maximum degree $\poly(\log n)$.

\begin{lemma}\label{lem:OneShotColoring-component-size}
Apply an arbitrary proper {\em partial} coloring to $G$, and let
$\hat{\Delta}$ be the maximum degree in the subgraph induced by uncolored nodes.
After $5\log_{4/3}\hat{\Delta}$ iterations of $\OneShotColoring$, 
all uncolored components have less than $t\hat{\Delta}^2$ nodes
with probability $1-n^{-c}$, where $t \bydef c\log_{\hat{\Delta}} n$.
\end{lemma}

\begin{proof}
The proof is similar to that of Lemma~\ref{lem:IndependentSet-properties} and Lemma~\ref{lem:size-bound}.  Whether a node 
is colored depends only on the color choices of nodes in its inclusive neighborhood.
Thus, if two nodes are at distance at least 3, their coloring events are independent.
Let $T\subset U$ be a distance-3 set, that is, one for which (i) $|T|=t=c\log_{\hat{\Delta}} n$,
(ii) the distance between each pair of nodes is at least 3, and (iii) $T$ forms a tree in the uncolored part of $G^3$.
There are less than $4^t\cdot n \cdot \hat{\Delta}^{3(t-1)} < n^{4c}$ distance-3 sets and the probability that one
is entirely uncolored after $5\log_{4/3}\hat{\Delta}$ iterations of $\OneShotColoring$ 
is, by Lemma~\ref{lem:OneShotColoring}, less than 
\[
\paren{\fr{3}{4}}^{5t\log_{4/3}\hat{\Delta}} = \paren{\fr{3}{4}}^{5(c\log_{\hat{\Delta}} n)\log_{4/3}\hat{\Delta}} = n^{-5c}.
\]
By a union bound, no distance-3 set exists with probability $n^{4c - 5c}=n^{-c}$.
Moreover, if there were an uncolored component with size $t\hat{\Delta}^2$ after $5\log_{4/3}\hat{\Delta}$ iterations of $\OneShotColoring$, 
it would have to contain such a distance-3 set.
\end{proof}

Lemma~\ref{lem:OneShotColoring-component-size} implies a $(\deg+1)$-coloring algorithm
running in $O(\log\Delta + \exp(O(\sqrt{\log(\Delta^2\log n)})))$ time.  Once the component size
is less than $\Delta^2\log n$ we can apply the deterministic 
$(\deg+1)$-coloring algorithm of Panconesi and Srinivasan~\cite{PanconesiS96}
to each uncolored component.
The exponential dependence on $\sqrt{\log \Delta}$ is undesirable.
Using Lemma~\ref{lem:OneShotColoring}
we show that, roughly speaking, degrees decay geometrically with each call to $\OneShotColoring$, 
with high probability.  This will allow us to reduce the dependence on $n$ to $\exp(O(\sqrt{\log\log n}))$.

\begin{lemma}\label{lem:highdegreeneighbors}
Define $U^{\hi} = \{u\in U \;|\; \deg_U(u) > \hat{\Delta}\}$ to be those 
{\em high degree} uncolored nodes, 
where $\hat{\Delta} \bydef c\ln n$.
Let $U_0$ and $U_1$ be the uncolored nodes before and after a particular call to $\OneShotColoring$.
Let $\Happy \bydef \bigcap_{v\in U_0^{\hi}} \Happy_v$ be the event that all $U_0^{\hi}$ nodes are happy.
\begin{enumerate}[leftmargin=1cm]
\item $\Pr(\overline{\Happy}) < n^{-c/32+1}$.\label{item:highdegreeneighbors-part1}
\item $\Pr\paren{\deg_{U_1^{\hi}}(v) \le \f{15}{16}\cdot \deg_{U_0^{\hi}}(v)} > 1 - n^{-c/512} - n^{-c/32+1}$.\label{item:highdegreeneighbors-part2}
\end{enumerate}
\end{lemma}

\begin{proof}
By Lemma~\ref{lem:OneShotColoring}(\ref{part:OneShotColoring:b}), the definition of $\hat{\Delta}=c\ln n$, 
and the union bound,
\[
\Pr(\overline{\Happy}) \;<\; |U_0^{\hi}| \cdot \exp\paren{-\frac{\hat{\Delta}+1}{32}} \;<\; n^{-c/32+1}.
\]
In other words, with high probability, every vertex in $U_0^{\hi}$ has a 1/8 fraction of its palette available to it.

Turning to Part~\ref{item:highdegreeneighbors-part2}, fix any vertex $v\in U_0^{\hi}$.  There are two ways 
a neighbor of $v$ in $U_0^{\hi}$
can fail to be a neighbor in $U_1^{\hi}$ after this call to $\OneShotColoring$.  It can either be colored
(in which case it is not in $U_1$) or a sufficient number of its neighbors can be colored so that it is no 
longer in $U_1^{\hi}$.  We ignore the second possibility and analyze the number of neighbors of 
$v$ in $U_0^{\hi}$ that are colored.   List the nodes of 
$\Gamma_{U_0^{\hi}}(v)$ in decreasing order of ID as $u_1,\ldots,u_{\deg_{U_0^{\hi}}(v)}$.
At step 0 we expose $\Color^\star(u)$ for all $u\not\in \Gamma_{U_0^{\hi}}(v)$
and at step $i$ we expose $\Color^\star(u_i)$.  Let $\mathbf{Y}_i$ be the information exposed after step $i$.
Whether $u_i$ is successfully colored is a function of $\mathbf{Y}_i$.  Moreover, the probability that $u_i$ is colored,
given $\mathbf{Y}_{i-1}$, is precisely the fraction of its palette that is still available, according to $\mathbf{Y}_{i-1}$.
Let $X_i \in \{0,1\}$ be the indicator variable for the event that $u_i$ is colored and $X = \sum_i X_i$.
Unless the unlikely event $\overline{\Happy}$ occurs, 
\[
\Pr(X_i = 1 \;|\; \mathbf{Y}_{i-1}) = \Pr(\mbox{$u_i$ is colored} \;|\; \mathbf{Y}_{i-1})  \ge 1/8,
\]
and by Corollary~\ref{cor:Azuma-Hoeffding}, 
\[
\Pr(X < \fr{1}{16}\deg_{U_0^{\hi}}(v) \;|\;\Happy) 
< \exp\paren{-\frac{(\fr{1}{16}\deg_{U_0^{\hi}}(v))^2}{2\deg_{U_0^{\hi}}(v)}}
= \exp\paren{-\frac{1}{512}\deg_{U_0^{\hi}}(v)}
\le n^{-c/512}.
\]
Thus, by a union bound, $\deg_{U_1^{\hi}}(v) \le \fr{15}{16} \deg_{U_0^{\hi}}(v)$ holds for {\em all} $v\in U_0^{\hi}$,
with probability $1 - n^{-c/512 + 1} - n^{-c/32+1}$.
\end{proof}

Lemma~\ref{lem:highdegreeneighbors} implies that after $\log_{16/15} \Delta$ iterations
of $\OneShotColoring$, with high probability 
no node has $\hat{\Delta} = c\ln n$ uncolored neighbors, each having $\hat{\Delta}$ uncolored neighbors.
At this point we break the remaining $(\deg+1)$-coloring problem into two 
subproblems with maximum degree $\hat{\Delta}$.
The first subproblem is on the graph induced by
$U^{\hi}$, the second is on $U \backslash U^{\hi}$.  
The maximum degree in $U^{\hi}$ is $\hat{\Delta}$, by the observation above,
and the maximum degree in $U\backslash U^{\hi}$ is $\hat{\Delta}$ by definition.
According to Lemma~\ref{lem:OneShotColoring-component-size}, after 
$O(\log\hat{\Delta})=O(\log\log n)$ more iterations of $\OneShotColoring$, the size of all uncolored components
is less than $s = \hat{\Delta}^2\cdot c\log_{\hat{\Delta}} n < \hat{\Delta}^3$.  
Each can be $(\deg+1)$-colored deterministically in $\exp(O(\sqrt{\log s})) = \exp(O(\sqrt{\log \log n}))$ time
using the algorithm of Panconesi and Srinivasan~\cite{PanconesiS96}.
The failure probability of the $\degColoring$ algorithm (see Figure~\ref{alg:degColoring} for pseudocode) 
is therefore $O(n^{-c/515+2})$.

\begin{figure}
\centering\framebox{\hcm[.1]
\begin{minipage}{5in}
$\degColoring(\mbox{Graph } G)$
\begin{enumerate}
\item[] {\bf Phase I:}
\item Initialize $\Color(v) \leftarrow \: \bottom$, for all $v\in V(G)$.

\item Repeat $\log_{16/15} \Delta$ times:\\
	$\OneShotColoring(G,\Color)$ 

\item Form high-degree and low-degree graphs.
\begin{align*}
U &\leftarrow \{v\in V(G) \;|\; \Color(v)=\: \bottom\}			& \mbox{uncolored nodes}\\
U^{\hi} &\leftarrow \{v\in U \;|\; \deg_{U}(v) > \hat{\Delta} \bydef c\ln n\}	 \hcm & \mbox{high-degree nodes}\\
G^{\hi} &\leftarrow \mbox{ the graph induced by $U^{\hi}$}			\\
G^{\lo} &\leftarrow \mbox{ the graph induced by $U \backslash U^{\hi}$} 
\end{align*}
\item Repeat $5\log_{4/3}\hat{\Delta}$ times:\\
	$\OneShotColoring(G^{\hi},\Color)$
\item Repeat $5\log_{4/3}\hat{\Delta}$ times:\\
	$\OneShotColoring(G^{\lo},\Color)$\\
\item[] {\bf Phase II:}
\item Color all remaining uncolored components of $G^{\hi}$ with size at most $\hat{\Delta}^3$.
\item Color all remaining uncolored components of $G^{\lo}$ with size at most $\hat{\Delta}^3$.
\end{enumerate}
\end{minipage}
}
\caption{\label{alg:degColoring}}
\end{figure}

\section{Ruling Sets}\label{sect:ruling-sets}

The $(2,\beta)$ ruling set algorithm of Bisht, Kothapalli, and Pemmaraju~\cite{BishtKP14}
works as follows.  Given a graph $G=(V,E)$ with maximum degree $\Delta$, the algorithm generates a 
series of
node sets $V(G) = R_0 \supseteq R_1 \supseteq \cdots \supseteq R_{\beta-1} \supseteq R_\beta$
with three properties, namely
\begin{enumerate}[leftmargin=1cm,label=\roman*.]
\item $R_i$ dominates $R_{i-1}$, that is, $\hat\Gamma(R_i) \supseteq R_{i-1}$, 
\item the maximum degree in the graph induced by $R_i$ is $\Delta_i \approx 2^{(\log\Delta)^{1-i\epsilon}}$, and
\item $R_\beta$ is an MIS in the graph induced by $R_{\beta-1}$.
\end{enumerate}
Property (i) implies that for all $v\in V(G)$, $\dist(v,R_\beta) \le \beta$.
Together with Property (iii) this implies that $R_\beta$ is a $(2,\beta)$-ruling set.

Using our MIS algorithm, the time to compute $R_\beta$ from $R_{\beta-1}$ 
is $O(\log^2 \Delta_{\beta-1} + \exp(O(\sqrt{\log\log n}))) = O(\log^{2(1-(\beta-1)\epsilon)} \Delta + \exp(O(\sqrt{\log\log n}))$, so we want to make $\epsilon$ as large as possible.
On the other hand, the time to compute $R_i$ from $R_{i-1}$ is 
$O(\log_{\Delta_i} \Delta_{i-1}) = O(\log^\epsilon \Delta)$.
Balancing these costs we get a time bound of $O\paren{\beta\log^{\frac{1}{\beta - 1/2}}\Delta + \exp(O(\sqrt{\log\log n}))}$
using messages with length $\poly(\Delta_{\beta-1})\log n$.  
The improvement over Bisht et al.'s~\cite{BishtKP14} time bound (namely, $O(\log^{\frac{1}{\beta-1}} \Delta + \exp(O(\sqrt{\log\log n})))$) comes solely from a better MIS algorithm.

The algorithm for computing $R_i$ from $R_{i-1}$ (which satisfies Properties (i) and (ii)) 
was first described by Kothapalli and Pemmaraju~\cite{KothapalliP12}.
For the sake of completeness we reproduce this sparsification algorithm and its analysis.
Refer to Figure~\ref{alg:Sparsify} for the pseudocode of $\Sparsify$ and $(2,\beta)$-$\RulingSet$.

\begin{lemma} (Kothapalli and Pemmaraju~\cite{KothapalliP12})
Given $G=(V,E)$ and a threshold $f$, a subset $U\subseteq V$ can be computed in 
$O(\log_f \Delta)$ time such that for every $v\in V(G)$, $\dist_G(v,U) \le 1$, and for every $v\in U$, $\deg_U(v) \le 2cf\ln n$,
with probability $n^{-c+2}$.
\end{lemma}

\begin{figure}
\centering
\begin{tabular}{c}
\framebox{\hcm[.1]
\begin{minipage}{4.8in}
$\Sparsify(\mbox{Graph $G$, Integer $f$})$
\begin{enumerate}
	\item Initialize $U \leftarrow \emptyset$.
	\item For $i$ from 1 to $\log_f \Delta$,
		\begin{enumerate}
			\item For each $v \in V(G) \backslash \hat\Gamma(U)$, independently, and in parallel:\\
			Set $U\leftarrow U\cup \{v\}$ with probability $(c\ln n) f^i/\Delta$.
		\end{enumerate}
	\item Return $U$.
\end{enumerate}
\end{minipage}
}
\\
\ \\
\framebox{\hcm[.1]
\begin{minipage}{4.8in}
$(2,\beta)$-$\RulingSet(\mbox{Graph $G$})$
\begin{enumerate}
	\item $R_0 \leftarrow V(G)$
	\item For $i$ from 1 to $\beta-1$\\
			$R_i \leftarrow \Sparsify(G_{i-1}, f_i)$, where $G_{i-1}$ is the graph induced by $R_{i-1}$.
	\item $R_\beta \leftarrow \MIS(G_{\beta-1})$
	\item Return$(R_\beta)$
\end{enumerate}
\end{minipage}
}
\end{tabular}
\caption{\label{alg:Sparsify}}
\end{figure}

\begin{proof}
Consider an execution of $\Sparsify(G,f)$.  Let $U_i$ be $U$ after the $i$th iteration of the loop and $V_i \bydef V \backslash \hat\Gamma(U_{i})$.
Assume, inductively, that just before the $i$th iteration 
the maximum degrees in the graphs induced by 
$V_{i-1}$  and $U_{i-1}$ 
are at most $\Delta/f^{i-1}$ and $f\cdot 2c\ln n$.
These bounds hold trivially when $i=1$.
Each $v\in V_{i-1}$ is included in $U_{i}$ independently with probability $c\ln n f^{i}/\Delta$, 
so the probability that a $v\in V_{i-1}$ with $\deg_{V_{i-1}}(v) > \Delta/f^i$ is not in $\hat\Gamma(U_{i})$ is less than
$(1-c\ln n f^i/\Delta)^{\Delta/f^i} < n^{-c}$.  Furthermore, if $v\in U_{i}$, 
\[
\E[\deg_{U_i}(v)] = \deg_{V_{i-1}}(v) \cdot c\ln n f^i/\Delta \le cf\ln n.  
\]
By Theorem~\ref{thm:stdChernoff}, the probability that $\deg_{U_i}(v) \ge 2cf\ln n$ is at most $\exp(-fc\ln n/3) < n^{-c}$.
Note that since $v$ and its neighborhood are permanently removed from consideration, 
it never acquires new neighbors in $U$, so $\deg_{U_{i}}(v) = \deg_{U}(v)$.
Thus, with high probability the induction hypothesis holds for the next iteration.
\end{proof}

\begin{theorem}\label{thm:rulingset}
A $(2,\beta)$-ruling set can be computed in $O(\beta \log^{\frac{1}{\beta-1/2}} \Delta + \exp(O(\sqrt{\log\log n})))$ time with high probability.
\end{theorem}

\begin{proof}
The algorithm simply consists of $\beta-1$ calls to $\Sparsify$ followed by a call to $\MIS$.
Every node in $R_{i-1}$ is in or adjacent to $R_i$, for $1\le i < \beta$, which implies that
$\dist(v,R_\beta) \le \beta$ for all $v\in V$.  Since $R_\beta$ is an independent set it is also a $(2,\beta)$-ruling set.
The time to compute $R_\beta$ is on the order of
\[
\frac{\log\Delta}{\log f_1}
+ \frac{\log(f_1\log n)}{\log f_2}
+\cdots
+ \frac{\log(f_{\beta-2}\log n)}{\log f_{\beta-1}}
+ \log^2(f_{\beta-1}\log n) + \exp(O(\sqrt{\log\log n})).
\]
Setting $\log f_i = (\log\Delta)^{1-i(\frac{2}{2\beta-1})}$, the time for each call to $\Sparsify$ 
is $O((\log \Delta)^{\frac{2}{2\beta-1}})$ and the time for the final $\MIS$ is 
$\exp(O(\sqrt{\log\log n}))$ plus 
\[
\log^2 f_{\beta-1} = (\log\Delta)^{2\paren{1-(\beta-1)\frac{2}{2\beta-1}}} = (\log\Delta)^{\frac{2}{2\beta-1}}.
\]
\end{proof}

Theorem~\ref{thm:rulingset} highlights an intriguing open problem.  Together with the KMW lower bound,
it shows that $(2,2)$-ruling sets
are provably easier to compute than $(2,1)$-ruling sets, the upper bound for the former being
$O(\log^{2/3}\Delta + \exp(O(\sqrt{\log\log n})))$ and the lower bound on the latter being $\Omega(\log\Delta)$. 
Is it possible to obtain any non-trivial lower bound on the complexity of computing
$(2,\beta)$-ruling sets for some $\beta > 1$?  In order to apply~\cite{KuhnMW10} one would need to 
invent a reduction from $O(1)$-approximate minimum vertex cover to $(2,\beta)$-ruling sets.

\section{Bounded Arboricity Graphs}\label{sect:arb}

Recall that a graph has arboricity $\lambda$ if its edge set is the union of $\lambda$ forests.
In the proofs of Lemma~\ref{lem:arb} and Theorem~\ref{thm:arb-to-bounded-degree},
$\deg_{E'}(u)$ is the number of edges incident to $u$ in $E'\subseteq E$
and $\deg_{V'}(u)$ is the number of neighbors of $u$ in $V'\subseteq V$.

\begin{lemma} \label{lem:arb}
Let $G$ be a graph of $m$ edges, $n$ nodes, and arboricity $\lambda$.
\begin{enumerate}
\item $m < \lambda n$.
\label{arb1}
\item The number of nodes with degree at least $t\ge \lambda+1$ is less than $\lambda n/(t-\lambda)$.
\label{arb2}
\item The number of edges whose endpoints both have degree at least $t\ge \lambda+1$ is less than $\lambda m/(t-\lambda)$.
\label{arb3}
\end{enumerate}
\end{lemma}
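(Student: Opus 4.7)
The plan is to leverage one structural fact about arboricity-$\lambda$ graphs: such graphs admit an acyclic orientation with maximum out-degree at most $\lambda$. This follows immediately from a Nash-Williams forest decomposition: take any covering by $\lambda$ forests, choose a root in each tree, and orient every edge away from its root in its forest; each vertex then has out-degree at most $1$ in each forest, hence at most $\lambda$ in total. All three parts will fall out of this orientation together with simple double-counting.

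For part~\ref{arb1}, I would observe that each of the $\lambda$ forests in the decomposition has at most $n-1$ edges, so $m \le \lambda(n-1) \le \lambda n$. For part~\ref{arb2}, fix an acyclic orientation with max out-degree $\le \lambda$. Any vertex $v$ with $\deg(v) \ge t$ has in-degree at least $\deg(v) - \lambda \ge t - \lambda \ge 1$. Summing in-degrees over the set $V_{\ge t} = \{v : \deg(v) \ge t\}$, and using that the total in-degree in $G$ equals $m$,
\[
(t-\lambda)\,|V_{\ge t}| \;\le\; \sum_{v \in V_{\ge t}} \mathrm{indeg}(v) \;\le\; m \;\le\; \lambda n,
\]
which gives the desired bound $|V_{\ge t}| \le \lambda n/(t-\lambda)$.

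For part~\ref{arb3}, I would reuse the same orientation. Every edge $e \in E_{\ge t}$ has both endpoints in $V_{\ge t}$ and is oriented out of exactly one of them, and each $v \in V_{\ge t}$ contributes at most $\mathrm{outdeg}(v) \le \lambda$ such edges. Hence $|E_{\ge t}| \le \lambda\,|V_{\ge t}|$. The crucial observation is that the intermediate inequality $(t-\lambda)|V_{\ge t}| \le m$ derived in part~\ref{arb2} is strictly stronger than the stated bound of part~\ref{arb2}, and plugging it in yields $|E_{\ge t}| \le \lambda m / (t-\lambda)$, as required.

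I do not expect any serious obstacle: the only ``trick'' is remembering to use the out-degree-$\lambda$ orientation rather than Nash-Williams' density bound applied to the induced subgraph on $V_{\ge t}$ (the latter route gives only the weaker $|E_{\ge t}| \le \lambda^2 n/(t-\lambda)$ for part~\ref{arb3}). Once the orientation is in hand, parts~\ref{arb2} and \ref{arb3} are two lines of bookkeeping each.
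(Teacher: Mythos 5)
Your proof is correct, and it ultimately establishes the same two facts as the paper---the intermediate inequality $(t-\lambda)\,|U| \le m$ for the set $U$ of vertices of degree at least $t$, and $|E(U)| \le \lambda |U|$---but by a different bookkeeping device. The paper argues directly from the Nash-Williams density bound: it lower-bounds $m$ by the edges leaving $U$ plus the edges inside $U$, obtaining $m \ge t|U| - |E(U)| \ge (t-\lambda)|U|$, and then gets part (3) from $|E(U)| \le \lambda|U| \le \lambda m/(t-\lambda)$. You instead fix an orientation of the whole graph with out-degree at most $\lambda$ and charge: each high-degree vertex has in-degree at least $t-\lambda$ while total in-degree equals $m$, giving the same intermediate inequality, and each edge inside $U$ is charged to the out-degree of one of its endpoints, giving $|E(U)|\le\lambda|U|$. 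The orientation route is slightly cleaner once the orientation is in hand, at the cost of extracting it from the forest decomposition first. Two small corrections: to get out-degree at most $1$ per forest you must orient each tree's edges \emph{toward} its root (each non-root vertex sends only its parent edge outward); orienting away from the root, as you wrote, gives out-degree equal to the number of children. Also, your closing parenthetical is inaccurate: the density-bound route is not limited to $\lambda^2 n/(t-\lambda)$---the paper combines $|E(U)|\le\lambda|U|$ with the intermediate bound $|U|\le m/(t-\lambda)$ (rather than the stated part (2) bound) and obtains $\lambda m/(t-\lambda)$ as well; the gain comes from using that intermediate inequality, which you yourself identified as the crucial point, not from the orientation.
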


\begin{proof}
Part \ref{arb1} follows from the definition of arboricity.
For Parts \ref{arb2} and \ref{arb3}, let $U = \{v \;|\; \deg_G(v) \ge t\}$ be the set of high-degree nodes.
We have that 
\begin{align*}
\lambda n > m &\ge |\{(u,v)\in E(G) \;|\; \mbox{$u\in U$ or $v\in U$ or both}\}|\\
			&\ge \sum_{u \in U}(t - \deg_U(u)) + \frac{1}{2} \sum_{u \in U} \deg_U(u)\\
			&\geq t \cdot |U| - |E(U)| > (t - \lambda)\cdot |U|.
\end{align*}
Thus $|U| < \lambda n/(t-\lambda)$, proving Part~\ref{arb2}. 
Part~\ref{arb3} follows since the number of such edges is less than $\lambda |U| \le \lambda m/(t-\lambda)$.
\end{proof}

\begin{theorem}\label{thm:arb-to-bounded-degree}
Let $G$ be a graph of arboricity $\lambda$ and maximum degree $\Delta$, and let 
$t\ge \max\{(5\lambda)^8,(4(c+1)\ln n)^7\}$ be a parameter.  
In $O(\log_t \Delta)$ time we can find a matching $M\subseteq E(G)$
(or an independent set $I\subseteq V(G)$) such that with probability at least $1-n^{-c}$, 
the maximum degree in the graph induced by $V\backslash V(M)$ (or the graph induced by $V\backslash \hat\Gamma(I))$) 
is at most $t\lambda$.
\end{theorem}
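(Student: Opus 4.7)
My plan is to iterate a simple random matching (respectively, independent-set selection) procedure for $r = \Theta(\log_t n)$ rounds, showing via Lemma~\ref{lem:arb} that the number of ``high-degree'' vertices (those with residual degree $> t\lambda$) shrinks by a factor of approximately $t$ per round. Let $H_i$ denote the set of residual vertices with degree $>t\lambda$ after round $i$. Since the residual graph retains arboricity at most $\lambda$, Lemma~\ref{lem:arb}(2) and (3) applied with threshold $T=t\lambda$ give $|H_i| \le \lambda n/(t\lambda-\lambda) \le 2n/t$, and the number of edges with both endpoints in $H_i$ is at most $\lambda m/(t\lambda-\lambda) \le 2\lambda n/t$. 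The total degree of $H_i$-vertices is at least $t\lambda\cdot|H_i|$, so the fraction of their incident edges going to other $H_i$-vertices is $O(1/t)$. Equivalently, a uniformly random neighbor of a ``typical'' high-degree vertex is low-degree with probability $1-O(1/t)$.

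For the matching version, in each round every $v$ with current degree $>t\lambda$ proposes to a uniformly random neighbor, and each recipient accepts at most one proposer via some symmetry-breaking rule (highest ID plus a random bit, as in the Match subroutine of Section~\ref{sec:MMalg}). Conditioned on $v$'s proposal landing on a low-degree $u$, a further application of Lemma~\ref{lem:arb} shows that the expected number of competing proposers at $u$ is $O(1)$, so $v$ gets matched with probability $1-O(1/t)$ in one round. The IS version is symmetric: in each round low-degree vertices attempt to join $I$ (selecting themselves with probability $\approx 1/(2\lambda+1)$, joining if no neighbor selects), and each successful inclusion inactivates up to $t\lambda$ high-degree neighbors, yielding the same per-round reduction rate for $\hat\Gamma(I)$-coverage of $H_i$.

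To reach $O(\log_t n)$ rounds rather than the naive $O(\log n)$, I plan to argue that $\E[|H_{i+1}|] \le |H_i|/t$ and lift this to a w.h.p.\ bound via a Chernoff/Azuma concentration inequality over the $|H_i|$ largely-independent matching attempts, exploiting the fact that the dependency radius of each attempt is bounded (so the attempts split into $O(1)$ chromatic classes of independent events). The main obstacle is justifying the factor-$t$ per-round rate: achieving $1-O(1/t)$ success probability per vertex requires both an average-case bound (on the fraction of high-high incidences, from Lemma~\ref{lem:arb}) and a Markov-style argument to confine the ``unlucky'' high-degree vertices with many high-degree neighbors to a small subset that can be re-processed in subsequent rounds. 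The hypothesis $t \ge \lambda^8$ provides polynomial slack to absorb second-moment terms and the loss from iterated Markov bounds, while $t \ge (4(c+1)\ln n)^7$ ensures that the per-round failure probability is at most $1/\poly(n)$, so that a union bound over the $O(\log_t n)$ rounds and the $n$ vertices yields overall failure probability at most $1/n^c$.
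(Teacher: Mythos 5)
Your high-level plan (iteratively shrink the set of residual degree-$>t\lambda$ vertices by a $\poly(t)$ factor per round, using Lemma~\ref{lem:arb} to bound high--high incidences) matches the paper's skeleton, but the mechanism you propose cannot deliver the per-round rate you claim, and this is the heart of the theorem. With your rule---each high-degree vertex proposes to one uniformly random neighbor and the recipient accepts a single proposer---a high-degree vertex is eliminated in a round with at best \emph{constant} probability, not $1-O(1/t)$: even conditioned on the proposal landing on a low-degree neighbor $u$, the vertex must win the contention at $u$, and a single proposal with constant acceptance probability caps the success at $O(1)$. Constant per-vertex elimination probability yields only constant-factor expected shrinkage of $|H_i|$, hence $\Theta(\log n)$ rounds, not $O(\log_t n)$. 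The paper gets around this by reversing the direction: it isolates $\High'$ (high vertices with $\ge t\lambda/2$ low-degree neighbors), prunes ``bad'' low-degree neighbors with $\tilde E$-degree $\ge \beta=t^{1/7}$ or induced $\Shell$-degree $\ge\beta^2$ (few, by Lemma~\ref{lem:arb}(2),(3)), and has the surviving low-degree neighbors propose (or run a local-max lottery with induced degree $<\beta^2$). A good high vertex then has $\ge t\lambda/4$ (resp.\ $\ge t\lambda/(4\beta^4)$ pairwise-distant, hence independent) chances, so it survives a round with probability at most $e^{-t^{6/7}\lambda/4}$ (resp.\ $e^{-t^{1/7}\lambda/4}$) $\le 1/n^{c+1}$, using precisely the hypothesis $t\ge(4(c+1)\ln n)^7$. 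Your independent-set variant has the same defect in sharper form: selecting with probability $\approx 1/(2\lambda+1)$ and joining if no neighbor selects fails because your ``low-degree'' vertices can have degree up to $t\lambda$, so their join probability is roughly $e^{-\Theta(t)}$; one must restrict the lottery to a set of small induced degree, as the paper does with $\Shell\setminus B_\Shell$.

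A second, related gap is the plan to lift expected shrinkage to a w.h.p.\ bound by Chernoff/Azuma over the attempts. That works only while the expected number of survivors exceeds $\Theta(\log n)$; in the endgame, when $|H_i|$ is polylogarithmic or smaller, concentration cannot certify a further $t^{\Omega(1)}$-factor drop, and with per-vertex failure probability as large as $\Theta(1/t)$ (let alone constant) you cannot union-bound away the last stragglers within $O(\log_t n)$ rounds. The paper needs no concentration over the set at all: since every \emph{good} $\High'$-vertex is eliminated except with probability $1/n^{c+1}$, the survivors are (on the w.h.p.\ event) deterministically confined to $\I\cup B_{\High'}$, whose size is bounded by counting via Lemma~\ref{lem:arb} at $O(\lambda|\High|/t^{1/7}) = |\High|/t^{\Omega(1)}$ (here $t\ge\lambda^8$ absorbs the $\lambda$ factor). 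Your Markov-style deferral of ``unlucky'' vertices gestures at the $\I$/bad-vertex decomposition, but as stated it also breaks your claimed recursion $\E[|H_{i+1}|]\le |H_i|/t$, since deferred vertices survive with probability $1$ in that round. A smaller slip: to bound high--high edges in later rounds you should apply Lemma~\ref{lem:arb}(1) to the subgraph induced by $H_i$ (giving $\le\lambda|H_i|$), since the bound $\lambda m/(t\lambda-\lambda)$ with $m\le\lambda n$ does not scale down with $|H_i|$.
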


\begin{proof}
In $O(\log_t n)$ rounds we commit edges to $M$ (or nodes to $I$) and remove all incident edges (or incident nodes).  Let $G$ be the graph still under consideration before some round and let 
$\High = \{v \in V \;|\; \deg_G(v) \ge t\lambda\}$ be the remaining high-degree nodes.  Our goal is to reduce the size of $\High$ by a roughly $t^{1/7}$ factor.
Let $\I = \{v \in \High \;|\; \deg_\High(v) \ge t\lambda/2\}$.  It follows that any node 
$v \in \High' \bydef \High\backslash \I$ has $\deg_{V\backslash \High}(v) \ge t\lambda/2$ since at most $t\lambda/2$ of its neighbors can be in $\High$.  
Let $\tilde{E}$ be any set of edges crossing the cut $(\High,V\backslash \High)$
such that for $v\in \High'$, $\deg_{\tilde{E}}(v) = t\lambda/2$.  In other words, discard all but $t\lambda/2$ edges incident to each $\High'$ node arbitrarily. 
Let $\Shell = \{u \;|\; v\in \High' \mbox{ and } (v,u) \in \tilde{E}\}$ be the neighborhood of 
$\High'$ with respect to $\tilde{E}$.  Note that $|\Shell| \le t\lambda |\High'|/2$.
See Figure~\ref{fig:arb}.
\begin{figure}[h]
\centering
\scalebox{.47}{\includegraphics{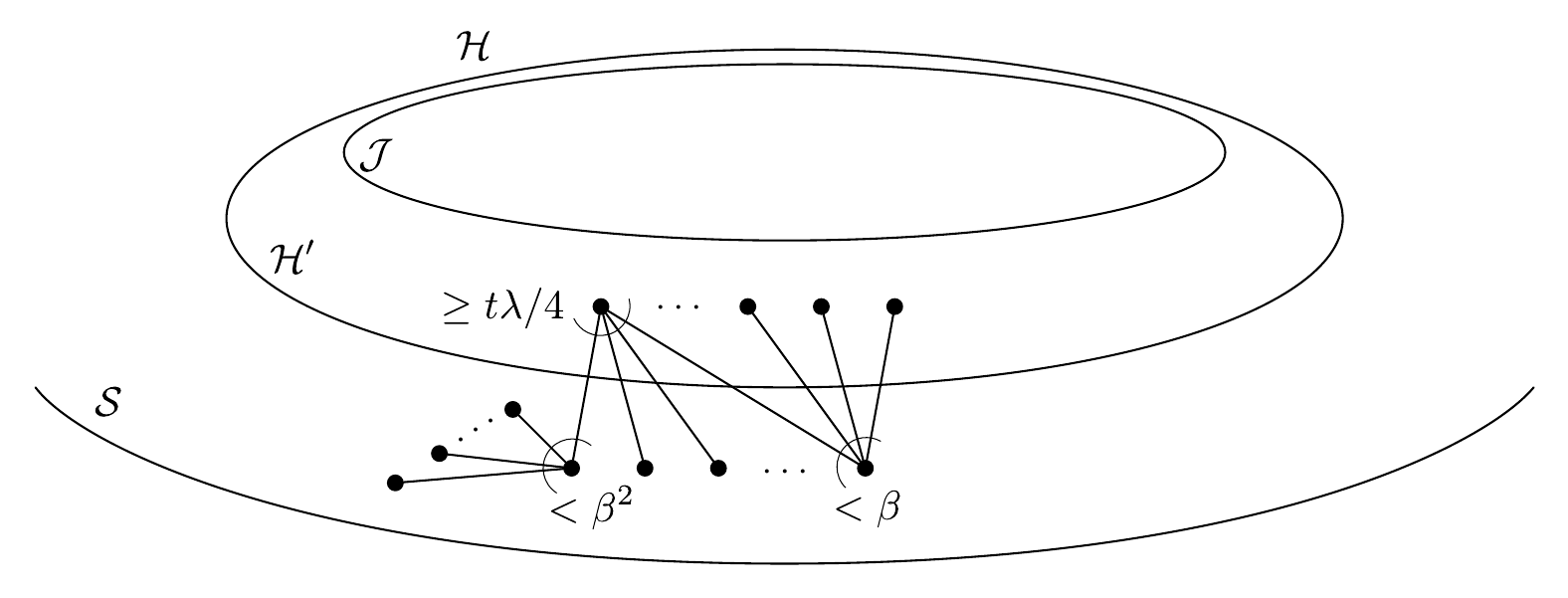}}
\caption{\label{fig:arb}
Good $\Shell$-nodes have fewer than $\beta$ neighbors in $\High'$ and fewer
than $\beta^2$ neighbors in $\Shell$.
Good $\High'$-nodes have at least $t\lambda/4$ good neighbors
in $\Shell$.}
\end{figure}
We define bad $\Shell$-nodes, bad $\tilde E$-edges, and bad $\High'$-nodes as follows, where $\beta = t^{1/7}$.
\begin{align*}
B_\Shell &= \left\{u \in\Shell \;|\; \mbox{ $\deg_{\tilde{E}}(u) \ge \beta$ or $\deg_{\Shell}(u) \ge \beta^2$ or both}\right\},\\
B_{\tilde E} &= \left\{\left.(u,v)\in\tilde E \;\right|\; u\in B_\Shell\right\},\\
\mbox{and } \; B_{\High'} &= \left\{v \in \High' \;\left|\; \deg_{B_{\tilde E}}(v) > \lambda t/4\right.\right\}.
\end{align*}
By Lemma~\ref{lem:arb}(\ref{arb3}) the number of edges $(u,v) \in B_{\tilde E}$ designated bad because $\deg_{\tilde E}(u) \ge \beta$ is less than $\lambda|\tilde{E}|/(\beta-\lambda)$. By Lemma~\ref{lem:arb}(\ref{arb2}) the number of {\em additional} edges $(u,v) \in B_{\tilde E}$ designated bad because 
$\deg_{\Shell}(u) \ge \beta^2$
is at most $(\beta-1) \lambda |\Shell|/(\beta^2 - \lambda)$ since there are less than $\lambda |\Shell|/(\beta^2-\lambda)$ such nodes and each contributes fewer than $\beta$ edges to $\tilde{E}$.
In total we have
\begin{align*}
|B_{\tilde{E}}| &< \frac{\lambda|\tilde{E}|}{\beta-\lambda} + \frac{(\beta-1) \lambda |\Shell|}{\beta^2 - \lambda} \\
		&\le \frac{\lambda(t\lambda|\High'|/2)}{\beta-\lambda} + \frac{(\beta-1)\lambda(t\lambda |\High'|/2)}{\beta^2-\lambda}			& \mbox{\{$|\Shell| \le |\tilde{E}| = t\lambda|\High'|/2$\}}\\
		&= \lambda^2t|\High'|\paren{\frac{1}{2(\beta-\lambda)} + \frac{\beta-1}{2(\beta^2-\lambda)}}\\
		&< \frac{\lambda^2t|\High'|}{\beta-\lambda}.
\end{align*}
A bad $v\in\High'$ is incident to more than $t\lambda/4$ edges in $B_{\tilde E}$, 
so
\begin{equation}\label{eqn:bad-high-degree}
|B_{\High'}| < \frac{|B_{\tilde E}|}{t\lambda / 4} < \frac{4\lambda |\High'|}{\beta-\lambda}.
\end{equation}

Our goal now is to select some nodes for the MIS (or edges for the maximal matching) that eliminate all good $\High'$ nodes, with high probability.
Each node $u\in \Shell \backslash B_{\Shell}$ selects a random real in $(0,1)$ and joins the MIS if it holds a local maximum.
The probability that $u$ joins the MIS is $1/(\deg_{\Shell \backslash B_{\Shell}}(u)+1) \ge 1/\beta^2$ and this event is clearly independent of all $u'\in\Shell\backslash B_{\Shell}$ at distance (in $\Shell\backslash B_{\Shell}$) at least 3.  Note that since the maximum degree in the graph induced by $\Shell\backslash B_{\Shell}$ is less than $\beta^2$, the neighborhood
of any good $v\in \High'\backslash B_{\High'}$ contains a subset of at least $(t\lambda/4) / \beta^4$ nodes, each pair of which is at distance at least 3 with respect to $\Shell \backslash B_{\Shell}$.  (Such a set could be selected greedily.)  Thus, the probability that no neighbor of $v\in\High'\backslash B_{\High'}$ joins the MIS is at most
\[
\paren{1-\frac{1}{\beta^2}}^{t\lambda/(4\beta^4)} < e^{-t\lambda / (4\beta^6)} = e^{-t^{1/7}\lambda / 4} \le 1/n^{c+1}.
\]
Thus, with high probability all good nodes $\High'\backslash B_{\High'}$ are eliminated.  Any remaining high degree nodes must be in either $\I$ or $B_{\High'}$.  By Lemma~\ref{lem:arb} and (\ref{eqn:bad-high-degree}), 
\begin{align*}
|\I| + |B_{\High'}| < \frac{\lambda |\High|}{t/2 - \lambda} + \frac{4\lambda |\High'|}{\beta-\lambda} < \frac{5\lambda|\High|}{\beta-\lambda}.
\end{align*}
Since $\beta = t^{1/7} \ge (5\lambda)^{8/7}$, the number of high-degree nodes is reduced by a $t^{\Omega(1)}$ factor.  Thus, after $O(\log_t \Delta)$ time all high-degree nodes have been eliminated with probability $1-1/n^c$.

In the case of maximal matching we want to select a matching that matches all $\High'$ nodes.
Each $u\in \Shell\backslash B_{\Shell}$ chooses an edge $(u,v) \in \tilde{E}\backslash B_{\tilde{E}}$ uniformly at random and proposes to $v$ that $(u,v)$ be included in the matching.  Any $v\in \High'\backslash B_{\High'}$ receiving a proposal accepts one arbitrarily and becomes matched.  
A good $v\in \High'\backslash B_{\High'}$ has at least $\deg_{\tilde{E}\backslash B_{\tilde{E}}}(v) \ge t\lambda/4$ neighbors $u\in \Shell\backslash B_{\Shell}$ with $\deg_{\tilde{E}\backslash B_{\tilde{E}}}(u) < \beta$, so the probability that $v$ receives no proposal (and remains unmatched) is less than 
$(1-1/\beta)^{t\lambda/4} < e^{-t^{6/7}\lambda/4} < o(1/n^{c+1})$.  As in the case of MIS, the number of high-degree nodes is reduced by a $t^{\Omega(1)}$ factor in $O(1)$ time.
(For the maximal matching problem our proof could be simplified somewhat since edges inside $\Shell$ play no part in the algorithm and need not be classified as good or bad.)
\end{proof}


\subsection{Consequences of Theorem~\ref{thm:arb-to-bounded-degree}}

Theorems~\ref{thm:arb-MM}, \ref{thm:arb-MIS}, \ref{thm:arb-coloring}, and \ref{thm:arb-rulingset} give new bounds
on the complexity of maximal matching, MIS, vertex coloring, and ruling sets in terms of $\lambda$.
Some results are a consequence of Theorem~\ref{thm:arb-to-bounded-degree}.
Others are obtained by combining the Phase I portion of our algorithms from Sections \ref{sect:MIS}--\ref{sect:Deltaplusone}
with one of the Barenboim-Elkin~\cite{BarenboimE10,BarenboimE11,BarenboimE13} algorithms for Phase II.

\begin{theorem}\label{thm:arb-MM}
In a graph with maximum degree $\Delta$ and arboricity $\lambda$, 
a maximal matching can be computed in time on the order of
\[
\min\left\{\;
\log \lambda + \sqrt{\log n}, \;\;
\log\Delta + \lambda + \log\log n
\; \right\}
\]
for all $\lambda$, and in time $O\paren{\log\Delta + \f{\log\log n}{\delta\log\log\log n}}$ when $\lambda = (\log\log n)^{1-\delta}$.
\end{theorem}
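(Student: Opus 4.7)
The three upper bounds will come from combining the arboricity-reduction of Theorem~\ref{thm:arb-to-bounded-degree} and the MM algorithm of Theorem~\ref{thm:MM} with appropriate deterministic MM subroutines in a Phase~II cleanup; the three algorithms can be run in parallel on top of one another and we report whichever terminates first.

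For the bound $O(\log\lambda + \sqrt{\log n})$, I would invoke Theorem~\ref{thm:arb-to-bounded-degree} with the threshold $t = \max\{\lambda^8,\, 2^{\sqrt{\log n}},\, (4(c+1)\ln n)^7\}$, which trivially meets its precondition. Its running time $O(\log_t n)$ is $O(\sqrt{\log n})$: either $\log t = \sqrt{\log n}$ already, or $\log t = 8\log\lambda \ge \sqrt{\log n}$ in which case $\log n/(8\log\lambda) \le \sqrt{\log n}$. The returned matching $M$ leaves a residual graph $G(V\setminus V(M))$ of maximum degree at most $t\lambda$, with $\log(t\lambda) = O(\log\lambda + \sqrt{\log n})$ in either regime. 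Running Theorem~\ref{thm:MM} on this residual extends $M$ to a maximal matching of $G$ in additional time $O(\log(t\lambda) + \log^4\log n) = O(\log\lambda + \sqrt{\log n})$, since the additive $\log^4\log n$ is $o(\sqrt{\log n})$.

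For the other two bounds, I would skip the reduction and instead run Phase~I of the Section~\ref{sec:MMalg} algorithm directly on $G$. Lemmas~\ref{lem:degree-bound} and~\ref{lem:size-bound} guarantee that in $O(\log\Delta)$ time the unmatched vertices induce connected components of size $s = \poly(\log n)$, and each such component inherits arboricity at most $\lambda$ since arboricity is hereditary under taking subgraphs. The bound $O(\log\Delta + \lambda + \log\log n)$ then follows by finishing each component with the $O(\lambda + \log s)$-time deterministic MM algorithm of Barenboim and Elkin~\cite{BE08}. The bound $O(\log\Delta + \log\log n/\log\log\log n)$ follows by finishing each component with the faster $O(\log s/\log\log s)$-time algorithm of~\cite{BE08}, whose precondition $\lambda < \log^{1-\epsilon} s$ on a graph of size $s = \poly(\log n)$ translates to exactly the stated hypothesis $\lambda < \log^{1-\Omega(1)}\log n$.

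The main step to watch is the parameter tradeoff in the first bound: the threshold $t$ must simultaneously make both $\log_t n$ and $\log(t\lambda)$ equal to $O(\log\lambda + \sqrt{\log n})$ across the entire range of $\lambda$, which is exactly what the choice $t = \max\{\lambda^8, 2^{\sqrt{\log n}}\}$ accomplishes (the $\lambda^8$ branch activates as soon as $\log\lambda$ exceeds $\sqrt{\log n}/8$). All other steps are direct invocations of existing machinery, and the union of matchings produced across the stages is maximal by a standard argument: any pair of adjacent unmatched vertices is ultimately passed to the same Phase~II subcall, which returns a maximal matching of its input.
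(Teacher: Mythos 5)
Your proposal is correct and follows essentially the same route as the paper: the first bound comes from Theorem~\ref{thm:arb-to-bounded-degree} with $t=\max\{\lambda^8,2^{\sqrt{\log n}}\}$ followed by Theorem~\ref{thm:MM}, and the other two bounds come from running Phase~I of the Section~\ref{sec:MMalg} algorithm and substituting the Barenboim--Elkin~\cite{BE08} deterministic MM routines (with $s=\poly(\log n)$) for~\cite{HanckowiakKP01} in Phase~II. Your added checks of the $\log_t n$ versus $\log(t\lambda)$ tradeoff and of maximality of the combined matching are consistent with, and slightly more explicit than, the paper's argument.
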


\begin{proof}
The first maximal matching 
bound is a consequence of Theorem~\ref{thm:arb-to-bounded-degree} and Theorem~\ref{thm:MM}.
We reduce the maximum degree to $\lambda t \bydef \lambda\cdot \max\{2^{\sqrt{\log n}}, (5\lambda)^8\}$
in $O(\log_t n) = O(\sqrt{\log n})$ time and find a maximal matching of the resulting graph in $O(\log(\lambda t) + \log^4\log n) = O(\log\lambda + \sqrt{\log n})$ time.
To obtain the second and third bounds 
we use the same algorithm from Theorem~\ref{thm:MM}, 
but rather than invoke~\cite{HanckowiakKP01} on each component of $s \le (c\ln n)^9$ nodes, 
we use the deterministic maximal matching algorithms of Barenboim and Elkin~\cite{BarenboimE10,BarenboimE13}.
Their algorithms run in $O(\f{\log s}{\delta\log\log s})$ time on graphs with size $s$ and arboricity $\lambda = \log^{1-\delta} s$ and in time $O(\lambda + \log s)$ in general.
\end{proof}

\begin{theorem}\label{thm:arb-MIS}
In a graph with maximum degree $\Delta$ and arboricity $\lambda$, 
a maximal independent set (MIS) can be computed in time on the order of
\[
\min\left\{\begin{array}{l@{\istrut[3]{0}}}
\log^2 \lambda + \log^{2/3} n,\\
\log^2\Delta + \lambda + \lambda^\epsilon\log\log n,\\
\log^2\Delta + \lambda + (\log\log n)^{1+\epsilon},\\
\log^2\Delta + \lambda^{1+\epsilon} + \log\lambda\log\log n\\
\end{array}
\right\}
\]
for all $\lambda$ and any constant $\epsilon>0$.
When $\lambda=(\log\log n)^{1/2-\delta}$,  an MIS can be computed in
$O\paren{\log^2\Delta + \f{\log\log n}{\delta\log\log\log n}}$ time.
\end{theorem}

\begin{proof}
The first bound is a consequence of Theorem~\ref{thm:arb-to-bounded-degree} and Theorem~\ref{thm:MIS}.
We can reduce the maximum degree to $\lambda t\bydef \lambda\cdot \max\{2^{(\log n)^{1/3}}, (5\lambda)^8\}$
in $O(\log_t n) = O(\log^{2/3} n)$ time, then find an MIS in the resulting graph in $O(\log^2(\lambda t) + \exp(O(\sqrt{\log\log n}))) = O(\log^2\lambda + \log^{2/3} n)$ time.

To obtain the remaining bounds we execute $\IndependentSet$ on the input graph, which, with high probability, 
returns an independent set $I$ such that the components induced by $B \bydef V(G) \backslash \hat\Gamma(I)$ 
have size at most $s = \Delta^4\log_\Delta n$. On each component we invoke one of the deterministic coloring algorithms of 
Barenboim and Elkin~\cite{BarenboimE10,BarenboimE11,BarenboimE13} for small arboricity graphs, 
then construct an MIS in time linear in the number of color classes.
For any fixed $\epsilon>0$, a $\lambda^{1+\epsilon}$-coloring can be computed in $O(\log\lambda\log s)$ time,
which gives an MIS algorithm running in time
\begin{align*}
\lefteqn{O\paren{\log^2\Delta + \lambda^{1+\epsilon} + \log\lambda\log(\Delta^4\log n)}}\\
&= O\paren{\log^2\Delta + \lambda^{1+\epsilon} + \log\lambda\log\log n},
\end{align*}
since $\lambda \le \Delta$.  Alternatively, we could use a slower $O(\lambda)$-coloring algorithm running in 
time $O(\min\{\lambda^\epsilon\log s,  \lambda^\epsilon + (\log s)^{1+\epsilon}\})$,\footnote{The leading constant in the palette size is exponential in $1/\epsilon$.} leading to an MIS algorithm running in time
\begin{align*}
\lefteqn{O\paren{\log^2\Delta + \lambda + \min\{\lambda^\epsilon\log(\Delta^4\log n), \; (\log(\Delta^4\log n))^{1+\epsilon}\}}}\\
&=
O\paren{\log^2\Delta + \lambda + \min\{\lambda^\epsilon\log\log n, \; (\log\log n)^{1+\epsilon}\}}.
\end{align*}
\end{proof}

\begin{theorem}\label{thm:arb-coloring}
Fix a constant $\epsilon>0$.
A graph of maximum degree $\Delta$ and arboricity $\lambda$ can, with high probability, be $(\Delta + \lambda^{1+\epsilon})$-colored
in $O(\log\Delta + \log\lambda\log\log n)$ time 
or $(\Delta+O(\lambda))$-colored in 
$O(\log\Delta + \min\{\lambda^\epsilon\log\log n,\; \lambda^\epsilon + (\log\log n)^{1+\epsilon}\})$ time.
Furthermore, a $(\deg+1)$-coloring can, with high probability, be computed in time on the order of
\[
\min\left\{
\begin{array}{l@{\istrut[3]{0}}}
\log\Delta + \lambda + \lambda^\epsilon \log\log n,\\
\log\Delta + \lambda + (\log\log n)^{1+\epsilon},\\
\log\Delta + \lambda^{1+\epsilon} + \log\lambda\log\log n
\end{array}
\right\}.
\]
\end{theorem}

\begin{proof}
Following the algorithm from Section~\ref{sect:Deltaplusone}, 
we first execute $O(\log\Delta)$ iterations of $\OneShotColoring$ then decompose
the problem into two subproblems on a graph with maximum degree $\hat{\Delta} \bydef \Theta(\log n)$.
On each subproblem we perform another $O(\log \hat{\Delta})$ iterations of $\OneShotColoring$,
after which the subgraph induced by uncolored nodes consists, with high probability, of
components with size at most $s = \hat{\Delta}^2\log_{\hat{\Delta}} n = o(\log^3 n)$.
At this point we apply one of the deterministic Barenboim-Elkin~\cite{BarenboimE11} coloring algorithms to each such component
using a fresh palette of $p$ previously unused colors, say $\{-1,\ldots,-p\}$.
We can find a $p$-coloring with $p = \lambda^{1+\epsilon}$ in $O(\log\lambda\log s) = O(\log\lambda\log\log n)$ time
or with $p=O(\lambda)$ in $O(\min\{\lambda^\epsilon\log s,\lambda^\epsilon + (\log s)^{1+\epsilon}\}) = O(\min\{\lambda^\epsilon\log\log n, \lambda^\epsilon + (\log\log n)^{1+\epsilon}\})$ time.
Every $v\in V(G)$ has been assigned a color
$\Color(v) \in \{1,\ldots,\deg(v)+1\} \cup \{-1,\ldots,-p\}$.  To obtain a $(\deg+1)$-coloring
we examine each color $\kappa \in \{-1,\ldots,-p\}$ in turn, letting every node $v$ with $\Color(v)=\kappa$
recolor itself using an available color from $\{1,\ldots,\deg(v)+1\}$.
\end{proof}

\begin{theorem}\label{thm:arb-rulingset} (\cite{BarenboimE10} + \cite{AwerbuchGLP89})
A $(2,O(\log \lambda + \sqrt{\log n}))$-ruling set can be computed deterministically
in $O(\log\lambda + \sqrt{\log n})$ time.
\end{theorem}

\begin{proof}
Begin by computing a decomposition of the edge set into $\lambda\cdot 2^{\sqrt{\log n}}$ oriented forests,
in $O(\sqrt{\log n})$ time~\cite[\S 3]{BarenboimE10}.  Given this decomposition, compute an $O(\lambda^2 \cdot 2^{2\sqrt{\log n}})$-coloring, in $O(\log^* n)$ time~\cite[\S 5.1.2]{BarenboimE10}.  
Finally, using this coloring,
compute a $(2,O(\log\lambda + \sqrt{\log n}))$-ruling set in $O(\log\lambda + \sqrt{\log n})$ time~\cite{AwerbuchGLP89}.
\end{proof}

\subsection{Maximal Matching in Trees}\label{sect:TreeMM}

Our maximal matching algorithm from Theorem~\ref{thm:arb-MM} runs in $O(\sqrt{\log n})$ time for every arboricity $\lambda$ 
from $1$ (trees) to $2^{O(\sqrt{\log n})}$.
We argue that this bound is optimal even for $\lambda=1$ by appealing to the KMW lower bound~\cite{KuhnMW04,KuhnMW10}.  In~\cite{KuhnMW10} 
it is shown that there exist constants $0 < 3c < c'$ such that any (possibly randomized) algorithm for 
computing an approximate minimum vertex cover (MVC) in graphs with girth
at least $c' \cdot \sqrt{\log n}$ 
either (i) runs in $c \sqrt{\log n}$ time, or (ii) has approximation ratio $\omega(1)$.
We review below a well known reduction from 2-approximate MVC to maximal matching,
which implies an $\Omega(\sqrt{\log n})$ lower bound for maximal matching algorithms that succeed with high probability.
The graphs used in the KMW bound have arboricity $2^{O(\sqrt{\log n})}$, so it does directly 
imply an $\Omega(\sqrt{\log n})$ lower bound on trees.

\begin{theorem} \label{thm:TreeMMlb}
For some absolute constant $c>0$, no algorithm can, with probability $1-n^{-2}$, 
compute a maximal matching on a tree 
in $c\sqrt{\log n}$ time, nor in
$c\log\Delta + o(\sqrt{\log n})$ time for every $\Delta$.
\end{theorem}

\begin{proof}
We first recount the lower bound for maximal matching on general graphs.
Suppose, for the purpose of obtaining a contradiction, that there exists a maximal matching algorithm running in time $c\sqrt{\log n}$ on the KWM graph that fails with probability at most $1/n$. 
To obtain an approximate MVC algorithm, run the maximal matching algorithm for $c\sqrt{\log n}$ time.  Any matched node joins the approximate MVC, as well as any node that detects a local violation,
namely a node incident to another unmatched node.
As the MVC is at least the size of any maximal matching, 
the expected approximation ratio of this algorithm is at most $2 \cdot \Pr(\mbox{no failure occurs}) + n \cdot \Pr(\mbox{some failure occurs}) \leq 2 + n \cdot \frac{1}{n} = 3$,
a contradiction.  Hence there is no algorithm that runs for $c \cdot \sqrt{\log n}$
time in graphs with girth at least $c' \cdot \sqrt{\log n}$ that computes a maximal matching 
with probability at least $1 - 1/n$.

We use an indistinguishability argument to show that the 
$\Omega(\sqrt{\log n})$ lower bound also holds for trees, and therefore any class of graphs that includes trees.
Observe that to show a lower bound for a randomized algorithm, it is enough to prove the same lower bound under the assumption that the identities of 
graph nodes were selected independently and uniformly at random, from, say, $[1,n^{10}]$.  
Suppose there is, in fact, an algorithm that given a tree with a random (in the above sense) assignment of 
identities, constructs a maximal matching within $c \cdot \sqrt{\log n}$ time with success probability at least $1 - n^{-2}$. Run this algorithm for $c \cdot \sqrt{\log n}$ 
time on the KMW graph $G$ with girth $c' \cdot \sqrt{\log n}$, assuming random assignment of identities in $G$. 
Due to the girth bound, the view of every node in $G$ is identical to its view in a tree, and thus from its perspective a correct maximal matching must be computed with probability at least $1 - n^{-2}$.  By a union bound, a correct maximal matching for the entire graph $G$ will be computed with probability at least $1 - n^{-1}$, a contradiction.

The KMW graph has maximum degree $\Delta = 2^{\Theta(\sqrt{\log n})}$ and girth $\Theta(\log \Delta)$.
All the KMW-based $\Omega(\sqrt{\log n})$ lower bounds can be scaled down to 
$\Omega(\log\Delta)$ lower bounds (for any $\Delta < 2^{O(\sqrt{\log n})}$) 
simply by applying the lower bound argument to the union of numerous identical KMW graphs.
\end{proof}

\begin{remark}
Theorem~\ref{thm:TreeMMlb} posited the existence of a maximal matching algorithm for trees whose {\em global} probability of failure is $n^{-2}$.  When we run this algorithm on the KMW graph we can no longer use $n^{-2}$ as the global failure probability.  It may be that, when run in an actual tree, nodes within distance $c\sqrt{\log n}$ of a leaf node fail with probability zero: all the failure probability is concentrated at the small set of nodes that cannot ``see'' the leaves.  In the KMW graph {\em all} nodes think they are in this small set. We must assume, pessimistically, that 
failure occurs at {\em every} node in the KMW graph with probability $n^{-2}$.
\end{remark}

\begin{remark}
Theorem~\ref{thm:TreeMMlb}, strangely, does not imply any lower bound for the MIS problem on trees, even though MIS appears to be just as hard as maximal matching on any class of graphs.  
The $\Omega(\sqrt{\log n})$ lower bound for MIS~\cite{KuhnMW04,KuhnMW10} is obtained by considering the 
{\em line graph} of the KWM graph, which has girth 3, not $\Theta(\sqrt{\log n})$.  Thus, our indistinguishability argument does not apply. 
\end{remark}

\section{MIS in Trees and High Girth Graphs}\label{sect:TreeMIS}

One of the MIS algorithms of Luby~\cite{Luby86} works as follows.
In each round
each remaining node $v$ chooses a random real $r(v) \in (0,1)$ and includes itself in the MIS if $r(v)$ is greater than $\max_{w\in \Gamma(v)} r(w)$, thereby {\em eliminating} $v$ and its neighborhood from further consideration.\footnote{In practice it suffices to generate only the $O(\log n)$ most significant bits.  That is, nodes choose an integer from, say, $\{1,\ldots,n^{10}\}$ uniformly at random.} 
Observe that the probability that $v$ joins the MIS
in a round is $1/(\deg(v)+1)$, irrespective of the degrees of its neighbors.

We would like to say that degrees decay geometrically, 
that is, after $O(k)$ iterations of Luby's 
algorithm the maximum degree is $\Delta/2^k$, with high probability.
Invariant~\ref{inv:Tree} is not quite this strong but 
just as useful, algorithmically.  It states that
after $O(k\log\log\Delta)$ iterations, no node has
$\Delta/2^{k+2}$ neighbors with degree at least $\Delta/2^k$,
provided that $\Delta/2^k$ is not too small.  

\begin{invariant}\label{inv:Tree}
At the end of scale $k$, for all $v\in \VIB$,
\[
\left|\left\{w\in \GammaIB(v) \;|\; \degIB(w) > \Delta/2^k \right\}\right| \;\le\; \max\{\Delta/2^{k+2}, 12\ln \Delta\}.
\]
\end{invariant}

Randomness plays no role in Invariant~\ref{inv:Tree}: it holds with probability 1. 
Any node that violates the invariant is marked {\em bad} (placed in $B$) and temporarily
excluded from consideration.  As we will soon prove, the 
probability a node is marked bad is $1/\poly(\Delta)$.
We will only make use of Invariant~\ref{inv:Tree} when $\Delta/2^{k+2}$ is, in fact, greater than $12\ln\Delta$,
so the $12\ln\Delta$ term will not be mentioned until we need to have a lower bound on $\Delta/2^{k+2}$.

\begin{figure}
\centering
\framebox{\hcm[.1]
\begin{minipage}{6in}
$\TreeIndependentSet(\mbox{Graph } G)$
\begin{enumerate}
	\item Initialize sets $I,B\subset V(G)$:\\
	$
	\begin{array}{rlll}
		I &\leftarrow \emptyset	&\hcm[.4]& \mbox{\{an independent set\}}\\	
		B &\leftarrow \emptyset	&& \mbox{\{a set of `bad' nodes\}}
	\end{array}
	$
	
	Throughout, let $\VIB \bydef V(G) \backslash (\hat\Gamma(I) \cup B)$ be the nodes still under consideration.
	Let $\GIB$ be the graph induced by $\VIB$ and let $\GammaIB$ and $\degIB$ be the neighborhood and degree functions w.r.t.~$\GIB$.
	\item For each scale $k$ from 1 to $\log\paren{\fr{\Delta}{48\ln\Delta}}$,
	
	\begin{enumerate}
		\item Execute $\log_{5/4}(33\ln\Delta)$ iterations of steps i and ii.
		\begin{enumerate}
			\item Each node $v\in \VIB$ chooses a priority $r(v)$.\istrut[2]{0}
			\item[]
				
	$r(v) \leftarrow \left\{\begin{array}{l@{\hcm[.2]}l} 
			0,							&  \mbox{\istrut[3]{0}if $\left|\{w\in \GammaIB(v) \;|\; \degIB(w) > \Delta/2^{k}\}\right|$}\\
										& \mbox{\hcm[2.4] $> \Delta(8\ln\Delta+1)/2^{k+1}$,}\\
			 \mbox{a random real in $(0,1)$, \hcm[.3]}	& \mbox{otherwise.}
			\end{array}\right.$\\

			\item $I \leftarrow I \cup \{v\in \VIB \;|\; r(v) > \max\{r(w) \;|\; w\in \GammaIB(v)\}$\\
				{\em (Add nodes to the independent set.)}
		\end{enumerate}
		\item $B\leftarrow B \cup \left\{v\in \VIB \;\;\big|\;\;
									|\{w\in \GammaIB(v) \;|\; \degIB(w) > \Delta/2^{k}\}|     >     \Delta/2^{k+2}\right\}$.\\
			{\em (Mark nodes that violate Invariant~\ref{inv:Tree} as bad.)}
	\end{enumerate}
	
	\item Return $(I, B)$.
\end{enumerate}
\end{minipage}
}
\caption{\label{alg:TreeIndependentSet}}
\end{figure}

\begin{lemma}\label{lem:Tree-elimination-prob}
In one iteration of scale $k$, a node $w$ with $\degIB(w) > \Delta/2^k$ is eliminated (appears in $\hat\Gamma(I)$) 
with probability at least $(1-o(1))(1-e^{-1/4}) > 0.22$.  Moreover, this probability holds even if we condition on
arbitrary behavior at a single neighbor of $w$.
\end{lemma}

\begin{proof}
By Invariant~\ref{inv:Tree}, $|\{x\in\GammaIB(w) \;|\; \degIB(x) > \Delta/2^{k-1}\}| \le \Delta/2^{k+1}$.
Let $M$ be the neighbors of $w$ with degree at most $\Delta/2^{k-1}$, so $|M| \ge \degIB(w) - \Delta/2^{k+1} > \Delta/2^{k+1}$.
Refer to the portion of Figure~\ref{fig:TreeMIS} depicting $w$ and its neighborhood.
The probability that $w$ is eliminated is minimized when $M$-nodes attain their maximum degree $\Delta/2^{k-1}$,
so in the calculations below we shall assume this is the case.
Let $x^\star \in M$ be the first neighbor for which $r(x^\star) > \max\{r(y) \;|\; y\in \GammaIB(x^\star)\backslash\{w\}\}$.  
The probability $x^\star$ exists is at least
\[
\Pr(x^\star \mbox{ exists}) = 1 - \prod_{x\in M} \paren{1-\frac{1}{\degIB(x)}} \ge 1 - \paren{1-\frac{1}{\Delta/2^{k-1}}}^{\Delta/2^{k+1}} > 1-e^{-1/4}.
\]
Since, in the most extreme case, $\degIB(x) = \Delta/2^{k-1}$,
$\Pr(\mbox{$x^\star$ joins $I$} \;|\; \mbox{$x^\star$ exists}) = \Pr(r(x^\star) > r(w) \;|\; \mbox{$x^\star$ exists}) \ge 1-\frac{1}{\Delta/2^{k-1} + 1}$.
The probability that $w$ is eliminated is therefore at least 
$(1-\frac{1}{\Delta/2^{k-1}+1})(1 - e^{-1/4}) > (1-\frac{1}{96\ln \Delta})(1-e^{1/4}) > 0.22 > 1/5$.
Moreover, this probability is perturbed by a negligible 
$(1-\Theta(1/\Delta/2^k)) = (1-o(1))$ factor if one conditions on arbitrary behavior by a single neighbor of $w$.
\end{proof}

\begin{figure}
\centering
\scalebox{.38}{\includegraphics{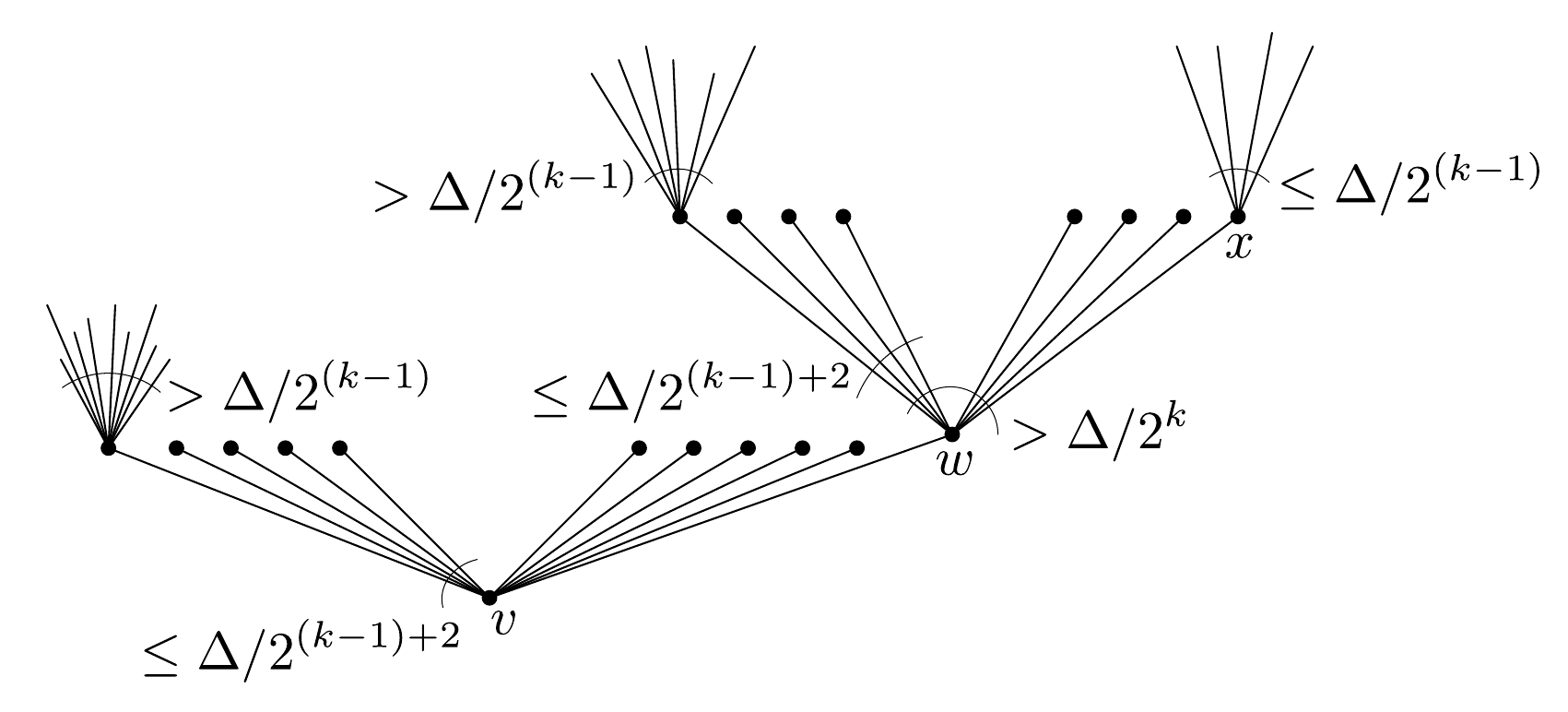}}
\caption{\label{fig:TreeMIS} The $k$th scale of $\TreeIndependentSet$, from the perspective of $v$.
Only $v$'s neighbors with degree greater than $\Delta/2^k$ are shown; $w$ is one such neighbor.
They are partitioned into those with degrees in $(\Delta/2^{k-1},\infty)$ and $(\Delta/2^k,\Delta/2^{k-1}]$.
The first category numbers at most $\Delta/2^{(k-1)+2}$; the second category is unbounded.  
At most $\Delta/2^{(k-1)+2}$ of $w$'s neighbors have degree more than $\Delta/2^{(k-1)}$, leaving at least half with degree at most $\Delta/2^{(k-1)}$.
If any neighbor $x$ joins the MIS, $w$ will be eliminated.
}
\end{figure}

\begin{lemma}\label{lem:TreeMIS-badnode}
In any scale, a node $v$ 
is included in $B$ with probability at most $1/\Delta^2$, independent of the behavior of any one neighbor of $v$.
\end{lemma}

\begin{proof}
Fix a node $v$ and let $N = \{w\in\GammaIB(v) \;|\; \degIB(w) > \Delta/2^k\}$ at the beginning of scale $k$.  
See Figure~\ref{fig:TreeMIS}.  In the figure, only $N$-node neighbors of $v$ are depicted.
If $|N| \le \Delta/2^{k+2}$ then the invariant is already satisfied at $v$, so assume otherwise.
There are two cases, depending on the size of $N$.
\paragraph{Case 1: $|N|$ is large}
We argue that if $|N|>\Delta(8\ln\Delta+1)/2^{k+1}$, then $v$ is eliminated with probability at 
least $1-\Delta^{-2}$ in a single iteration, and can therefore be bad with probability at most $\Delta^{-2}$.  
According to the algorithm, $r(v)=0$, so $v$ has no chance to hold a locally maximum $r$-value.
Since, by Invariant~\ref{inv:Tree}, $v$ has at least $|N| - \Delta/2^{k+1} > 8\Delta\ln\Delta/2^{k+1}$ neighbors with degree
at most $\Delta/2^{k-1}$, the probability that $v$ is not eliminated is at most the probability that no $N$-node joins $I$.
This occurs with probability at most
\[
\paren{1-\frac{1}{\Delta/2^{k-1}}}^{|N|-\Delta/2^{k+1}} \le \exp\paren{\f{8\Delta\ln\Delta}{2^{k+1}}\cdot \frac{2^{k-1}}{\Delta}} = \Delta^{-2}.
\]

\paragraph{Case 2: $|N|$ is small} 
In this case $|N|\le \Delta(8\ln\Delta+1)/2^{k+1}$, that is, $|N|$ is within a $O(\log\Delta)$ factor of satisfying Invariant~\ref{inv:Tree}.
By Lemma~\ref{lem:Tree-elimination-prob} each $N$-node, so long as it has degree at least $\Delta/2^k$, 
is eliminated with probability at least $1/5$.  Moreover, these events are independent, conditioned
upon some arbitrary behavior at $v$, the only common neighbor of $N$-nodes.
Thus, each node will survive $\log_{5/4}(4(8\ln \Delta + 1)) = O(\log\log\Delta)$
iterations with probability $1/[4(8\ln\Delta+1)]$. The expected number 
of surviving $N$-nodes is therefore less than $\Delta/2^{k+3}$.  
By a Chernoff bound (Theorem~\ref{thm:stdChernoff}), the probability that 
this quantity exceeds twice its expectation, thereby putting $v$ into $B$, is 
$\exp(-(\Delta/2^{k+3})/3)$, which is at most $\Delta^{-2}$ since 
$\Delta/2^{k} \ge 48\ln\Delta$.
\end{proof}

\begin{lemma}\label{lem:Tree-bad-component-size}
All connected components in the subgraph induced by $B$ have at most 
$t = c\log_\Delta n$ nodes with probability $1-n^{-c/2}$.
\end{lemma}

\begin{proof}
There are less than $4^t$ topologically distinct rooted $t$-node trees and at most $n\Delta^{t-1}$ ways to 
embed such a tree, say $T$, in the graph.  There are $(\log \Delta)^t$ schedules for when (in which scale) 
the $T$-nodes were added to $B$.  Since the probability that each $T$-node becomes bad in a scale 
is at most $\Delta^{-2}$, {\em independent of the behavior of its parent in $T$},
the probability that $B$ contains a $t$-node tree is at most
\begin{align*}
\lefteqn{4^t \cdot n\Delta^{t-1} \cdot (\log\Delta)^t \cdot \Delta^{-2 t} }\\
& < (4\log\Delta)^{c\log_\Delta n} \cdot n^{c+1} \cdot n^{-2c}\\
& < n^{-c/2}.		
\end{align*}
The last inequality holds when $\Delta$ is at least some sufficiently large constant.
\end{proof}

\subsection{The $\TreeMIS$ Algorithm}

Let us review the situation.  $\TreeIndependentSet(G)$ takes $O(\log\Delta\log\log \Delta)$ time
and returns a pair $(I,B)$ satisfying two properties, the second of which holds with 
probability $1-n^{-c/2}$.

\begin{itemize}
\item Although the degree of nodes in the graph induced by $\VIB = V(G)\backslash (\hat\Gamma(I) \cup B)$
is not bounded, no node has $12\ln\Delta$ neighbors with degree at least $48\ln\Delta$.
\item The graph induced by $B$ is composed of small connected components, 
each with size at most $t\le c\log_\Delta n$.
\end{itemize}

The $\TreeMIS$ algorithm (Figure~\ref{alg:TreeMIS}) starts by obtaining a pair $(I,B)$ satisfying these properties, 
then extends $I$ to a maximal independent set in three stages.  It partitions $\VIB$ into low and high degree sets $V_{\lo}$ and $V_{\hi}$.  By definition the graph induced by $V_{\lo}$ has maximum 
degree $48\ln\Delta$ and by the first property above
the graph induced by $V_{\hi}$ has maximum degree $12\ln\Delta$.  An MIS $I_{\lo}$ for $V_{\lo}$ can be computed
deterministically in $O(\log\Delta + \log^* n)$ time~\cite{BarenboimEK14}, that is, in time linear in the degree.\footnote{Since we are already spending $O(\log\Delta\log\log\Delta)$ time in $\TreeIndependentSet$, we can afford to 
use a simpler MIS algorithm~\cite{KuhnW06} running in $O(\log\Delta\log\log\Delta + \log^* n)$ time.}
An MIS $I_{\hi}$ for $V_{\hi}\backslash \hat\Gamma(I_{\lo})$ can then be computed, also in $O(\log\Delta + \log^* n)$ time.
At this point only $B$-nodes may not be adjacent to some node in $I\cup I_{\lo} \cup I_{\hi}$.  
For each component $C$ in the graph induced by $B\backslash \hat\Gamma(I\cup I_{\lo} \cup I_{\hi})$
we compute an MIS $I_C$ in $O(\log t/\log\log t) = O(\frac{\log\log n}{\log\log\log n})$ time using Barenboim-Elkin~\cite{BarenboimE10} algorithm.

In total the running time of $\TreeMIS$ is $O(\log\Delta\log\log\Delta + \frac{\log\log n}{\log\log\log n})$ and its failure probability is less than $n^{-c/2}$.

\begin{figure}
\centering
\framebox{\hcm[.1]
\begin{minipage}{4.6in}
$\TreeMIS(\mbox{Graph } G)$
\begin{enumerate}
	\item[] {\bf Phase I:}
	\item $(I,B) \leftarrow \TreeIndependentSet(G)$
	\item[] {\bf Phase II:}
	\item Partition $\VIB = V(G) \backslash (\hat\Gamma(I) \cup B)$ into low- and high-degree sets.
	\item[] $V_{\lo} \leftarrow \{v\in \VIB \;|\; \degIB(v) \le 48\ln \Delta\}$
	\item[] $V_{\hi} \leftarrow \{v\in \VIB \;|\; \degIB(v) > 48\ln \Delta\}$
	\item Compute maximal independent sets on $V_{\lo}$ and $V_{\hi}$. 
	\item[] $I_{\lo} \leftarrow $ an MIS of the graph induced by $V_{\lo}$.
	\item[] $I_{\hi} \leftarrow $ an MIS of the graph induced by $V_{\hi}\backslash \hat\Gamma(I_{\lo})$.\\
	\item[] Let $\mathscr{C}$ be the set of connected components with size at 
	most $c\ln n$ in the graph induced by $B \backslash \hat\Gamma(I \cup I_{\lo} \cup I_{\hi})$.
	\item For each $C\in\mathscr{C}$,	
	\item[] $I_C \leftarrow $ an MIS of $C$
	\item Return $\displaystyle I \cup I_{\lo} \cup I_{\hi} \cup \bigcup_{C\in\mathscr{C}} I_C$
\end{enumerate}
\end{minipage}
}
\caption{\label{alg:TreeMIS}}
\end{figure}

\begin{theorem}\label{thm:TreeMIS}
In an unoriented tree with maximum degree $\Delta$, a maximal independent set
can, with high probability, be computed in time on the order of 
\[
\min\left\{\; \log\Delta\log\log\Delta + \frac{\log\log n}{\log\log\log n}, \; \; \; \sqrt{\log n\log\log n}\; \right\}.
\]
\end{theorem}

\begin{proof}
The $O(\log\Delta\log\log\Delta + \frac{\log\log n}{\log\log\log n})$ bound was shown above.
If $\Delta > \hat{\Delta} \bydef 2^{\sqrt{\log n/\log\log n}}$, use Theorem~\ref{thm:arb-to-bounded-degree} to reduce the maximum degree 
to $\hat{\Delta}$ in $O(\log_{\hat{\Delta}} n) = O(\sqrt{\log n\log\log n})$ time,
then compute an MIS in $O(\log\hat{\Delta}\log\log\hat{\Delta} + \frac{\log\log n}{\log\log\log n}) = O(\sqrt{\log n\log\log n})$ time.
\end{proof}

\subsection{MIS on High Girth Graphs}

Our analysis of $\TreeIndependentSet$ and $\TreeMIS$ requires that certain events are independent and this independence is guaranteed if the radius-3 neighborhood around each node looks like a tree.  In other words, parts of the analysis do not distinguish between actual trees and graphs with girth greater than 6.\footnote{The analysis could probably be made to work for graphs with girth 6 or 5, but it does {\em not} work for graphs of girth 4.  If the graph is formed by grafting together a sequence of bipartite 
$\Delta/2 \times \Delta/2$ cliques, the probability a node becomes bad after one scale
of $\TreeIndependentSet$ is not $1/\poly(\Delta)$ but $\exp(-\Omega((\log\log\Delta)^2/\log\log\log\Delta))$.}

In order to make the analysis work on graphs with girth greater than 6 we need to make a number of small modifications to $\TreeIndependentSet$ and $\TreeMIS$.
\begin{itemize}
\item We substitute $\log n$ for $\log\Delta$ in Invariant~\ref{inv:Tree}.  It now states that at the end of scale $k$,
for all $v\in \VIB$,
\[
\left|\left\{w\in \GammaIB(v) \;|\; \degIB(w) > \Delta/2^k \right\}\right| \;\le\; \max\{\Delta/2^{k+2}, c\ln n\}
\]
for some sufficiently large $c$.  

\item We change the critical threshold in $\TreeIndependentSet$ from $\Delta(8\ln\Delta+1)/2^{k+1}$ to $\Delta(8\ln n + 1)/2^{k+1}$
and change the number of iterations per scale from $O(\log\log \Delta)$ to $O(\log\log n)$.

\item Lemmas~\ref{lem:TreeMIS-badnode} and \ref{lem:Tree-bad-component-size} now claim that after $\log(\Delta/(4c\ln n))$ scales, 
\begin{itemize}
\item In $\GIB$, each node has no more than $c\ln n$ neighbors with degree greater than $4c\ln n$.
\item With high probability, namely $1-n^{-\Omega(c)}$, all nodes satisfy Invariant~\ref{inv:Tree}.  That is, $B = \emptyset$.
\end{itemize}

\item Provided that $B=\emptyset$, in order to extend $I$ to an MIS we only need to find an MIS $I_{\lo}$ of $V_{\lo}$
and $I_{\hi}$ of $V_{\lo}\backslash \hat\Gamma(I_{\hi})$.  Since the graphs induced by $V_{\lo}$ and $V_{\hi}$ have maximum degree $4c\ln n$, this takes $\exp(O(\sqrt{\log\log n}))$ time 
using the $\MIS$ algorithm of Section~\ref{sect:MIS}.
\end{itemize}

\begin{theorem}\label{thm:MIS-girthsix}
In a graph of girth greater than 6 (in which no cycle has length at most 6), an MIS can be computed
in $O(\log\Delta\log\log n + \exp(O(\sqrt{\log\log n})))$ time with high probability.
\end{theorem}

\section{Conclusions}\label{sect:conclusion}

In this work we have advanced the state-of-the-art in randomized symmetry breaking
using a powerful new set of algorithmic tools.
Our MIS and maximal matching algorithms represent the first significant improvements (for general graphs) 
to the classic algorithms of the 1980s~\cite{Luby86,ABI86,II86}.  Our maximal matching algorithms
(for general graphs, trees, and low-arboricity graphs) are among a small group of 
{\em provably} optimal symmetry breaking algorithms for a wide range of parameters.
However, we feel the most important contribution of this work is the identification of
the {\em union bound barrier} and the development of several tools for confronting it.

All of our algorithms reduce an $n$-node instance of the problem to a disjoint set
of $\poly(\log n)$-node components\footnote{(or in the case of the MIS algorithm, $(\poly(\Delta)\log n)$-size components)}, which is the threshold beyond which known randomized symmetry breaking strategies
fail to achieve a $(\log n)^{o(1)}$ running time.  Even if the probability of failure on one component is small,
by the union bound the probability of failure on {\em some} component is nearly certain.  
Unless, of course, the probability of failure is {\em zero}, meaning we forswear random bits altogether
and opt to use the best available deterministic algorithm.  We conjecture that this is essentially 
the only way to confront the union bound barrier.  If true, this means that the randomized complexities of many 
symmetry breaking problems are tethered to their deterministic counterparts.
For example, we could not hope to get rid of the $2^{O(\sqrt{\log\log n})}$ terms in our MIS and coloring algorithms without 
{\em first} improving the $2^{O(\sqrt{\log n})}$-time Panconesi-Srinivasan~\cite{PanconesiS96} algorithms.  
We also could not hope to achieve an (optimal) 
$O(\min\{\log\Delta,\sqrt{\log n}\} + \log^* n)$-time algorithm for MIS or maximal matching
unless that algorithm were deterministic.  

We leave many problems open, some of which are accessible and some quite hard.
The most difficult problem is to find optimal $O(\min\{\log\Delta, \sqrt{\log n}\} + \log^* n)$-time algorithms for MIS and maximal matching,
or, as a first step, {\em any} $o(\log n)$ time algorithm.  An easier problem is to find an
$O(\min\{\log\Delta,\sqrt{\log n}\} + \log\log n)$-time MIS algorithm for bounded arboricity graphs, or even trees.
The complexity of the $(\Delta+1)$-coloring problem is the least understood.  Expressed in terms of $n$,
the best known upper bound is $O(\log n)$~\cite{Luby86,Johansson99} and
best known lower bound $\Omega(\log^* n - \log^*\Delta)$~\cite{Linial92}.  Is there an algorithm running in $o(\log n)$ time?  

\paragraph{Acknowledgement.} 
We would like to thank James Hegeman and Sriram Pemmaraju for pointing out a flaw in an earlier
proof of Lemma~\ref{lem:highdegreeneighbors}.

\ignore{
\bibliographystyle{ACM-Reference-Format-Journals}
\bibliography{../../references}
}



\clearpage
\appendix
\centerline{\LARGE\bf Appendix}

\section{Concentration Inequalities}\label{sec:concentration-bounds}

See Dubhashi and Panconesi~\cite{DubhashiPanconesi09} for proofs of these and related concentration bounds.

\begin{theorem}\label{thm:stdChernoff} {\bf (Chernoff)}
Let $X$ be the sum of $n$ independent, identically distributed 0/1 random variables.
For any $\delta \in (0,1)$,
\begin{align*}
			\Pr(X < (1-\delta)\E[X]) &< \exp\paren{-\delta^2\E[X]/2}\\
\mbox{ and } \; \Pr(X > (1+\delta)\E[X]) &< \exp\paren{-\delta^2\E[X]/3}.
\end{align*}
\end{theorem}

\begin{theorem}\label{thm:Chernoff} {\bf (Negative correlation)}
Let $X = X_1 +\cdots + X_n$ be the sum of $n$ random variables, 
where the $\{X_i\}$ are independent or negatively correlated.  Then for any $t>0$:
\[
\Pr(X \ge \E[X] + t), \Pr(X \le \E[X] - t) \; \le \; \exp\paren{- \f{2t^2}{\sum_{i}(a_i'-a_i)^2}},
\]
where $a_i \le X_i \le a_i'$.
\end{theorem}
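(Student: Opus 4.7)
The plan is to prove both tail bounds simultaneously via the exponential moment (Chernoff--Cram\'er) method, adapted to accommodate negative correlation. First, for any fixed $s > 0$, Markov's inequality applied to the nonnegative random variable $e^{sX}$ yields
\[
\Pr[X \ge \E[X] + t] \;\le\; e^{-s(\E[X]+t)} \cdot \E[e^{sX}] \;=\; e^{-st} \cdot \E\paren{\prod_i e^{s(X_i - \E[X_i])}}.
\]
When the $X_i$ are independent the expectation of the product factors. When they are only negatively correlated (in the stronger sense of negative association, which is the form actually used in the paper's balls-and-bins applications via~\cite{DubhashiR98}), one invokes the closure of negative association under coordinatewise monotone functions: since $x\mapsto e^{sx}$ is nondecreasing for $s>0$, the variables $e^{sX_i}$ remain negatively associated, so $\E\paren{\prod_i e^{sX_i}} \le \prod_i \E[e^{sX_i}]$.

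Next I would apply Hoeffding's lemma to each centered factor $Y_i = X_i - \E[X_i]$: since $Y_i$ takes values in an interval of width $w_i = a_i' - a_i$,
\[
\E[e^{sY_i}] \;\le\; \exp\paren{\f{s^2 w_i^2}{8}}.
\]
The standard derivation bounds $e^{sy}$ on the interval $[a_i - \E[X_i],\; a_i' - \E[X_i]]$ above by the chord connecting the endpoints, takes expectations (using $\E[Y_i]=0$ to kill the linear term), and estimates the resulting cumulant generating function by a second-order Taylor expansion of its logarithm.

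Combining the two inequalities gives
\[
\Pr[X \ge \E[X] + t] \;\le\; \exp\paren{-st + \f{s^2}{8}\sum_i w_i^2},
\]
and optimizing by setting $s = 4t/\sum_i w_i^2 > 0$ produces the claimed upper-tail bound $\exp\paren{-2t^2/\sum_i w_i^2}$. The lower tail $\Pr[X \le \E[X] - t]$ follows by applying the identical argument to the variables $-X_i$, which are bounded in intervals of the same width $w_i$ and remain negatively associated since negation is monotone (as a map of a single coordinate).

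The main technical obstacle is the closure step invoked in the negatively correlated case: pairwise negative correlation alone does not imply the required bound $\E\paren{\prod_i f_i(X_i)} \le \prod_i \E[f_i(X_i)]$ for nonnegative monotone $f_i$, so one must either strengthen the hypothesis to negative association (the natural setting for the paper's applications, cf.~\cite{DubhashiR98}) or verify it directly in each concrete instance. Everything else---Markov's inequality, Hoeffding's lemma, and the one-variable optimization---is routine and can be cited from~\cite{DubhashiP09}.
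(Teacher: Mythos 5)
Your proof is correct and is essentially the standard argument: the paper itself offers no proof of this theorem, deferring to Dubhashi--Panconesi~\cite{DubhashiP09}, whose derivation is exactly your Chernoff--Cram\'er exponentiation, factorization of $\E[\prod_i e^{sX_i}]$ via closure of negative association under coordinatewise monotone maps, Hoeffding's lemma on each factor, and optimization over $s$. Your caveat is also well taken: as stated, ``negatively correlated'' is weaker than the negative association actually needed for the factorization step, but the paper's sole application (the color-availability indicators, via the balls-and-bins result of~\cite{DubhashiR98}) does supply full negative association, so the theorem is used correctly even if its hypothesis is loosely worded.
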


\begin{theorem}\label{thm:Janson} {\bf (Janson)}
For $X = X_1+\cdots+X_n$ the sum of $n$ random variables and $t>0$,
\[
\Pr(X \ge \E[X] + t), \Pr(X \le \E[X] - t) \; \le \; \exp\paren{- \f{2t^2}{\chi\cdot \sum_{i}(a_i'-a_i)^2}},
\]
where $a_i \le X_i \le a_i'$ and $\chi$ is the fractional chromatic number of the dependency graph 
$\mathscr{G} = (\mathscr{V},\mathscr{E})$.  By definition $\mathscr{V} = \{X_1,\ldots,X_n\}$ 
and the edge set $\mathscr{E}$ satisfies the property that
$X_i$ is independent of $\mathscr{V} \backslash \Gamma(X_i)$, for all $i$.
\end{theorem}

\begin{theorem}\label{thm:Azuma-Hoeffding} {\bf (Azuma-Hoeffding)}
A sequence $Y_0,\ldots,Y_n$ is a martingale with respect to $X_0,\ldots,X_n$
if $Y_i$ is a function of $X_0,\ldots,X_i$ and $\E[Y_i \:|\: X_0,\ldots,X_{i-1}] = Y_{i-1}$.
For such a martingale with bounded differences $a_i \le Y_i - Y_{i-1} \le a_i'$,
\[
\Pr(Y_n > Y_0 + t), \, \Pr(Y_n < Y_0 - t) \le \exp\paren{-\f{t^2}{2\sum_i (a_i'-a_i)^2}}.
\]
\end{theorem}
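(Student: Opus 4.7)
The plan is to use the exponential moment (Chernoff) method adapted to martingales. Focusing on the upper tail (the lower tail follows by applying the same argument to the martingale $-Y_0, \ldots, -Y_n$), I would begin with Markov's inequality applied to the convex transform $e^{s(Y_n - Y_0)}$ for a parameter $s > 0$ to be optimized later:
\[
\Pr[Y_n > Y_0 + t] \; = \; \Pr[e^{s(Y_n - Y_0)} > e^{st}] \; \le \; e^{-st}\,\E[e^{s(Y_n - Y_0)}].
\]
This reduces the problem to bounding the moment generating function of the total deviation.

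Next, I would telescope $Y_n - Y_0 = \sum_{i=1}^n D_i$ with $D_i = Y_i - Y_{i-1}$, and peel off one factor at a time using the tower property of conditional expectation. Conditioning on $X_0,\ldots,X_{i-1}$ makes $e^{s\sum_{j<i} D_j}$ measurable and so it factors out of the inner expectation:
\[
\E[e^{s\sum_{j\le i} D_j}] \; = \; \E\!\left[e^{s\sum_{j<i} D_j}\cdot \E[e^{sD_i}\mid X_0,\ldots,X_{i-1}]\right].
\]
Iterating this peel-off $n$ times reduces the MGF bound to bounding each conditional factor $\E[e^{sD_i}\mid X_0,\ldots,X_{i-1}]$ separately.

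The key technical ingredient --- and where I expect the main work of the proof to live --- is the (conditional) Hoeffding lemma: for any random variable $Z$ with $\E[Z] = 0$ and $a \le Z \le a'$,
\[
\E[e^{sZ}] \; \le \; \exp\!\paren{\fr{s^2 (a'-a)^2}{2}}.
\]
Because the martingale property gives $\E[D_i\mid X_0,\ldots,X_{i-1}] = 0$, this lemma applies conditionally to each $D_i$ with the bounds $a = a_i$, $a' = a_i'$. The lemma itself I would prove by bounding $e^{sz}$ on $[a,a']$ by the chord joining $(a,e^{sa})$ and $(a',e^{sa'})$ (by convexity of the exponential), taking expectation and using $\E[Z] = 0$ to kill the linear-in-$Z$ part, and then bounding the resulting function $\phi(s)$ of $s$ alone by a Taylor argument around $s=0$: $\phi(0)=\phi'(0)=0$ and $\phi''(s)$ is uniformly bounded by $(a'-a)^2$.

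Combining: iterating the peel-off gives $\E[e^{s(Y_n - Y_0)}] \le \exp(s^2 \sum_i (a_i'-a_i)^2 / 2)$, so plugging into the Markov bound yields $\Pr[Y_n > Y_0 + t] \le \exp(-st + s^2 \sum_i (a_i'-a_i)^2/2)$. Minimizing the exponent over $s > 0$ gives the optimal choice $s^* = t / \sum_i (a_i'-a_i)^2$, which substituted back produces exactly the claimed bound $\exp(-t^2/(2\sum_i (a_i'-a_i)^2))$. The only point requiring care is a measurability check at each peel-off step to justify pulling $e^{s\sum_{j<i} D_j}$ outside the inner conditional expectation; once that bookkeeping is in place the argument is routine.
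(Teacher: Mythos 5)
Your proof is correct and is the standard exponential-moment argument for Azuma--Hoeffding; the paper itself gives no proof, deferring to Dubhashi and Panconesi \cite{DubhashiP09}, whose treatment follows essentially the same route (Markov on $e^{s(Y_n-Y_0)}$, conditional peel-off via the tower property, and a conditional Hoeffding lemma). Note that your slightly weaker form of Hoeffding's lemma, $\E[e^{sZ}]\le \exp(s^2(a'-a)^2/2)$ rather than the sharp $\exp(s^2(a'-a)^2/8)$, is exactly what reproduces the non-sharp constant $2\sum_i(a_i'-a_i)^2$ appearing in the theorem as stated, so everything is consistent.
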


\begin{corollary}\label{cor:Azuma-Hoeffding}
Let $Z = Z_1+\cdots+Z_n$ be the sum of $n$ random variables and $X_0,\ldots,X_n$ be a sequence,
where $Z_i$ is uniquely determined by $X_0,\ldots,X_i$, 
$\mu_i = \E[Z_i \;|\; X_0,\ldots,X_{i-1}]$, $\mu=\sum_i\mu_i$, and $a_i \le Z_i \le a_i'$.  Then
\[
\Pr(Z > \mu + t), \, \Pr(Z < \mu-t) \le \exp\paren{-\f{t^2}{2\sum_i (a_i'-a_i)^2}}.
\]
\end{corollary}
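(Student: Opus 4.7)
The plan is to deduce Corollary~\ref{cor:Azuma-Hoeffding} from Theorem~\ref{thm:Azuma-Hoeffding} by exhibiting the right martingale. Concretely, I would set $Y_0 = 0$ and, for $i \ge 1$, $Y_i = \sum_{j=1}^{i}(Z_j - \mu_j)$. By hypothesis $Z_j$ is a function of $X_0,\ldots,X_j$, and $\mu_j = \E[Z_j \mid X_0,\ldots,X_{j-1}]$ is by definition a function of $X_0,\ldots,X_{j-1}$, so each $Y_i$ is a function of $X_0,\ldots,X_i$. This matches the measurability hypothesis of Theorem~\ref{thm:Azuma-Hoeffding}.

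Next I would verify the two remaining hypotheses. For the martingale property, $\E[Y_i - Y_{i-1} \mid X_0,\ldots,X_{i-1}] = \E[Z_i - \mu_i \mid X_0,\ldots,X_{i-1}] = 0$ by the very definition of $\mu_i$, so $\E[Y_i \mid X_0,\ldots,X_{i-1}] = Y_{i-1}$. For the bounded-difference condition, the identity $Y_i - Y_{i-1} = Z_i - \mu_i$ combined with $a_i \le Z_i \le a_i'$ yields
\[
a_i - \mu_i \;\le\; Y_i - Y_{i-1} \;\le\; a_i' - \mu_i,
\]
and the width of this shifted interval is exactly $(a_i' - \mu_i) - (a_i - \mu_i) = a_i' - a_i$.

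Feeding these into Theorem~\ref{thm:Azuma-Hoeffding} with lower and upper increment bounds $a_i - \mu_i$ and $a_i' - \mu_i$, the theorem directly gives $\Pr[Y_n > t], \Pr[Y_n < -t] \le \exp\paren{-t^2 / (2\sum_i (a_i'-a_i)^2)}$. Since $Y_n = \sum_i (Z_i - \mu_i) = Z - \mu$, the events $\{Y_n > t\}$ and $\{Y_n < -t\}$ are exactly $\{Z > \mu + t\}$ and $\{Z < \mu - t\}$, giving the corollary. There is no real conceptual obstacle: the only point that warrants verification is that subtracting the $(X_0,\ldots,X_{i-1})$-measurable quantity $\mu_i$ from $Z_i$ merely translates the support interval and leaves its length $a_i' - a_i$ unchanged, so the constants in the exponent of Theorem~\ref{thm:Azuma-Hoeffding} carry over verbatim.
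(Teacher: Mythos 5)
Your proof is correct and is essentially identical to the paper's: both define the centered martingale $Y_0=0$, $Y_i = Y_{i-1} + Z_i - \mu_i$ with respect to $X_0,\ldots,X_n$, observe that $Y_n - Y_0 = Z - \mu$ and that the increment range still has length $a_i'-a_i$, and then invoke Theorem~\ref{thm:Azuma-Hoeffding}. You simply spell out the measurability and martingale-property checks that the paper leaves implicit (and you correctly read the second tail bound as $\Pr[Z < \mu - t]$, fixing an evident typo in the statement).
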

\begin{proof}
Define the martingale $Y_0,\ldots,Y_n$ w.r.t.~$X_0,\ldots,X_n$ by $Y_0 = 0$
and $Y_i = Y_{i-1} + Z_i - \mu_i$, then apply Theorem~\ref{thm:Azuma-Hoeffding}.
Note $Y_n - Y_0 = Z-\mu$ and the range of $Y_i-Y_{i-1}$ still has size $a_i'-a_i$.
\end{proof}

Note that Corollary~\ref{cor:Azuma-Hoeffding} says that one random variable, $Z$, 
is well concentrated around another random variable, namely $\mu$.

\end{document}